\newtheorem{thm}{Theorem}
\newtheorem{lem}[thm]{Lemma}
\newcommand{\titlesc}[1]{\titlecap{\textsc{#1}}}
\newcommand{\GEMINI}{\titlesc{Gemini}}
\newcommand{\SWKIND}{\texttt{*}$_{S}$}
\newcommand{\HWKIND}{\texttt{*}$_{H}$}
\newcommand{\MDKIND}{\texttt{*}$_{M}$}
\newcommand{\SSCONS}{$_{S}\Rightarrow_{S}$}
\newcommand{\HSCONS}{$_{H}\Rightarrow_{S}$}
\newcommand{\HHCONS}{$_{H}\Rightarrow_{H}$}
\newcommand{\SWREC}{\texttt{\{l$_1$: \SWKIND, ..., l$_n$: \SWKIND\}}}
\newcommand{\HWREC}{\texttt{\#\{l$_1$: \HWKIND, ..., l$_n$: \HWKIND\}}}
\newcommand{\nonterm}[1]{$\langle$\textit{#1}$\rangle$}
\newcommand{\ANDGATE}[2]{
    \begin{tikzpicture}
        \node (v0) at (0, 0.9) {#1};
        \node (vdots) at (0, 0.55) {$\vdots$};
        \node (vn) at (0, 0) {#2};
        \node[and gate US, draw, logic gate inputs=nnnnn, scale=1.1] at ($(vdots) + (1.5, -0.1)$) (xand) {};
        \draw (v0) -| (xand.input 1);
        \draw (vn) -| (xand.input 5);
        \draw (xand.output) -- ($(xand) + (1, 0)$);
        \draw ($(xand.input 2) + (-0.5, 0)$) -- (xand.input 2); \draw ($(xand.input 3) + (-0.5, 0)$) -- (xand.input 3);
        \draw ($(xand.input 4) + (-0.5, 0)$) -- (xand.input 4);
    \end{tikzpicture}
}
\newcommand{\ORGATE}[2]{
    \begin{tikzpicture}
        \node (v0) at (0, 0.9) {#1};
        \node (vdots) at (0, 0.55) {$\vdots$};
        \node (vn) at (0, 0) {#2};
        \node[or gate US, draw, logic gate inputs=nnnnn, scale=1.1] at ($(vdots) + (1.5, -0.1)$) (xand) {};
        \draw (v0) -| (xand.input 1);
        \draw (vn) -| (xand.input 5);
        \draw (xand.output) -- ($(xand) + (1, 0)$);
        \draw ($(xand.input 2) + (-0.5, 0)$) -- (xand.input 2); \draw ($(xand.input 3) + (-0.5, 0)$) -- (xand.input 3);
        \draw ($(xand.input 4) + (-0.5, 0)$) -- (xand.input 4);
    \end{tikzpicture}
}
\newcommand{\XORGATE}[2]{
    \begin{tikzpicture}
        \node (v0) at (0, 0.9) {#1};
        \node (vdots) at (0, 0.55) {$\vdots$};
        \node (vn) at (0, 0) {#2};
        \node[xor gate US, draw, scale=2] at ($(vdots) + (1.5, -0.1)$) (xand) {};
        \draw ($(xand.input 1) + (-0.5, 0)$) -- (xand.input 1);
        \draw ($(xand.input 2) + (-0.5, 0)$) -- (xand.input 2);
        \draw ($(xand.west) + (-0.5, 0)$) -- (xand.west);
        \draw ($(xand.north west) + (-0.2, -0.1)$) -- ($(xand.north west) + (0.05, -0.1)$);
        \draw ($(xand.south west) + (-0.2, 0.1)$) -- ($(xand.south west) + (0.05, 0.1)$);
        \draw (xand.output) -- ($(xand) + (1, 0)$);
    \end{tikzpicture}
}
\newcommand{\NOTGATE}[1]{
    \begin{tikzpicture}
        \node (v) at (0, 0) {#1};
        \node[not gate US, draw, scale=1.1] at ($(v) + (1.5, 0)$) (xnot) {};
        \draw (v) -- (xnot.input);
        \draw (xnot.output) -- ($(xnot) + (1, 0)$);
    \end{tikzpicture}
}
\newcommand{\tyrule}[2]{%
    \begin{center}
    \begin{tabularx}{1.15\textwidth}{C R}
    #1 & (\titlesc{#2}) \\
    \end{tabularx}
    \end{center}
}
\newcolumntype{C}{>{\centering\arraybackslash\hspace{0pt}}X}
\newcolumntype{R}{>{\footnotesize\hfill\arraybackslash\hspace{0pt}\hsize=.55\hsize}X}
\newenvironment{Figure}
  {\par\medskip\noindent\minipage{\linewidth}}
  {\endminipage\par\medskip}
\title{Gemini: A Functional Programming Language for Hardware Description}
\author{
  Aditya Srinivasan\\
  Pratt School of Engineering\\
  Duke University '18\\
  Durham, NC 27708 \\
  \texttt{aditya.srinivasan@alumni.duke.edu} \\
   \And
 Andrew D. Hilton \\
  Pratt School of Engineering\\
  Duke University\\
  Durham, NC 27708 \\
  \texttt{adhilton@ee.duke.edu} \\
}
\begin{document}
\maketitle

\begin{abstract}
This paper presents \GEMINI, a functional programming language for hardware description that provides features such as parametric polymorphism, recursive datatypes, higher-order functions, and type inference for higher expressivity compared to modern hardware description languages. \GEMINI\space demonstrates the theory and implementation of novel type-theoretical concepts through its unique type system consisting of multiple atomic kinds and dependent types, which allows the language to model both software and hardware constructs safely and perform type inference through multi-staged compilation. The primary technical results of this paper include formalizations of the \GEMINI\space grammar, typing rules, and evaluation rules, a proof of safety of \GEMINI's type system, and a prototype implementation of the compiler's semantic analysis phase.
\end{abstract}

\keywords{Dependent types \and Type inference \and Multi-stage compilation \and HDL}

\begin{multicols}{2}

\section{Introduction}
Over the latter half of the 20th century, methods in electronic circuit design evolved significantly to cope with growing complexity and scale due in part to Moore's Law. Traditional methods of manual design at the transistor level were rendered ineffectual due to limitations at scale, and gate-level descriptions (also known as \textit{netlists}), were used as a higher-level abstraction. A netlist description is an enumeration of the electronic components in a circuit and their connectivity to other nodes in the circuit, and is translated to a physical transistor-level implementation by place-and-route algorithms. Eventually, direct specification of netlists became unfeasible due to differences in implementation across different target hardwares. Hardware description languages, which specify designs at a register-transfer level (RTL) abstraction, were therefore created \cite{ciletti2010,barbacci1973}.
\par
As very-large-scale-integration (VLSI) became increasingly popular, the need for HDLs became more important, and the first modern HDL, Verilog, was introduced between 1983 and 1984 \cite{eetimes2005}. At approximately the same time, the Department of Defense began developing a new standard named VHDL \cite{dod2017,barbacci1984}. These and variant descendent languages have enabled electronic circuit designers to specify highly complex structure and behavior at a high level of abstraction, and leverage synthesis tools to translate the RTL specification to an optimized netlist for a specific target hardware, much in the same way a compiler generates optimized low-level code for a specific target machine. There also exist tools to verify and simulate digital logic circuits written in HDLs.
\par
As HDLs continued to evolve, successive iterations have introduced increasingly powerful features, such as datatypes and strong type systems, borrowed from popular software programming languages. However, development of software languages has outpaced that of HDLs, and they have been continuously improving the programmer's ability to concisely express complex programs.
\par
A logical next step is to merge the advanced ideas from software programming languages with existing hardware description languages, to provide another level of abstraction that equips electronic circuit designers to represent increasingly complex designs. The \GEMINI\space language does so by incorporating features such as parametric polymorphism, type inference, higher-order functions, recursive types, and recursion.
\par
The idea of using software programming languages to generate Verilog or some other HDL is not novel. There exist projects that do so based on popular programming languages: Clash and Lava based on Haskell \cite{baaij2010,lava}, Chisel based on Scala \cite{bachrach2012}, and HML based on SML \cite{hml}. While there exist similarities between these languages and \GEMINI, the \GEMINI\space language models a wider range of software constructs. \GEMINI\space also offers certain abstractions such as the hardware record type, which translates to a bit vector when compiled, but allows the programmer to name and access fields in \GEMINI\space as if the vector were a software record.
\par
Most notably, the design and implementation of \GEMINI\space demonstrates novel concepts in type theory by virtue of its unique type system. This includes a kinding system with multiple atomic kinds, and dependent types. Yet, the compiler is still able to perform type inference by virtue of multi-staged compilation.
\par
In Section \ref{sec:langspec}, we provide a full specification of the \GEMINI\space language. We formalize the type system, software and hardware grammars, typing rules, and evaluation rules. In Section \ref{sec:metatheory}, we provide a proof of a desirable property of our type system, safety, which is bolstered by proofs of progress and preservation. In Section \ref{sec:compiler}, we transition from discussing the language design to the compiler implementation. We provide a description of all phases of the \GEMINI\space compiler, and offer an overview of all phases as well as a prototype implementation of the semantic analysis phase. In Section \ref{sec:example}, we provide examples of some \GEMINI\space programs and the compiled Verilog output. Finally, in Section \ref{sec:future} we discuss the scope for future extensions and improvements towards the \GEMINI\space project.

\section{The \GEMINI\space Language}
\label{sec:langspec}

We will first present a specification of the \GEMINI\space language. \GEMINI\space is inspired conceptually by languages in the ML family, and its syntax in particular is derived mainly from SML. We begin with an informal overview of the key components of the language, namely the type system, values, expressions, declarations, and the core library. We will then formalize these through the presentation of grammars, typing rules, and evaluation rules. Such formalizations are necessary in order to support our metatheory proofs in Section \ref{sec:metatheory}.

\subsection{Types and kinds in \GEMINI}
\label{sec:typesandkinds}
In conventional type systems, there are two primitives from which all kinds are produced: a single atomic kind \texttt{*}\footnote{Pronounced \textit{type}}, and the constructor $\Rightarrow$\footnote{Pronounced \textit{to}}. Proper types, such as \texttt{int}, \texttt{real}, and \texttt{string} belong to kind \texttt{*}. Type constructors, such as \texttt{list} and \texttt{ref}, belong to kind \texttt{*} $\Rightarrow$ \texttt{*}. The latter are not types in their own right, but instead act as functions that accept a type as an argument in order to produces a type, such as an \texttt{int list} or a \texttt{string list ref} \cite{tapl}.
\par
In comparison, \GEMINI\space possesses a type system consisting of three atomic kinds:
\begin{enumerate}
    \item \SWKIND: software type
    \item \HWKIND: hardware type
    \item \MDKIND: module type
\end{enumerate}
The separation of kinds is an important and unique aspect of \GEMINI. The motivation is to enforce a trichotomy between software, hardware, and module types in the type system. Figure \ref{fig:typesys} illustrates the three atomic kinds and their constituent types.
\begin{Figure}
    \textbf{Types in \SWKIND}\vspace{5pt}
    \begin{tcolorbox}
        \texttt{int}\\
        \texttt{real}\\
        \texttt{string}\\
        \SWKIND\space \texttt{list}\\
        \SWREC\\
        \texttt{\SWKIND\space ref}\\
        \texttt{\HWKIND\space sw}\\
        \texttt{C$_i$ \SWKIND}\\
        \texttt{\SWKIND\space $\rightarrow$ \SWKIND}
    \end{tcolorbox}
    \vspace{10pt}
    \textbf{Types in \HWKIND}\vspace{5pt}
    \begin{tcolorbox}
        \texttt{bit}\\
        \texttt{\HWKIND\space [n]}\\
        \texttt{\HWKIND\space @ n}\\
        \HWREC
    \end{tcolorbox}
    \vspace{10pt}
    \textbf{Types in \MDKIND}\vspace{5pt}
    \begin{tcolorbox}
        \texttt{\HWKIND\space $\leadsto$ \HWKIND}
    \end{tcolorbox}
    \captionof{figure}{\GEMINI\space kinds}
    \label{fig:typesys}
\end{Figure}
\par
Further, the constructor $\Rightarrow$ is separated into three distinct variations:
\begin{enumerate}
    \item \SSCONS: sofware-to-software
    \item \HSCONS: hardware-to-software
    \item \HHCONS: hardware-to-hardware
\end{enumerate}
\par
The subscript to the left of the constructor denotes the kind that is accepted as an argument, and the subscript to the right of the constructor denotes the kind that is produced. For example, the type constructors \texttt{list} and \texttt{ref} are constructors of kind $_{S}\Rightarrow_{S}$ as they construct a software type from another software type. The constructor \texttt{sw} on the other hand is a constructor of kind \HSCONS\space as it constructs a software type from a provided hardware type.
\par
There are several interesting aspects of the design of this type system that warrant further elaboration.
\par
\textbf{\textit{Type descriptions.}} First, we will explicitly describe each of the types listed above. In \SWKIND, the primitive types are \texttt{int}, \texttt{real}, and \texttt{string} which represent integers, real numbers, and sequences of characters, respectively. The type constructor \texttt{list} of kind \SSCONS\space accepts a software type and produces a software type representing an ordered collection of elements of that type. The software record is a type constructor of kind \SSCONS\space that accepts an arbitrary number of software type parameters and produces a record with named fields that can be used to access each value. It is worth noting that tuples and records are syntactically distinct, as will be seen in the software grammar in Section \ref{sec:swgrammar}, but are represented identically in the type system; beyond the lexical analysis phase, a tuple is treated as a record with numerical indexes for field labels. The type constructor \texttt{ref} of kind \SSCONS\space accepts a software type and produces a reference container whose inner value can be mutated as in typical imperative programming languages. The type constructor \texttt{sw} is the only type constructor of kind \HSCONS, as it accepts a hardware type and produces a software type. The resulting value acts as a wrapper around a hardware value, which enables the programmer to use it in software-typed constructs such as lists or functions. The wrapper is opaque, in that the underlying hardware value can never be directly accessed or read in software constructs since it is not known what exact value the hardware will take. However, it may be unwrapped to expose the hardware value which may be used in hardware-typed constructs. The type constructor \texttt{C$_i$} of kind \SSCONS\space is a variant in some discriminated union type \texttt{C}. Lastly, the type constructor $\rightarrow$, pronounced "function", of kind \SSCONS\space accepts a software type as the argument and produces another software type. The recursive definition in terms of \SWKIND\space allows for the existence of higher-order functions.
\par
In \HWKIND, the only primitive type is \texttt{bit}, which represents a single bit value. There are three type constructors of kind \HHCONS\space that form the remaining types in \HWKIND. The first is the array type constructor, which accepts a hardware type and produces another hardware type representing a fixed-length ordered array of elements of that type. It is important to note that the array type is only partially defined by the element type; the type is completely defined by the combination of the element type \textit{and} the array size, denoted as \texttt{n} in Figure \ref{fig:typesys}. This point will be elaborated upon further later in this section. The second is the temporal type constructor, which accepts a hardware type and produces another hardware type representing a time-delayed value. As with the array type, the temporal type is completely defined by the combination of the input type \textit{and} the delay amount, denoted as \texttt{n} in \ref{fig:typesys}. Lastly, the hardware record is a type constructor that accepts an arbitrary number of hardware type parameters and produces a record with named fields that can be used to access each value. The hardware-typed record is similar to the software-typed record, with the sole distinction being the kind to which the inner values must belong. Further, the tuple-record duality holds for hardware-typed records as well.
\par
The kind \MDKIND is simple as it contains a single type constructor $\leadsto$, pronounced "module", of kind \HHCONS that accepts a hardware type as the argument and produces another hardware type. The separation of the module type constructor into a separate kind is intentional, and the motivation will be elaborated upon further later in this section.
\par
\textbf{\textit{Asymmetric constructors.}} It is important to note that among the four possible permutations of constructors, only three are realized; the last, $_{S}\Rightarrow_{H}$, does not exist. This is because it is not possible to convert all software-typed values into hardware, so such a constructor would have to be restricted to operating on certain kinds of software types as arguments. A simpler alternative approach is to define functions that convert certain software types into hardware types, or more accurately into software-wrapped hardware values. An example of such a utility function could convert an integer to its 32-bit signed representation as a bit array.
\par
\textbf{\textit{Restrained higher-order types.}} As alluded to earlier, the separation of the module type constructor $\leadsto$ into a third kind is intentional and necessary in order to enforce the expected semantics of hardware. If the module type constructor were to have been defined in \HWKIND, then this would give rise to higher-order modules in which a module could be the input to another module. These semantics do not have any meaning in hardware and so the division of kinds was made to limit the expression of modules to the first-order; in fact, the module type cannot be passed as an argument to any other type constructor.
\par
\textbf{\textit{Dependent types.}} As mentioned previously, an instance of the array and temporal types is defined by the combination of a hardware type \textit{and} an integer value. The hardware type is a type argument that parameterizes these types, and the integer value can similarly be considered as a \textit{value argument}. As one would expect, the type of an array of 8 bits (\texttt{bit[8]}) is distinct from the type of an array of 8 two-entry bit-tuples (\texttt{(bit * bit)[8]}) since recursive comparison of the types reveals that the element types differ. Following the same principle, the type of an array of 8 bits (\texttt{bit[8]}) is distinct from the type of an array of 16 bits (\texttt{bit[16]}) since the array types are dependently typed by the length; recursive comparison of the types reveals that the lengths differ.
\par
An important detail to explain is how semantic analysis can be performed on a type system with dependent types. In order to carry out type-checking, it is necessary to know the types of all values and expressions. Usually, this is not a problem since types require no evaluation, however dependent types must be known in order to determine the complete instantiation of a type constructor. Consider the following snippet of \GEMINI.
\begin{tcolorbox}
\begin{Verbatim}
let
    val size = (* ... *)
    module my_mod (a: bit[8]) = a[:7:]
    val arg = #[size; gen i => 'b:0]
    val h = my_mod arg
in
    (* ... *)
end
\end{Verbatim}
\end{tcolorbox}
We will not dwell on the details of syntax here; however, as an overview, this snippet makes a few declarations. First, on line 2, \texttt{size} is assigned a value by some expression. On line 3, a module \texttt{my_mod} is declared that accepts an 8-bit array and returns the last bit. One line 4, \texttt{arg} is declared using an array generator with \texttt{size} as its size and the zero-bit as the initializing value of elements. Lastly, \texttt{h} is declared as the result of passing\texttt{arg} to the module \texttt{my_mod}. In determining whether this program is well-typed, it is necessary to know the value of \texttt{size}; the program is well-typed if and only if \texttt{size} is equal to \texttt{8}.
\par
The \GEMINI\space compiler makes this kind of analysis possible through multi-staged compilation, an approach taken by compilers of some other languages such as MetaML\cite{metaml}. Typically, the compiler first performs semantic analysis and then, if the program is well-typed, evaluates values. In \GEMINI, the pair of phases is performed twice: first for software-typed values and expressions, and then for hardware-typed values. After the first pass, assuming the program has well-typed software values and expressions, the resulting program consists solely of hardware values whose types are fully known, since all software values have already been evaluated. This allows for the subsequent semantic analysis on hardware values, and ensures that the module produced by a \GEMINI\space program is well-typed in terms of the hardware as well.
\par
Another interesting result of having dependent types is that functions can be written to be parametrically polymorphic on values instead of types. For example, consider the following snippet of \GEMINI.
\begin{tcolorbox}
\begin{Verbatim}
fun negate b = sw !(unsw b)
module map_module a = HW.map negate a
\end{Verbatim}
\end{tcolorbox}
\par
In Section \ref{sec:expr} we will elaborate more on the function of \texttt{sw} and \texttt{unsw} as well as explain how \texttt{HW.map} is actually written, though it is sufficient to know that \texttt{negate} is a function that unwraps a bit, negates it, and rewraps it, and that \texttt{map_module} is a module that takes a bit array \texttt{a} and applies the negating function to each bit. Here, \texttt{map_module} is parametrically polymorphic in the size of the array; its type signature would be \texttt{bit[n] $\leadsto$ bit[n]}. Thus, a bit array of any size may be supplied as an argument. It is important to note that while a \GEMINI\space program must return a module, it cannot be parametrically polymorphic since explicit sizes must be known in order to produce the appropriate Verilog. Thus, the only way \texttt{map_module} can be used is by applying it at some point in the program.
\par
\textbf{\textit{Hardware abstractions.}} The hardware record is an example of how \GEMINI\space makes it easier for the programmer to work with hardware values by introducing a higher level of abstraction. In Verilog, records do not exist since all data is expressed in terms of bits and arrays. As such, at the final phase of compilation hardware records are encoded as arrays by concatenating the values of fields. Similarly, when accessing a field in a record, the correct indices are computed based on the sizes of preceding fields in order to retrieve the appropriate value. As a result, the output Verilog is correctly expressed in terms of native types, and the \GEMINI\space compiler handles the translation to and from this higher level of abstraction.

\subsection{\GEMINI\space expressions}
\label{sec:expr}
Having introduced the \GEMINI\space type system, we now turn our attention to the various expressions one may use in a program. Expressions are different from values, which are atomic units of data that cannot be evaluated any further, in that an expression can be reduced by rules of evaluation to other subexpressions or terminal values. In Section \ref{sec:swgrammar} we will formalize the expression grammar, and in Section \ref{sec:evalrules} we will formalize the rules of evaluation; now we limit ourselves to an overview of the expressions available in the language.

\par
\textbf{\textit{Function application.}} As described in Section \ref{sec:typesandkinds}, a function accepts a software-typed argument and produces a software-typed result. Function application is the invocation of a given function with a given argument in order to produce a result. For parametrically polymorphic functions, the type of the result may depend on the type of the argument. For example, consider the following \GEMINI\space function that constructs a singleton list from an element.
\begin{tcolorbox}
\begin{Verbatim}
fun singleton x = [x]
\end{Verbatim}
\end{tcolorbox}
\par
The type signature of this function is \texttt{`a -> `a list}, where \texttt{`a} is a \textit{type variable} representing any type. As a result, the type of the result will depend on what \texttt{singleton} is supplied as an argument.
\begin{tcolorbox}
\begin{Verbatim}
singleton 1   (* [1]  : int list    *)
singleton "a" (* ["a"]: string list *)
\end{Verbatim}
\end{tcolorbox}
\par
Since \GEMINI\space also supports higher-order functions, partial application (or \textit{currying}) will result in another function.
\begin{tcolorbox}
\begin{Verbatim}
fun add a b = a + b
val add3 = add 3
\end{Verbatim}
\end{tcolorbox}
\par
In the snippet above, \texttt{add} has type signature \texttt{int $\rightarrow$ int $\rightarrow$ int}, whereas \texttt{add3} has type signature \texttt{int $\rightarrow$ int} and will return the result of adding \texttt{3} to the integer to which it is applied. This also demonstrates the first-class status of functions as values, allowing them to be passed as arguments to other functions. A common example of this is the \texttt{map} function which has the type signature \texttt{(`a -> `b) -> `a list -> `b list}.

\par
\textbf{\textit{Operators.}} Operators are analogous to functions, except their application is infix and the arguments are the operands. In \GEMINI, operator overloading is very rare as is the case with many strongly typed languages. As an example, the addition operator for integers is different from that for reals\footnote{For integers the operator is \texttt{+} whereas for reals it is \texttt{+.}}. This has two primary benefits. First, it enables the exact determination of types when performing inference without the need for explicit type decoration. Second, it eliminates implicit type-casting which reduces the potential for certain classes of program bugs and improves performance by avoiding redundant conversions which can be expensive. For an exhaustive list of the operators and their semantic results, consult \ref{sec:hwops} through \ref{sec:lstops} of the Appendix.

\par
\textbf{\textit{Accesses.}} There are several types that contain inner values or expressions that may need to be accessed. In particular, the types are software records (and by extension software tuples), hardware records (and by extension hardware tuples), arrays, and references. Records contain values named by labels, or fields, and inner values can be accessed with the syntax \texttt{\#f r}, where \texttt{f} is the field name and \texttt{r} is the record to access. As previously mentioned, a tuple is a special case of a record with consecutively numbered fields beginning at \texttt{1}, and so tuple accesses are similarly made with the syntax \texttt{\#n t}, where \texttt{n} is an integer literal corresponding to the index and \texttt{t} is the tuple to access. Elements in an array can be accessed with the syntax \texttt{a[:i:]} where \texttt{a} is the array and \texttt{i} is the index, which may be an expression or a value. Since the size of the array and the index are both evaluated fully at the software evaluation stage, it is possible to detect out-of-range accesses at compile-time. Lastly, references can be accessed, or \textit{dereferenced}, with the syntax \texttt{\$r} where \texttt{r} is the reference. The result is the current value inside the reference container.

\par
\textbf{\textit{Conditionals.}} In \GEMINI, there are two forms of conditional expressions to enable control flow: if-then-else and if-then. The former is written with the syntax \texttt{if e1 then e2 else e3}, and the result is \texttt{e2} if the guard \texttt{e1}, which must have an integer type, is nonzero, else it is \texttt{e3}. The latter is a special case of the former, where the \texttt{else} clause implicitly returns the empty tuple, also known as \textit{unit}. In both forms, the types of the expressions in the \texttt{then} and \texttt{else} clauses must match. Since the latter form always has the unit type in the \texttt{else} clause, the expression of the \texttt{then} clause must always have the unit type as well. The return type of conditional expressions must also always be in \SWKIND.

\par
\textbf{\textit{Assignments.}} While dereferencing is used to access the inner value of a reference, assignments are used to mutate the inner value itself. The syntax is \texttt{r := e}, where \texttt{r} is the reference and \texttt{e} is the expression to assign to the inner value. The type of an assignment statement is unit, as is the convention for side-effecting expressions.

\par
\textbf{\textit{Sequences.}} Sequence expressions allow multiple expressions to be evaluated for side-effect, with the final one serving as the return value. The syntax for a sequence is a semi-colon separated list of expressions of arbitrary length surrounded with parentheses.

\par
\textbf{\textit{Pattern-match.}} The pattern-match is ubiquitous in strongly typed languages and serves as another, more general, form of control flow. A pattern-match expression consists of a test expression \texttt{e} and an ordered set of match-result pairs \texttt{(m$_1$, r$_1$), ..., (m$_n$, r$_n$)}. The value of \texttt{e} is compared to each match and for the first \texttt{m$_i$} that matches, the corresponding \texttt{r$_i$} is returned. Pattern-matching can be more expressive than simple conditionals in certain cases as they can be used to inspect the structure of the test expression, such as whether it is a particular variant within a discriminated union type, or how many elements it possesses if it is a list. The match cases must also be exhaustive, or else the compiler will issue a warning; this protects the programmer from certain classes of run-time errors by necessitating the explicit handling of each case.

\par
\textbf{\textit{Let-bindings.}} Let-bindings are used to bind identifiers with values or types and to then evaluate an expression in the augmented lexical scope. If an identifier bound outside of a let-binding is bound again inside, the previous binding is overridden within its scope and the identifier is bound to the most recent value. In general, an identifier is bound to the value from its most recent declaration.

\par
\textbf{\textit{Software-wrapper.}} As mentioned in Section \ref{sec:typesandkinds}, the type constructor \texttt{sw} is the sole member of \HSCONS. The syntax to wrap a hardware value is \texttt{sw h}, where \texttt{h} is a hardware value. The resulting value can be used in software constructs such as functions or lists and eventually unwrapped with the function \texttt{unsw} which has type signature \texttt{`a sw $\rightarrow$ `a}. The code snippet below demonstrates a particularly useful method of using \texttt{sw} and \texttt{unsw}.
\begin{tcolorbox}
\begin{Verbatim}
unsw Array.fromList(
    List.map hwMapFn (
        Array.toList(sw array)
    )
)
\end{Verbatim}
\end{tcolorbox}
\par
The \texttt{HW.map} function introduced in Section \ref{sec:typesandkinds} is actually constructed as shown above in expanded form. We assume here that \texttt{hwMapFn} is a function with type signature \texttt{`a sw $\rightarrow$ `b sw} and \texttt{array} is an array with type \texttt{`a[n]} (note that the element type must match the argument type of \texttt{hwMapFn}, and that the array size is parametrically polymorphic). The use of \texttt{sw} allows us to wrap and pass the array to other functions, which can unwrap it as well as its elements in order to map to new values. At no point is the programmer able to read or use the specific hardware values in any way, though they are able to apply logical operations which are later translated into Verilog.
\par
\textbf{\textit{Array generation.}} A special expression called \textit{array generation} allows for the initialization of hardware arrays. The expression consists of two clauses: the array size and an inline generator function. The array size is specified as an integer-typed expression or value; recall that due to the multi-staged compilation of \GEMINI\space programs, expressions are permitted since they will be evaluated fully prior to hardware type-checking, and so the value argument to the array type constructor will be known. The inline generator function is used to specify the value of each element within the array. Within the body of the generator function, an index variable can be used to determine the element value. For example, the following \GEMINI\space snippet initializes a bit array of size \texttt{8} of alternating \texttt{0}s and \texttt{1}s.
\begin{tcolorbox}
\begin{Verbatim}
#[8; gen i =>
    if (i % 2) = 0
    then 'b:0
    else 'b:1
]
\end{Verbatim}
\end{tcolorbox}

\subsection{\GEMINI\space declarations}
As described previously, let-bindings allow for the binding of values or types to identifiers that can be referenced in other expressions within the lexical scope. There are five types of declarations that can be made. The specific syntax for each will be formalized in Section \ref{sec:swgrammar}, though we will provide an overview first.

\par
\textbf{\textit{Values.}} Declarations of values bind an identifier to an expression or value. Since the \GEMINI\space compiler performs type inference, value declarations can either be made implicitly, wherein the value type is omitted, or explicitly. The latter may be useful in the case that type inference would infer a parametrically polymorphic type whereas the programmer wants to restrict the type.

\par
\textbf{\textit{Functions.}} Declarations of functions bind an identifier to a function by specifying one or more arguments and the expression to evaluate when the function is invoked. The argument identifiers are lexically scoped in the body of the function; further, if an argument uses the same name as an identifier bound elsewhere, it will take precedence. The function's return value and any arguments may also be optionally typed explicitly.

\par
\textbf{\textit{Types.}} Declarations of types bind an identifier to a type. Parametrically polymorphic type constructors may also be declared by preceding the identifier with a type variable that is referenced by the type.

\par
\textbf{\textit{Datatypes.}} Declarations of datatypes bind an identifier to a discriminated union type, as well as specifying each of the variants and their type arguments. Datatype declarations may be parametrically polymorphic as is the case with type declarations. Further, software datatype declarations are syntactically distinct from hardware datatype declarations in order to enforce that the type arguments of all variants belong to the appropriate kind.

\par
\textbf{\textit{Modules.}} Declarations of modules bind an identifier to a module by specifying a single argument and the body of the module. The module's return value and argument may also be optionally typed explicitly. Modules can also be dependently typed in the declaration by specifying an integer-typed identifier that can be referenced throughout the module body. When the module is invoked with a given integer value, it is substituted for the identifier throughout the module. This is the mechanism by which polymorphic modules can be created; however, as noted in our discussion of dependent types in Section \ref{sec:typesandkinds}, a polymorphic module cannot be the result of a \GEMINI\space program.

\subsection{\GEMINI\space core library}
\label{sec:library}
\par
The \GEMINI\space core library provides useful built-in functions and modules for operating on software- and hardware-typed terms, respectively. A complete list of these can be found in \ref{sec:lib} of the Appendix.

\subsection{Derived terms}
As discussed in some previous sections, certain \GEMINI\space expressions are merely special cases of other more general expressions. For example, the expression
\begin{tcolorbox}
\begin{Verbatim}
if e1 then e2
\end{Verbatim}
\end{tcolorbox}
\noindent
is a special case of the more general if-then-else term and can be rewritten as the following equivalent expression:
\begin{tcolorbox}
\begin{Verbatim}
if e1 then e2 else ()
\end{Verbatim}
\end{tcolorbox}
\par
In fact, since the tuple is a special case of a record with consecutively numbered fields beginning at \texttt{1}, the expression above is equivalently expressed as the following:
\begin{tcolorbox}
\begin{Verbatim}
if e1 then e2 else {}
\end{Verbatim}
\end{tcolorbox}
\par
These special cases are called \textit{derived terms} as they are derived from more general expressions. Beyond the lexical analysis phase of compilation, they are treated identically to the expressions from which they are derived, which simplifies the implementation of later phases. The complete list of derived terms can be found in \ref{sec:derivedterms} of the Appendix.

\subsection{Software grammar}
\label{sec:swgrammar}
Having provided an informal overview of \GEMINI, we will begin to formalize the language beginning with the grammars. In this and Section \ref{sec:hwgrammar}, we will provide context-free grammars (CFG) for software and hardware terms as a set of production rules, each of which specifies how some non-terminal can be substituted by a sequence of non-terminals and terminals.

\subsubsection{Software value grammar}
The first grammar represents software values; these are non-reducible (or \textit{terminal}) software-typed terms. The grammar here does not represent the syntax, but instead represents each value abstractly in a mathematical sense. Figure \ref{fig:swvgrammar} depicts an excerpt of the software value grammar in Backus-Naur form (BNF).
\begin{Figure}
    \begin{tcolorbox}
    \begin{grammar}
        <swv> ::= <integer>
        \alt <real>
        \alt <string>
        \alt <list>
        \alt <software record>
        \alt <sw>
        \alt <ref>
        \alt <variant>
        \alt <function>
        
        <integer> ::= $i \in \mathbb{Z} \cap [-2^{31}, 2^{31} - 1]$
        
        <function> ::= $\lambda x : T_S.e$
    \end{grammar}
    \end{tcolorbox}
    \captionof{figure}{\GEMINI\space software value grammar in BNF (excerpt)}
    \label{fig:swvgrammar}
\end{Figure}
\par
In the excerpt shown, there are three non-terminals. The first represents a \textit{software value}, which can be substituted for any of the non-terminals on the right. Two of these are included in the excerpt. The first represents a 32-bit integer value, which is some $i$ in the set of integer numbers $\mathbb{Z}$ between $-2^{31}$ and $2^{31}$, inclusive. The second represents a function expressed as an \textit{abstraction} in the notation of the lambda calculus \cite{tapl}. In this notation, $x$ represents the argument which has type $T_S$ which denotes some type of kind \SWKIND\space and some body expression $e$. The complete software value grammar can be found in \ref{sec:swvgrammar} of the Appendix.

\subsubsection{Software term grammar}
We now shift our attention to the grammar for software terms, which includes both expressions and values. Unlike the previous grammar, this one reflects the syntax to be used in writing \GEMINI\space programs. Figure \ref{fig:swtgrammar} depicts an excerpt of the software term grammar in BNF.
\begin{Figure}
    \begin{tcolorbox}
    \begin{grammar}
        <exp> ::= <literal>
        \alt <access>
        \alt <let binding>
        \alt <conditional>
        \alt <operation>
        \alt <assignment>
        \alt <pattern match>
        \alt <sequence>
        \alt <application>
        
        <literal> ::= <identifier>
        \alt <integer literal>
        \alt <real literal>
        \alt <string literal>
        \alt <list literal>
        \alt <software record literal>
        \alt <ref literal>
        \alt <sw literal>
        
        <list literal> ::= "[" <list-body> "]"
        \alt "nil"
        
        <list-body> ::= <exp> <exp comma tail>
        \alt $\epsilon$
        
        <exp comma tail> ::= "," <exp> <exp comma tail>
        \alt $\epsilon$
        
        <ref literal> ::= "ref" <exp>
    
        <sw literal> ::= "sw" <hwv>
    \end{grammar}
    \end{tcolorbox}
    \captionof{figure}{\GEMINI\space software term grammar in BNF (excerpt)}
    \label{fig:swtgrammar}
\end{Figure}
\par
The non-terminal \nonterm{exp} represents all possible expressions of software terms. The first such choice is the literal declaration of a value, denoted by the non-terminal \nonterm{literal}. Each choice is another non-terminal corresponding to some software type; the exception here is the first choice, \nonterm{identifier}, which refers to an expression or value by the name to which it is bound.
\par
Figure \ref{fig:swtgrammar} shows the rules for three literals. The first is the literal declaration of a list, which is either a comma-separated list of software terms (that may be empty) surrounded by square brackets, or the terminal \texttt{nil} which represents the empty list. Here our definition of the non-terminal \nonterm{list-body} is recursive in reference to \nonterm{exp}. Further, the definition of non-terminal \nonterm{exp comma tail} is recursive in reference to itself. The second literal declaration is that of the reference, which is the characters \texttt{ref} followed by another \nonterm{exp}. The third literal declaration is that of the software wrapper, which is the characters \texttt{sw} followed by a hardware value represented by the non-terminal \nonterm{hwv}. We will see the full hardware value grammar containing this definition in Section \ref{sec:hwgrammar}, though here we note that the definitions of the software term grammar and hardware value grammar must coexist such that references from one to the other can be made; the separation as presented in this paper is merely to compartmentalize the syntax based on the kind to which term types belong. The complete software term grammar can be found in \ref{sec:swtgrammar} of the Appendix.

\subsection{Hardware grammar}
\label{sec:hwgrammar}
Another reason for the dichotomy of software and hardware in the design of \GEMINI\space is that while there are both software-typed expressions, which are reducible, and software-typed values, which are not, there only exist hardware-typed values. This is because the semantics of hardware-typed \textit{expressions} are ill-defined, since no hardware circuit can be reduced any further than the structure it takes\footnote{The exception is hardware subcircuits defined in terms of literal bit values, which can be evaluated at compile-time. This is considered an optimization by the compiler.}.

\subsubsection{Hardware value grammar}
We first present the hardware value grammar, which represents hardware values abstractly instead of syntactically. The full hardware value grammar is shown in Figure \ref{fig:hwvgrammar}.
\par
We note a few interesting aspects of this value grammar. First, the sole terminal values are \texttt{0} and \texttt{1} representing the binary values that a bit may take. This supports the idea that the atomic units of hardware circuits are bits, and are built using a variety of constructors which are represented by the remaining alternations of the grammar.
\par
The next four alternations are logic gates: \textit{AND}, \textit{OR}, \textit{XOR}, and \textit{NOT}. We note that compound logic gates such as \textit{NAND} and \textit{NOR} and not represented as first-class entities in this grammar and are instead constructed by sequencing the appropriate gates\footnote{In the case of \textit{NAND}, this is an \textit{AND} gate followed by a \textit{NOT} gate}. The inputs to these logic gates are recursively defined in terms of the non-terminal \nonterm{hwv} and may thus be themselves the outputs of other logic gates.
\par
The penultimate alternation is the array, which is represented as a collection of zero or more hardware values\footnote{The element values must all have the same type; this is not represented in the grammar but is enforced in the typing rules.}.
\par
We also note the deliberate omission of hardware records from our grammar definition. This is because hardware records are translated into bit arrays at compilation and therefore never exist as hardware values in a true sense; they are merely a convenient abstraction for programmers.
\par
The final alternation is the delayed hardware value, denoted by the application of the delay function $\delta$. This value will be temporally typed and the specific value parameterizing the type will depend on the number of clock cycles the value is delayed.

\begin{Figure}
    \begin{tcolorbox}
    \begin{grammar}
        <hwv> ::= "0"
        \alt "1"
        \alt \ANDGATE{\nonterm{hwv}$_1$}{\nonterm{hwv}$_n$}
        \alt \ORGATE{\nonterm{hwv}$_1$}{\nonterm{hwv}$_n$}
        \alt \XORGATE{\nonterm{hwv}$_1$}{\nonterm{hwv}$_n$}
        \alt \NOTGATE{\nonterm{hwv}}
        \alt "#[" <hwv>$_i^{i \in 0..n-1}$ "]"
        \alt $\delta$"(" <hwv> ")"
    \end{grammar}
    \end{tcolorbox}
    \captionof{figure}{\GEMINI\space hardware value grammar in BNF}
    \label{fig:hwvgrammar}
\end{Figure}

\subsubsection{Hardware syntax grammar}
We now shift our attention to the grammar representing the syntax for declaring hardware values in \GEMINI\space programs. Figure \ref{fig:hwsgrammar} presents an excerpt of the hardware syntax grammar in BNF.
\begin{Figure}
    \begin{tcolorbox}
    \begin{grammar}
        <exp> ::= <literal>
        \alt <access>
        \alt <let binding>
        \alt <operation>
        \alt <parameterization>
    
        <literal> ::= <bit literal>
        \alt <array literal>
        \alt <hardware record literal>
        
        <bit literal> ::= "'b:" <binary-digit>
    \end{grammar}
    \end{tcolorbox}
    \captionof{figure}{\GEMINI\space hardware syntax grammar in BNF (excerpt)}
    \label{fig:hwsgrammar}
\end{Figure}
\par
As mentioned previously, the grammar here is intended to augment the grammar of software terms shown in Figure \ref{fig:swtgrammar}. Certain non-terminals, such as \nonterm{exp}, are repeated and any new rules appearing here should be appended to those from earlier. The complete hardware syntax grammar can be found in \ref{sec:hwsgrammar} of the Appendix.
\par
We also reiterate that the grammar itself is not responsible for enforcing typing rules. This allows us to augment our existing software grammar with the hardware productions. The typing rules will be formalized in the next section.

\subsection{Typing rules}
\label{sec:tyrules}
It is now possible to verify whether a given program is grammatically valid \GEMINI\space with these grammars, and they additionally allow us to may construct an abstract syntax tree from said program. However, given \GEMINI\space is strongly and statically typed, grammatical validity alone is not sufficient. We must also verify at compile-time that the program is well-typed. We defer discussing the implementation details until Section \ref{sec:semanalysis}. At this point, we will define the formal tools for encoding what it means for a program to be well-typed, which are the basis not only for implementation but also our proofs of metatheory in Section \ref{sec:metatheory}.

\par
A \textit{typing rule} is a theorem, consisting of a set of propositions or hypothesis, and a conclusion. They are illustrated diagrammatically as a horizontal line above which is written the antecedent clause is written and below which is written the consequent clause. An excerpt of the typing rules is shown in Figure \ref{fig:tyrules}.

\par
The antecedent clause is true iff each of the propositions is true. Further, if the antecedent clause is true, then the consequent clause is true. As an example, the rule \titlesc{T-IfElse} is pronounced "if \texttt{t$_1$} has type \texttt{int} and \texttt{t$_2$} has type \texttt{T} and \texttt{t$_3$} has type \texttt{T}, then \texttt{if t$_1$ then t$_2$ else t$_3$} has type \texttt{T}".

\par
The typing rules are shown in full in \ref{sec:fulltyrules} of the Appendix. While these rules enable us to verify that expressions are well-typed, they rely on knowing the types of subexpressions. Since \GEMINI\space allows implicit typing of expressions and identifiers, the compiler must perform type inference in order to determine the types of these subexpressions. The typing rules in their current form will not enable us to do so; for this, we must rely on the inversion of the typing relation which we will see in Section \ref{sec:metatheory}.

\begin{Figure}
    \begin{tcolorbox}
    \tyrule{%
        \AxiomC{$\mathtt{x : T \in \Gamma}$}%
        \UnaryInfC{$\mathtt{\Gamma \vdash x : T}$}
        \DisplayProof
    }{T-Var}
    
    \tyrule{%
        \AxiomC{$\mathtt{\Gamma, x : T_1 \vdash t_2 : T_2}$}%
        \UnaryInfC{$\mathtt{\Gamma \vdash \lambda x : T_1 . t_2 : T_1 \rightarrow T_2}$}
        \DisplayProof
    }{T-Abs}
    
    \tyrule{%
        \AxiomC{$\mathtt{\Gamma \vdash t_1 : T_1 \rightarrow T_2}$}%
        \AxiomC{$\mathtt{\Gamma \vdash t_2 : T_1}$}
        \BinaryInfC{$\mathtt{\Gamma \vdash t_1 t_2 : T_2}$}
        \DisplayProof
    }{T-App}
    
    \tyrule{%
        \AxiomC{$\mathtt{t_1 : int}$}%
        \AxiomC{$\mathtt{t_2 : int}$}
        \BinaryInfC{$\mathtt{t_1}$ \texttt{+} $\mathtt{t_2 : int}$}
        \DisplayProof
    }{T-Int-Add}
    
    \tyrule{%
        \AxiomC{$\mathtt{t_1 : T_H}$}
        \AxiomC{$\mathtt{t_2 : T_H}$}
        \BinaryInfC{$\mathtt{t_1}$ \texttt{\&} $\mathtt{t_2 : T_H}$}
        \DisplayProof
    }{T-And}
    
    \tyrule{%
        \AxiomC{$\mathtt{t_1 : int}$}%
        \AxiomC{$\mathtt{t_2 : T}$}
        \AxiomC{$\mathtt{t_1 : T}$}
        \TrinaryInfC{\texttt{if} $\mathtt{t_1}$ \texttt{then} $\mathtt{t_2}$ \texttt{else} $\mathtt{t_3 : T}$}
        \DisplayProof
    }{T-IfElse}
    \end{tcolorbox}
    \captionof{figure}{\GEMINI\space typing rules (excerpt)}
    \label{fig:tyrules}
\end{Figure}

\subsection{Evaluation rules}
\label{sec:evalrules}
Now with a rigorous formulation of the syntax and typing rules of our language from Sections \ref{sec:swgrammar}, \ref{sec:hwgrammar}, and \ref{sec:tyrules}, we must define precisely how expressions are evaluated. In defining the semantics of our language, we must choose between three approaches: \textit{operational semantics}, \textit{denotational semantics}, and \textit{axiomatic semantics} \cite{tapl}. In this paper, we elect to define the semantics in terms of operational semantics for its simplicity and flexibility.

\par
Operational semantics define an abstract state machine, where each state is an expression. The machine’s behavior is defined by a transition function that either yields the next state by performing a step of computation, or declares that the machine has halted by reaching some terminal value. Operational semantics can be further partitioned into small-step and big-step semantics. Small-step semantics, or \textit{structural operational semantics}, consider how evaluation takes place one step at a time. Big-step semantics, or \textit{natural semantics}, instead describe the final value to which some expression evaluates \cite{tapl}.

\par
In general, big-step semantics are less verbose since intermediate expression states need not be encoded in the machine behavior. However, small-step semantics are more precise and readily translatable for implementation, and since our goal is to develop a compiler we elect to define our evaluation rules in terms of small-step semantics.

\begin{Figure}
    \begin{tcolorbox}
        \tyrule{%
            \AxiomC{$\mathtt{t_1} \longrightarrow \mathtt{t_1'}$}
            \UnaryInfC{$\mathtt{t_1 t_2} \longrightarrow \mathtt{t_1' t_2}$}
            \DisplayProof
        }{E-App1}
        
        \tyrule{%
            \AxiomC{$\mathtt{t_2} \longrightarrow \mathtt{t_2'}$}
            \UnaryInfC{$\mathtt{v_1 t_2} \longrightarrow \mathtt{v_1 t_2'}$}
            \DisplayProof
        }{E-App2}
        
        \tyrule{%
            \AxiomC{$\mathtt{(\lambda x.t_1)v_1} \longrightarrow \mathtt{[x \mapsto v_1]t_1}$}
            \DisplayProof
        }{E-AppAbs}
        
        \tyrule{%
            \AxiomC{$\mathtt{t_1 \longrightarrow t_1'}$}
            \UnaryInfC{\texttt{if} $\mathtt{t_1}$ \texttt{then} $\mathtt{t_2}$ \texttt{else} $\mathtt{t_3}$}
            \noLine
            \UnaryInfC{$\longrightarrow$ \texttt{if} $\mathtt{t_1'}$ \texttt{then} $\mathtt{t_2}$ \texttt{else} $\mathtt{t_3}$}
            \DisplayProof
        }{E-IfElse}
        
        \tyrule{%
            \AxiomC{$\mathtt{v_1 : int}$}
            \AxiomC{$\mathtt{v_1 \neq 0}$}
            \BinaryInfC{\texttt{if} $\mathtt{v_1}$ \texttt{then} $\mathtt{t_2}$ \texttt{else} $\mathtt{t_3} \longrightarrow \mathtt{t_2}$}
            \DisplayProof
        }{E-IfElse-T}
        
        \tyrule{%
            \AxiomC{\texttt{if 0 then} $\mathtt{t_2}$ \texttt{else} $\mathtt{t_3} \longrightarrow \mathtt{t_3}$}
            \DisplayProof
        }{E-IfElse-F}
    \end{tcolorbox}
    \captionof{figure}{\GEMINI\space evaluation rules (excerpt)}
    \label{fig:evrules}
\end{Figure}

\par
Similarly to typing rules, each evaluation rule is a theorem. An excerpt of the set of evaluation rules is shown in Figure \ref{fig:evrules}. In these rules, the character \texttt{t} denotes a term that may be further reduced by some evaluation rule. The character \texttt{v} denotes a terminal value that cannot be evaluated any further.

\par
As an example, the rule \titlesc{E-IfElse-T} is pronounced "if \texttt{v$_1$} has type \texttt{int} and \texttt{v$_1$} is not equal to \texttt{0}, then the expression \texttt{if v$_1$ then t$_2$ else t$_3$} evaluates to \texttt{t$_2$}". Next, we would evaluate \texttt{t$_2$} by the appropriate rule until the result is a terminal value. The complete set of evaluation rules is shown in \ref{sec:fullevrules} of the Appendix.

\par
The evaluation rules together define a precise and unambiguous evaluation strategy. For example, consider the expression \texttt{if x then (if 1 then "a" else "b") else "c"}. Under the evaluation rules of our language, it is not possible for this to evaluate to \texttt{if x then "a" else "c"}, despite this being a state that would evaluate to an equivalent value. We must first evaluate the guard of the outer-conditional by rule \titlesc{E-IfElse}. Once it is a terminal value, then we pick one of the then- and else-clauses based on rules \titlesc{E-IfElse-T} and \titlesc{E-IfElse-F}. A useful property is the determinacy of one-step evaluation, stating that if \texttt{t $\longrightarrow$ t'} and \texttt{t $\longrightarrow$ t''}, then \texttt{t' = t''}. This ensures that evaluation is a deterministic process.

\section{\GEMINI\space Metatheory}
\label{sec:metatheory}
In this section, we utilize the formalizations of grammar, semantics, and evaluation made in the previous section in order to prove desirable properties of \GEMINI.

\par
First, we must reiterate the definition of some terms. A term \texttt{t} is in \textit{normal form} if no evaluation rule can be applied to it. A term \texttt{t} is in a \textit{stuck state} if it is in normal form but it is not a value. We aim to prove a basic property of \GEMINI's type system: \textit{safety}. We will have achieved this if we prove that a well-typed term can never reach a stuck state during evaluation.

\par
It is important to prove that a type system possesses safety in order to ensure that well-typed programs can be compiled and executed without entering a stuck state. It is especially critical in the case of the \GEMINI\space language since software evaluation occurs as an intermediary step of compilation, and failure to be type-safe could lead to a non-terminating process at compile-time.

\par
In order to build to our proof of safety, we must first establish supporting proofs of two other properties: \textit{progress} and \textit{preservation}.

\subsection{Proof of progress}
Recall that a type system has the property of \textit{progress} if a well-typed term is never in a stuck state; either it is a value or it or it can take a step according to some evaluation rule.

\par
In order to prove that our type system has this property, we must first prove two supporting lemmas.

\begin{lem}[Inversion of Typing Relation]
The following are true, and constitute the inversion of the typing relation:
\label{lem:inversion-of-typing}
\end{lem}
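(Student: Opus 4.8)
The plan is to prove each clause of the lemma by the same elementary argument: \emph{the typing rules of \GEMINI\ are syntax-directed}. That is, for every syntactic form of term --- a variable, an abstraction $\lambda x : T_1 . t_2$, an application $t_1\, t_2$, an integer addition $t_1\ \texttt{+}\ t_2$, a conditional $\texttt{if}\ t_1\ \texttt{then}\ t_2\ \texttt{else}\ t_3$, and so on through every production of the software and hardware grammars --- there is exactly one typing rule whose conclusion has that form. Consequently, if a judgment $\Gamma \vdash t : T$ holds and $t$ has a given shape, the final step of its derivation \emph{must} be an instance of the unique rule for that shape, so the premises of that rule are satisfied. Each clause of Lemma~\ref{lem:inversion-of-typing} is simply the assertion ``the premises of the rule for this form hold'', so reading it off the rule discharges the case.

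Concretely, I would proceed form by form, in the order the constructs are introduced in Sections~\ref{sec:swgrammar}--\ref{sec:hwgrammar}, using the rule set collected in \ref{sec:fulltyrules}. For \titlesc{T-Var}, the conclusion $\Gamma \vdash x : T$ forces $x : T \in \Gamma$; for \titlesc{T-Abs}, the conclusion $\Gamma \vdash \lambda x : T_1 . t_2 : R$ forces $R$ to have the shape $T_1 \rightarrow T_2$ with $\Gamma, x : T_1 \vdash t_2 : T_2$; for \titlesc{T-App}, the conclusion $\Gamma \vdash t_1\, t_2 : R$ yields a type $T_1$ with $\Gamma \vdash t_1 : T_1 \rightarrow R$ and $\Gamma \vdash t_2 : T_1$; for \titlesc{T-Int-Add}, the result type is pinned to $\texttt{int}$ and both operands have type $\texttt{int}$; for \titlesc{T-And}, the result type must be some hardware type $T_H$ and both operands have that same type; for \titlesc{T-IfElse}, the guard has type $\texttt{int}$, both branches share the common result type $T$, and (recall) $T$ lies in \SWKIND. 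The remaining cases --- the other arithmetic and logic operators, list and record literals and their accesses, references, dereference and assignment, sequences, pattern-matching, let-bindings, the \texttt{sw}/\texttt{unsw} wrappers, array literals and indexing, array generation, and module declaration and application --- are each dispatched identically by inspecting the corresponding rule.

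The argument has essentially no mathematical content; its only real burden is completeness of the enumeration, and the one place that needs care is confirming that \GEMINI\ genuinely is syntax-directed, i.e.\ that no two rules have conclusions of the same shape. Three features could in principle threaten this and should be addressed explicitly: (i) derived terms such as \texttt{if}~$t_1$~\texttt{then}~$t_2$ and tuples --- but these are elaborated to their general forms before typing, so no separate rule exists and no ambiguity arises; (ii) operators --- \GEMINI\ forbids operator overloading (e.g.\ \texttt{+} versus \texttt{+.}), so each operator symbol heads a single rule; and (iii) the multi-kinded, dependently typed constructs (arrays, temporal values, hardware records, modules), whose rules carry implicit well-kindedness and arity/size side conditions, so that inversion must also surface those conditions (e.g.\ that a recovered array element type is in \HWKIND, that a recovered record arity matches, that a recovered module argument's size instantiation is the one used). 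I expect (iii) to be the only part requiring genuine attention; once the rule set in the Appendix is checked to have pairwise non-overlapping conclusions, every clause of the lemma is immediate.
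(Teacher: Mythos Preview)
Your proposal is correct and takes essentially the same approach as the paper, which dispatches the lemma in a single line: ``Immediate from the definition of the typing rules.'' Your write-up is simply a fuller unpacking of what ``immediate'' means here---syntax-directedness of the rule set---together with a careful check (not present in the paper) that derived terms, non-overloaded operators, and the kinded/dependently typed constructs do not introduce overlapping rule conclusions.
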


\begin{enumerate}
    \item If $\mathtt{\Gamma \vdash x : R}$, then $\mathtt{x : R \in \Gamma}$.
    \item If $\mathtt{\Gamma \vdash \lambda x : T_1 . t_2 : R}$, then $\mathtt{R = T_1 \rightarrow R_2}$ for some $\mathtt{R_2}$ with $\mathtt{\Gamma, x : T_1 \vdash t_2 : R_2}$.
    \item If $\mathtt{\Gamma \vdash t_1}$ $\mathtt{t_2 : R}$ then there is some type $\mathtt{T_{11}}$ such that $\mathtt{\Gamma \vdash t_1 : T_{11} \rightarrow R}$ and that $\mathtt{\Gamma \vdash t_2 : T_{11}}$.
    \item If $\langle integer \rangle \mathtt{: R}$, then \texttt{R = int}.
\end{enumerate}

The remainder of the cases are omitted here and are shown in full in \ref{sec:inversion-of-typing} of the Appendix.

\begin{proof}
    Immediate from the definition of the typing rules.
\end{proof}

\begin{lem}[Canonical Forms]
The following are true, and constitute the canonical forms:
\label{lem:canonical-forms}
\end{lem}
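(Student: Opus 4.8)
The plan is to prove the Canonical Forms lemma by a direct case analysis on the shape of a value, invoking Lemma~\ref{lem:inversion-of-typing} (inversion of the typing relation) at each step. The statement will contain one clause for each value-forming production of the software value grammar (Figure~\ref{fig:swvgrammar}; full version in the Appendix) and the hardware value grammar (Figure~\ref{fig:hwvgrammar}). Concretely, the clauses assert: if $v$ is a value and $\mathtt{\Gamma \vdash v : int}$ then $v$ is an integer literal; if $\mathtt{\Gamma \vdash v : T_1 \rightarrow T_2}$ then $v = \lambda x : T_1.\, t$ for some body $t$; if $v$ has type $\mathtt{bit}$ then $v$ is $\mathtt{0}$ or $\mathtt{1}$; if $v$ has type $T_H[n]$ then $v = \#[v_0,\ldots,v_{n-1}]$ with each $v_i : T_H$; if $v$ has type $T_H\ \mathtt{sw}$ then $v = \mathtt{sw}\ h$ for some hardware value $h$ with $h : T_H$; together with the analogous statements for \texttt{real}, \texttt{string}, \texttt{list}, software records, \texttt{ref}, the variants $\mathtt{C_i}\ T_S$, and the temporal type $T_H\,@\,n$.

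For each clause I would fix a value $v$ of the stated type and enumerate the grammar alternatives $v$ could match. For every alternative other than the intended one, the matching inversion case forces $v$ to have a type whose outermost constructor is different (an integer literal is typed only by \texttt{int}, a $\lambda$-abstraction only by an arrow, an array value only by $\cdot[\cdot]$, a \texttt{sw}-value only by $\cdot\ \mathtt{sw}$, and so on), contradicting the hypothesis, since the atomic kinds and the constructor families of Section~\ref{sec:typesandkinds} are pairwise distinct. The unique surviving alternative is precisely the asserted canonical form, and the accompanying structural facts---the element typings for lists, records, and arrays, and the typing of the wrapped value for \texttt{sw}---are read straight off the premises of the relevant typing rule through inversion.

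The only place that needs care beyond bookkeeping is the pair of dependent types $T_H[n]$ and $T_H\,@\,n$: it is not enough to conclude that $v$ is an array (resp.\ a delayed value), one must also conclude that its length is exactly $n$ (resp.\ that it arose from $n$ applications of $\delta$). This relies on the array and temporal typing rules recording the length/delay in the type and on the inversion lemma exposing it; once that is in place the equality of $n$ on both sides is immediate, but I would state those inversion clauses explicitly so the step is unambiguous. A secondary point is that the grammar deliberately omits hardware records as values (they are encoded as bit arrays before evaluation), so there is no canonical-forms obligation for the hardware record constructor---the array clause subsumes it.

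I expect the main obstacle to be completeness rather than difficulty: with three atomic kinds and three constructor families the case analysis is long, and a handful of cases---software versus hardware records, and the \texttt{sw} boundary where the software term grammar references the hardware value grammar---require attention to make sure the inversion clause invoked is the one appropriate to the kind in question. No case should require anything more than inversion together with the observation that distinct type constructors cannot be identified.
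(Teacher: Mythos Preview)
Your approach is essentially identical to the paper's: a case analysis on the possible value forms, using the inversion lemma to rule out every alternative whose typing rule yields a different outermost type constructor. The paper's proof is in fact terser than yours (it spells out only the \texttt{int} case and dismisses the rest as ``similar''), so your additional remarks on the dependent size/delay parameters and on hardware records being encoded away are refinements rather than departures.
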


\begin{enumerate}
    \item If \texttt{v} is a value of type \texttt{int}, then \texttt{v} is an integer value according to the software value grammar.
    \item If \texttt{v} is a value of type \texttt{real}, then \texttt{v} is a real value according to the software value grammar.
    \item If \texttt{v} is a value of type \texttt{string}, then \texttt{v} is a string value according to the software value grammar.
    \item If \texttt{v} is a value of type \texttt{bit}, then \texttt{v} is either \texttt{0} or \texttt{1}.
    \item If \texttt{v} is a value of type \texttt{T$_1$ $\rightarrow$ T$_2$}, then \texttt{v = $\lambda x$:T$_1$.t$_2$}.
    \item If \texttt{v} is a value of type \texttt{T$_s$ ref}, then \texttt{v} is a location in store $\mu$.
    \item If \texttt{v} is a value of type \texttt{\{l$_i$: T$_i^{i\in1..n}$\}}, then \texttt{v} is a value with the form \texttt{\{l$_i$ = v$_i^{i\in1..n}$\}}.
    \item If \texttt{C} is a constructor of datatype \texttt{D} accepting type \texttt{T$_1$} and \texttt{v} is a value of type \texttt{T$_1$}, then \texttt{C v} is a value of type \texttt{D} with form \texttt{$\langle$C=v$\rangle$}.
    \item If \texttt{v} is a value of type \texttt{T$_H$[n]}, then \texttt{v} is an array value according to the hardware value grammar.
    \item If \texttt{v} is a value of type \texttt{\#\{l$_i$: T$_i^{i\in1..n}$\}}, then \texttt{v} is a value with the form \texttt{\#\{l$_i$ = v$_i^{i\in1..n}$\}}.
    \item If \texttt{v} is a value of type \texttt{T$_H$ sw}, then \texttt{v} is a value with the form \texttt{$\omega$(v$_H$)} for some \texttt{v$_H$} of type \texttt{T$_H$}.
\end{enumerate}
\begin{proof}
    We proceed through each case of the canonical forms and refer to the inversion from Lemma \ref{lem:inversion-of-typing}.
    \begin{description}
        \item \textit{Case} 1: Values in this language can take several forms. The case of an integer gives us our desired result immediately. All other forms cannot occur since we assumed that \texttt{v} has type \texttt{int} and among the cases in consideration from Lemma \ref{lem:inversion-of-typing}, only case 4 tells us that the value has type \texttt{int}.
    \end{description}
    
    The remaining cases are similar.
\end{proof}

We are now equipped to prove the theorem of progress.

\begin{thm}[\textbf{\titlesc{Progress}}]
Suppose \texttt{t} is a closed, well-typed term ($\vdash$\texttt{t:T} for some type \texttt{T}). Then either \texttt{t} is a value or else there is some \texttt{t'} such that \texttt{t $\longrightarrow$ t'}.
\label{thm:progress}
\end{thm}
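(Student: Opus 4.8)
The plan is to prove \titlesc{Progress} by induction on the derivation of the typing judgment $\vdash \mathtt{t : T}$, casing on the last typing rule applied. The argument follows the familiar template: the \emph{introduction}-style rules exhibit $\mathtt{t}$ as a value directly, while for every \emph{elimination}-style or compound rule we apply the induction hypothesis to each immediate subterm and then branch --- if some subterm can take a step, we fire the corresponding congruence (``search'') rule on $\mathtt{t}$; if all the relevant subterms are values, we invoke inversion of the typing relation (Lemma~\ref{lem:inversion-of-typing}) to pin down their types and canonical forms (Lemma~\ref{lem:canonical-forms}) to pin down their syntactic shapes, after which the appropriate computation rule applies. Every subterm we recurse on is itself closed and typed in the empty context, so the induction hypothesis is always available.

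For the base cases: \titlesc{T-Var} cannot be the final rule, because $\mathtt{t}$ closed means it is typed under the empty context, making $\mathtt{x : T \in \Gamma}$ impossible. The rule \titlesc{T-Abs} and the literal rules for integers, reals, strings, and bits each yield a term that is already a value according to the software or hardware value grammar, so there is nothing to do.

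For the inductive cases, consider \titlesc{T-App} with $\mathtt{t = t_1\ t_2}$: by the induction hypothesis on $\mathtt{t_1}$, either $\mathtt{t_1 \longrightarrow t_1'}$ and \titlesc{E-App1} steps $\mathtt{t}$, or $\mathtt{t_1}$ is a value and we recurse on $\mathtt{t_2}$; if $\mathtt{t_2}$ steps, \titlesc{E-App2} applies, and otherwise both are values, so inversion (case~3 of Lemma~\ref{lem:inversion-of-typing}) gives $\mathtt{t_1}$ an arrow type and canonical forms (case~5 of Lemma~\ref{lem:canonical-forms}) gives $\mathtt{t_1 = \lambda x : T_{11}.t_{12}}$, whence \titlesc{E-AppAbs} fires. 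The conditional \titlesc{T-IfElse} is handled the same way: recursing on the guard either steps via \titlesc{E-IfElse} or exposes a value, which inversion forces to have type \texttt{int} and canonical forms (case~1) forces to be an integer literal, so \titlesc{E-IfElse-T} or \titlesc{E-IfElse-F} fires depending on whether it is zero. The remaining rules --- the operator rules such as \titlesc{T-Int-Add} and \titlesc{T-And}, the access rules for records, tuples, arrays, and references, and the rules for assignment, sequencing, pattern-matching, let-bindings, and the $\mathtt{sw}$/$\mathtt{unsw}$ wrapper --- all follow this same recurse-then-dispatch pattern, each using the matching canonical-forms clause to license its computation step.

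I expect the principal difficulty to be twofold. First, the sheer number of term forms in \GEMINI\space makes the case analysis long; each form requires its congruence and computation rules to be present and correctly aligned with the appendix, and any missing evaluation rule would surface here as a spurious stuck state. Second, and more substantively, references force a choice about the statement itself: since canonical forms for $\mathtt{T_s\ ref}$ speaks of a location in an evaluation store, the evaluation relation --- and hence \titlesc{Progress} --- really needs to be phrased over configurations $\mathtt{(t, \mu)}$ with $\mu$ well-typed against a store typing, so that dereference and assignment always find a binding to act on; I would thread this through the induction. Finally, I would confirm that any hardware-typed subterm sitting inside a software term (notably the argument of $\mathtt{sw}$) is genuinely a value and never a reducible hardware expression --- which holds precisely because the abstract hardware grammar contains only values --- so that the software evaluation pass cannot get stuck on a hardware term.
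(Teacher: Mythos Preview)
Your proposal is correct and follows essentially the same approach as the paper: structural induction on the typing derivation, with \titlesc{T-Var} discharged by closedness, the literal and abstraction rules yielding values immediately, and each compound rule handled by the induction hypothesis plus the congruence rule or, when all subterms are values, inversion (Lemma~\ref{lem:inversion-of-typing}) together with canonical forms (Lemma~\ref{lem:canonical-forms}) to enable the computation rule. Your remarks about threading the store $\mu$ through the reference cases and about hardware subterms of \texttt{sw} being values by construction are accurate observations that the paper handles only implicitly in the appendix cases.
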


\begin{proof}
    By structural induction on a derivation of \texttt{t:T}.
    
    \begin{description}
        \item \textit{Case} \titlesc{T-Int}, \titlesc{T-Real}, \titlesc{T-String}, \titlesc{T-Bit}, \titlesc{T-Nil}:\\
        Immediate since \texttt{t} is a value.
        
        \item \textit{Case} \titlesc{T-App}:\\
        \texttt{t = t$_1$ t$_2$}\\
        \texttt{$\vdash$ t$_1$ : T$_{11}$ $\rightarrow$ T$_{12}$}\\
        \texttt{$\vdash$ t$_2$ : T$_{12}$}\\
        \\
        By the induction hypothesis, either \texttt{t$_1$} is a value or else there is some other \texttt{t$_1$'} for which \texttt{t$_1$ $\longrightarrow$ t$_1$'}, and likewise for \texttt{t$_2$}. If \texttt{t$_1$ $\longrightarrow$ t$_1$'} then by \titlesc{E-App1}, \texttt{t $\longrightarrow$ t$_1$' t$_2$}. On the other hand, if \texttt{t$_1$} is a value and \texttt{t$_2$ $\longrightarrow$ t$_2$'}, then by \titlesc{E-App2}, \texttt{t $\longrightarrow$ t$_1$ t$_2$'}. Finally, if both \texttt{t$_1$} and \texttt{t$_2$} are values, then case 5 of the canonical forms lemma tells us that \texttt{t$_1$} has the form \texttt{$\lambda x$:T$_{11}$.t$_{12}$} and so by \titlesc{E-AppAbs}, \texttt{t $\longrightarrow$ [x$\mapsto$t$_2$]t$_{12}$} which is a value.
    \end{description}
    The remaining cases are shown in full in \ref{sec:progress} of the Appendix.
\end{proof}

\subsection{Proof of preservation}
We are also equipped to prove the theorem of preservation, which states that performing one step of evaluation preserves the type of the original term.

\begin{thm}[\textbf{\titlesc{Preservation}}]
If \texttt{t:T} and \texttt{t $\longrightarrow$ t'}, then \texttt{t':T}.
\label{thm:preservation}
\end{thm}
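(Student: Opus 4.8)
The plan is to prove preservation exactly as progress was proved: by structural induction on the derivation of $\mathtt{\Gamma \vdash t : T}$, doing a case analysis on the final typing rule, and in each case inspecting which evaluation rules can fire on a term of that shape. Before starting the induction I would record the usual auxiliary lemmas in the standard order. First, a \emph{weakening} lemma: if $\mathtt{\Gamma \vdash t : T}$ and $\mathtt{x}$ does not occur in $\mathtt{\Gamma}$, then $\mathtt{\Gamma, x : S \vdash t : T}$, proved by a routine induction on typing derivations. Second, and crucially, a \emph{substitution} lemma: if $\mathtt{\Gamma, x : S \vdash t : T}$ and $\mathtt{\Gamma \vdash s : S}$, then $\mathtt{\Gamma \vdash [x \mapsto s]t : T}$, proved by induction on the derivation of $\mathtt{\Gamma, x : S \vdash t : T}$, using weakening in the $\lambda$-abstraction case and the inversion lemma (Lemma~\ref{lem:inversion-of-typing}) to decompose each step. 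Because \GEMINI\space has \texttt{ref}, \texttt{\$}, and \texttt{:=}, the statement as written is really a simplification; I would reformulate preservation relative to a \emph{store typing} $\Sigma$ mapping locations to types, strengthen it to ``if $\Sigma \vdash \mathtt{t : T}$, the store $\mu$ is well typed with respect to $\Sigma$, and $\mathtt{t} \mid \mu \longrightarrow \mathtt{t'} \mid \mu'$, then there is some $\Sigma' \supseteq \Sigma$ with $\Sigma' \vdash \mathtt{t' : T}$ and $\mu'$ well typed with respect to $\Sigma'$'', and prove that instead; the version in the theorem statement then follows by taking the empty store.

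With those in hand the induction is mostly mechanical. For \titlesc{T-Var} and \titlesc{T-Abs} there is nothing to prove, since variables and $\lambda$-abstractions are in normal form and the hypothesis $\mathtt{t \longrightarrow t'}$ is vacuous. For \titlesc{T-App} with $\mathtt{t = t_1\,t_2}$, the three evaluation rules \titlesc{E-App1}, \titlesc{E-App2}, \titlesc{E-AppAbs} are handled respectively by the induction hypothesis on $\mathtt{t_1}$ and re-applying \titlesc{T-App}, by the induction hypothesis on $\mathtt{t_2}$ and re-applying \titlesc{T-App}, and --- in the $\beta$-reduction case --- by using case~2 of the inversion lemma to learn that the abstraction body is typed under $\mathtt{\Gamma, x : T_{11}}$ and then invoking the substitution lemma. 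The conditional case \titlesc{T-IfElse} and its if-then derived form are similar: \titlesc{E-IfElse} is dispatched by the induction hypothesis on the guard, while \titlesc{E-IfElse-T} and \titlesc{E-IfElse-F} simply return $\mathtt{t_2}$ or $\mathtt{t_3}$, each of which already has type $\mathtt{T}$ by inversion. The operator rules (\titlesc{T-Int-Add}, \titlesc{T-And}, and the rest) either step a subterm --- handled by the induction hypothesis --- or fully reduce to a literal whose type is exactly the rule's conclusion type. The remaining syntactic forms --- let-bindings (substitution lemma again), sequences, pattern-match, record and array/tuple accesses, \texttt{sw}/\texttt{unsw}, assignment and dereference (where $\Sigma$ is extended or consulted and the store-typing invariant is re-established), and array generation --- all follow the same recipe: use inversion to expose the types of subterms, apply the induction hypothesis to whichever subterm steps, and reassemble the typing derivation.

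I expect the main obstacle to be the substitution lemma together with the store-typing bookkeeping, not the preservation induction itself. The substitution lemma is where weakening and the treatment of variable capture are genuinely needed, and getting its statement right --- quantifying over an arbitrary context $\mathtt{\Gamma}$ rather than a single binding --- is what makes the \titlesc{T-App} and let-binding cases go through. The second source of friction is references: one must fix that locations are typed only through $\Sigma$ and never through $\mathtt{\Gamma}$, show that \texttt{ref}-allocation extends $\Sigma$ by one fresh binding while keeping the store well typed, and show that \texttt{:=} overwrites a location with a value of its declared type; none of this is deep, but it threads a new component through every case. A last point specific to \GEMINI\space is to confirm that the dependent hardware types \texttt{\HWKIND\space [n]} and \texttt{\HWKIND\space @ n} cause no difficulty: since hardware values never step and the array length and temporal delay are fixed numerals settled before hardware type-checking, every evaluation step acts on a software-typed subterm, so no step can alter a length or delay index and preservation for those constructors is effectively immediate.
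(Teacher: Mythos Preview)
Your proposal is correct and follows essentially the same route as the paper: structural induction on the typing derivation, with a case split on the last typing rule and, within each case, a sub-split on the applicable evaluation rules. The paper's proof is in fact less careful than yours on two points you rightly flag --- it invokes the result of substitution in the \titlesc{E-AppAbs}, \titlesc{E-LetV}, and \titlesc{E-Case-Ty} subcases without ever stating or proving a substitution lemma, and it handles the \texttt{ref}/\texttt{:=}/\texttt{\$} cases by appealing informally to ``locations in store $\mu$ are of type \texttt{T ref}'' rather than threading an explicit store typing $\Sigma$ through the statement --- so your added scaffolding (weakening, substitution, and the store-typing reformulation) fills genuine gaps in the paper's presentation rather than diverging from it.
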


\begin{proof}
    By structural induction on a derivation of \texttt{t:T}. At each step of the induction, we assume that the desired property holds for all subderivations\footnote{This means that if \texttt{s:S} and \texttt{s $\longrightarrow$ s'}, then \texttt{s':S} whenever \texttt{s:S} is proved by a subderivation of the present one} and proceed by case analysis on the final rule in the derivation.
    
    \begin{description}
        \item \textit{Case} \titlesc{T-Var}:\\
        \texttt{t = x}\\
        \texttt{x:T $\in \Gamma$}\\\\
        If the last rule in the derivation is \titlesc{T-Var}, then we know from the form of this rule that \texttt{t} must be a variable of type \texttt{T}. Thus \texttt{t} is a value, so it cannot be the case that \texttt{t $\longrightarrow$ t'} for any \texttt{t'}, and the requirements of the theorem are vacuously satisfied.

        \item \textit{Case} \titlesc{T-App}:\\
        \texttt{t = t$_1$ t$_2$}\\
        \texttt{$\Gamma \vdash$ t$_1$:T$_{11} \rightarrow$ T$_{12}$}\\
        \texttt{$\Gamma \vdash$ t$_2$:T$_{11}$}\\
        \texttt{T = T$_{12}$}\\
        
        Looking at the evaluation rules with application on the left-hand side, we find that there are three rules by which \texttt{t $\longrightarrow$ t'} can be derived: \titlesc{E-App1}, \titlesc{E-App2}, and \titlesc{E-AppAbs}. We consider each case separately.
        
        \begin{description}
            \item \textit{Subcase} \titlesc{E-App1}:\\
            \texttt{t$_1$ $\longrightarrow$ t$_1$'}\\
            \texttt{t' = t$_1$' t$_2$}\\
            
            From the assumptions of the \titlesc{T-App} case, we have a subderivation of the original typing derivation whose conclusion is \texttt{$\Gamma \vdash$ t$_1$:T$_{11} \rightarrow$ T$_{12}$}. We can apply the induction hypothesis to this subderivation obtaining \texttt{$\Gamma \vdash$ t$_1$':T$_{11} \rightarrow$ T$_{12}$}. Combining this with the fact that \texttt{$\Gamma \vdash$ t$_2$:T$_{11}$}, we can apply rule \titlesc{T-App} to conclude that \texttt{$\Gamma \vdash$ t':T}.
            
            \item \textit{Subcase} \titlesc{E-App2}:\\
            Similar to \titlesc{E-App1}.
            
            \item \textit{Subcase} \titlesc{E-AppAbs}:\\
            \texttt{t$_1$ = $\lambda x$:T$_{11}$.t$_{12}$}\\
            \texttt{t$_2$ = v$_2$}\\
            \texttt{t' = [x$\mapsto$v$_2$]t$_{12}$}\\
            
            Using Lemma \ref{lem:inversion-of-typing}, we can deconstruct the typing derivation for \texttt{$\lambda x$:T$_{11}$.t$_{12}$} yielding \texttt{$\Gamma, x$:T$_{11} \vdash$ t$_{12}$ : T$_{12}$}. From this we obtain \texttt{$\Gamma \vdash$ t':T$_{12}$}.
        \end{description}
    \end{description}
    The remaining cases are shown in full in \ref{sec:preservation} of the Appendix.
\end{proof}

\subsection{Proof of safety}
Having proved the theorems of progress and preservation, we are now ready to prove that the \GEMINI\space type system has the desired property of safety.

\begin{thm}[\textbf{\titlesc{Safety}}]
A well-typed term can never reach a stuck state in evaluation.
\label{thm:safety}
\end{thm}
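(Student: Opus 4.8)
The plan is to obtain safety as a direct corollary of the Progress theorem (Theorem \ref{thm:progress}) and the Preservation theorem (Theorem \ref{thm:preservation}), via the standard combination argument. Suppose \texttt{t} is a closed, well-typed term, so that $\vdash$\texttt{t:T} for some type \texttt{T}, and suppose toward a contradiction that there is a finite reduction sequence \texttt{t $\longrightarrow$ t$_1$ $\longrightarrow$ $\cdots$ $\longrightarrow$ t$_n$} in which \texttt{t$_n$} is in a stuck state, i.e.\ it is in normal form but is not a value.

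First I would show, by induction on the length \texttt{n} of the reduction sequence, that every \texttt{t$_i$} is closed and satisfies $\vdash$\texttt{t$_i$:T}. The base case \texttt{n = 0} is just the hypothesis on \texttt{t}. For the inductive step, assuming $\vdash$\texttt{t$_i$:T}, Preservation applied to the step \texttt{t$_i$ $\longrightarrow$ t$_{i+1}$} gives $\vdash$\texttt{t$_{i+1}$:T}; one also observes that a single evaluation step sends closed terms to closed terms, since the only rule that performs substitution, \titlesc{E-AppAbs}, substitutes a closed value for the lone free variable of an abstraction body, and no other rule introduces free variables. Hence in particular \texttt{t$_n$} is a closed, well-typed term with $\vdash$\texttt{t$_n$:T}.

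Next I would invoke Progress on \texttt{t$_n$}: being closed and well-typed, \texttt{t$_n$} is either a value or else admits some \texttt{t'} with \texttt{t$_n$ $\longrightarrow$ t'}. The first alternative contradicts the assumption that \texttt{t$_n$} is not a value; the second contradicts the assumption that \texttt{t$_n$} is in normal form. In either case we reach a contradiction, so no term reachable from \texttt{t} by evaluation can be stuck, which is exactly the assertion of safety.

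I do not expect a genuine obstacle here, since the mathematical content has already been discharged in Progress and Preservation; the theorem is essentially their conjunction. The only point that requires mild care is the bookkeeping in the induction, namely verifying that the side conditions needed to re-apply Progress and Preservation at each step — closedness and typability under the empty context — are genuinely maintained, and, for the reference fragment of the language, that the argument is really carried out over configurations \texttt{(t, $\mu$)} paired with a store typing that only grows, so that Preservation is used in its store-aware form. Threading that detail through is routine and does not alter the structure of the proof.
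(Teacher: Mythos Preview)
Your proposal is correct and follows essentially the same approach as the paper: safety is obtained by combining Progress and Preservation via induction on the length of the reduction sequence. Your write-up is in fact considerably more detailed than the paper's own proof, which dispatches the argument in a single sentence; the extra care you take with closedness and the store-aware formulation is appropriate but not something the paper spells out.
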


\begin{proof}
    Theorem \ref{thm:progress} demonstrates that a well-typed term is not stuck, and Theorem \ref{thm:preservation} demonstrates that if a well-typed term takes a step of evaluation, then the resulting term is also well-typed. In combination and by induction, these guarantee safety.
\end{proof}

\section{The \GEMINI\space Compiler}
\label{sec:compiler}
At this stage, we have formalized the \GEMINI\space language grammars, typing rules, and evaluation rules, and proved the desirable property of safety for our type system. This positions us to implement our compiler.

\par
The compiler discussed in this paper accepts a \GEMINI\space program as an input and produces Verilog as an output. We have written the compiler in the SML/NJ language. In order to explain the implementation of the compiler, we will decompose the implementation into five sequential phases.

\subsection{Lexer}
The first phase of compilation is \textit{lexical analysis}, performed by the \textit{lexer} module. In this phase, the program is scanned to produce syntactic units called \textit{lexemes}, which are then classified into a particular token class. The lexer is specified by an ordered set of patterns which match against certain sequences of characters within the language's alphabet. The lexer scans linearly until it encounters a sequence of characters that match a defined pattern. We denote a sequence of characters from position $i$ to $j$ as $C_{i,j}$.

\par
The lexer behaves deterministically by following two priority rules:
\begin{enumerate}
    \item \textbf{Rule of longest match}: if the lexer has encountered a valid sequence $C_{i,j}$, it will first check whether $C_{i,j+1}$ is also valid; if it is, then it will disregard $C_{i,j}$ in favor of $C_{i,j+1}$, else it tokenizes $C_{i,j}$.
    \item \textbf{Rule of earliest pattern}: if two lexer rules match the same sequence $C_{i,j}$, then it will tokenize based on the pattern that appears earliest in the list.
\end{enumerate}
\par
We provide illustrative examples of both rules. The tokens \texttt{>}, \texttt{=}, and \texttt{>=} all exist in the \GEMINI\space language as operators. By the rule of longest match, when the lexer encounters the character sequence \texttt{>=} it will tokenize it as the single token \texttt{>=} instead of the token \texttt{>} followed by the token \texttt{=}.

\par
Further, \GEMINI\space possesses the keyword \texttt{if} and defines identifiers as alphanumeric sequences of characters\footnote{Identifiers are actually defined slightly more restrictively, though this definition is fine for our example; a more precise definition can be found in the software grammar of \ref{sec:swtgrammar} of the Appendix.}. If the character sequence \texttt{if} is encountered, the tokenization will depend on the relative ordering of the patterns. Since we desire the sequence to be tokenized as the keyword \texttt{if} as opposed to the identifier \texttt{if}, we must specify the pattern for the keyword before the pattern for the identifiers such that we yield a keyword when applying the rule of earliest pattern. If the order were reversed, the lexer would never toknize the keyword\footnote{For this reason, the pattern for identifiers appears after all the pattern for all keywords.}.

\par
In some cases, the lexer needs to perform some basic computation to attach values to tokens. For example, \GEMINI\space allows integers to be declared in various bases. When the lexer encounters a hexadecimal representation of an integer, such as \texttt{\#'h:beef}, it computes the integer value in base-10 and tokenizes it as an integer value, the same way it would have treated the equivalent base-10 integer \texttt{48879}. This allows various representations to be treated identically past lexical analysis, thereby simplifying the implementation of later phases by reducing the number of cases to consider.

\par
In addition to performing tokenization, the \GEMINI\space lexer is responsible for ensuring that comments are balanced and that string quotes are closed before the end of the program. It does so by maintaining global state of nested comment depth and whether a quote has been left open, and ensuring that when the end-of-file is encountered that both are the appropriate values (\texttt{0} and \texttt{false}, respectively).

\par
The lexer was written using the ML-Lex tool developed by Andrew Appel. Each pattern is specified by a regular expression to match against and an action to execute; the action may be to report an error, generate a token, perform some side-effect, or any combination of the three. These patterns are written in a \texttt{.lex} file which is compiled to generate the appropriate SML code that perform lexing \cite{appel1994}.

\subsection{Parser}
The responsibility of the lexer is to enforce syntactical correctness of the input program. However, syntactically correct programs may not necessarily be grammatically correct. For example, consider the following grammatically incorrect program:
\begin{tcolorbox}
\begin{Verbatim}
if if if
\end{Verbatim}
\end{tcolorbox}
\par
The lexer would process the program successfully, though this is clearly an ill-formed program. We need to additionally enforce correctness of the structure of programs. This is the responsibility of the parser.

\par
Once the program passes through the lexer, we obtain a linear stream of tokens. The parser provides structure to the tokens by constructing an \textit{abstract syntax tree} (AST). The separation of the lexer and parser modules is a useful software design as it allows us to modify the concrete syntax of our language in the lexer – for example, by replacing the equality operator \texttt{=} with \texttt{==} – without requiring any changes to be made to the parser which only sees the tokenized output from the lexer as something such as \textit{EqualityOperator}\footnote{This is the reason the AST is considered \textit{abstract}, since references to the concrete syntax of the language have been shed.}.

\par
The tool ML-Yacc was utilized to specify the CFG for \GEMINI\space programs. We define a set of production rules in terms of the tokens produced by the lexer as well as a set of non-terminals. Further, each production rule is accompanied by a semantic action to specify some return value. As the program is processed by a look-ahead LR parser, the AST is constructed from the expressions returned by the semantic actions \cite{tarditi}. The production rules can easily be translated from the definitions of our syntax grammars from Sections \ref{sec:swgrammar} and \ref{sec:hwgrammar}.

\par
ML-Yacc further allows for the specification of \textit{precedence rules} which dictate the order of precedence for terminals. This affects the order in which the AST is constructed. These are important to specify correctly in order to enforce the expected semantics of certain expressions such as arithmetic ones that must follow the correct order of operations. The order of precedence for operators in \GEMINI\space is shown in \ref{sec:order-of-precedence} of the Appendix.

\subsection{Semantic analysis}
\label{sec:semanalysis}
The responsibility of the parser is to enforce grammatical correctness of a syntactically correct program. We have made progress, since our previously grammatically incorrect program would be caught at the parsing phase. However, the parser does not enforce semantic correctness of a program. Consider the following syntactically and grammatically correct, yet semantically incorrect program:
\begin{tcolorbox}
\begin{Verbatim}
42 * "a"
\end{Verbatim}
\end{tcolorbox}
\par
According to the definition of our typing rules in Section \ref{sec:tyrules}, this program is not well-typed; the multiplication operator \texttt{*} can only operate on operands of type \texttt{int}, yet the second operand has type \texttt{string}. We need to enforce semantic correctness of programs. This is the responsibility of the semantic analysis phase.

\par
In this phase, we recurse over the AST produced by the parser and verify that the semantics of the program are valid. However, a prohibitive issue is that not all types are known yet, since declarations of values, functions, and modules may be implicitly typed. We must first infer the actual types of expressions in a program. This is possible to do given that \GEMINI's type system can be classified as a Hindley-Milner type system \cite{tapl}. Thus, we further divide semantic analysis into three subphases: \textit{decoration}, \textit{inference}, and \textit{type-checking}.

\subsubsection{Type decoration}
In the first subphase, the \GEMINI\space program is transformed into an intermediate language we will refer to as \titlesc{ExplicitGemini}. In this language, all implicitly-typed identifiers are decorated with explicit types, as demonstrated in Figure \ref{fig:explicit-gemini}.

\begin{Figure}
    \begin{tcolorbox}
    \begin{Verbatim}
fun foo(x, y, s: string) =
    (print(s); x * y)
    \end{Verbatim}
    \end{tcolorbox}
    \begin{center}
        (a)
    \end{center}
    
    \begin{tcolorbox}
    \begin{Verbatim}
fun foo(x: `a, y: `b, s: string): `c =
    (print(s); x * y)
    \end{Verbatim}
    \end{tcolorbox}
    \begin{center}
        (b)
    \end{center}
    
    \captionof{figure}{A function written in (a) \GEMINI\space and (b) \titlesc{ExplicitGemini}}
    \label{fig:explicit-gemini}
\end{Figure}
\par
We begin by decorating each implicitly-typed identifier with a type variable, or \textit{metavariable}. Metavariables in \GEMINI\space are different from those in conventional programming languages in that there is a need to differentiate between metavariables of different kinds. A software metavariable may be later substituted by some software type, but not a hardware type, and vice versa.

\par
Each node in the AST is associated with information about its type, which is some variant of the discriminated union type \texttt{Absyn.ty} shown in full in \ref{sec:absyn-ty} of the Appendix. During the parsing phase, explicitly typed identifiers are associated with the given type, such as \texttt{Absyn.IntTy} for integer-typed identifiers, while implicitly typed identifiers are associated with a placeholder type \texttt{Absyn.PlaceholderTy}.

\par
In the decoration phase, the AST is reconstructed and each identifier is newly associated the variant \texttt{Absyn.ExplicitTy} which is constructed using a variant of the discriminated union type \texttt{Types.ty}, shown in full in \ref{sec:types-ty} of the Appendix. Thus, an identifier that was explicitly typed and associated with \texttt{Absyn.IntTy} during parsing would now have the type \texttt{Absyn.ExplicitTy(Types.S_TY(T.INT))}. Further, an identifier that had the type \texttt{Absyn.PlaceholderTy} would be given a new fresh metavariable using either the \texttt{Types.S_META} or \texttt{Types.H_META} constructor based on its kind. In some cases, the kind to which a metavariable belongs cannot be known yet, in which case it is temporarily given a new fresh metavariable of type \texttt{Types.META} and the kind is inferred later.

\subsubsection{Type inference}
After all identifiers are decorated with explicit types, it is time to perform type inference. Also called type reconstruction, this is the most complex phase of the \GEMINI\space compiler. There are two primary algorithms underlying the inference phase: \textit{unification} and \textit{substitution}.

\par
The goal of the unification algorithm is to compute the smallest possible substitution mapping $\sigma$ from metavariables to types. The unification algorithm is summarized in Figure \ref{fig:unification}.

\begin{figure*}
    \begin{tcolorbox}[title=\texttt{IsHWType}($t$),
    fonttitle=\bfseries]
    \begin{Verbatim}
case t of
    H_TY(_) => true
  | _ => false
    \end{Verbatim}

    \tcbsubtitle[before skip=\baselineskip]%
    {\texttt{IsSWType}($t$)}
    \begin{Verbatim}
case t of
    S_TY(_) => true
  | _ => false
    \end{Verbatim}
    
    \tcbsubtitle[before skip=\baselineskip]%
    {\texttt{Unify}($t_1$, $t_2$)}
    \begin{Verbatim}[
        commandchars=\\\{\},
        codes={%
            \catcode`$=3\relax
            \catcode`^=7\relax
            \catcode`_=8\relax
        }]
case $t_1$ of
    META(m) => $\{m \mapsto t_2\}$
  | H\string_META(hm) => if IsHWType($t_2$) then $\{hm \mapsto t_2\}$ else raise KindError
  | S\string_META(sm) => if IsSWType($t_2$) then $\{sm \mapsto t_2\}$ else raise KindError
  | \string_ => case $t_2$ of
            META(m) => $\{m \mapsto t_1\}$
          | H\string_META(hm) => if IsHWType($t_1$) then $\{hm \mapsto t_1\}$ else raise KindError
          | S\string_META(sm) => if IsSWType($t_1$) then $\{sm \mapsto t_1\}$ else raise KindError
          | \string_ => if IsHWType($t_1$) and IsHWType($t_2$)
                 then UnifyHWType($t_1$, $t_2$)
                 else if IsSWType($t_1$) and IsSWType($t_2$)
                      then UnifySWType($t_1$, $t_2$)
                      else raise KindError
    \end{Verbatim}

    \end{tcolorbox}
  \caption{Unification algorithm and subroutines (pseudocode)}
  \label{fig:unification}
\end{figure*}

\par
The subroutines \texttt{UnifyHWType} and \texttt{UnifySWType} are omitted from Figure \ref{fig:unification} for the sake of brevity. Both of these subroutines operate on the basis of structural recursion over the variants of each discriminated union. If the two types share the same outermost type, then the appropriate subroutine is recursively called on the inner types. The recursion terminates in three cases: (1) either the terminal types are known and match, (2) the terminal types are known and don't match, or (3) a metavariable is being unified with some other type. In the first case, nothing happens and unification continues. In the second case, there is a type mismatch and an error is raised. In the third case, a mapping is made from the metavariable to the other type.

\par
The result of the unification algorithm is a mapping $\sigma$ which is in turn used to augment a global substitution environment $\Sigma$. Since each metavariable is created freshly, it is safe to maintain a global environment since no two metavariables will correspond to the same element in the domain, and each metavariable can only map to a single element in the range; that is, the mapping function is injective.

\par
In addition to $\Sigma$, there are two more environments maintained although their scope is only local to their lexical closure. These are the type environment $\tau$ and the variable environment $\Gamma$. Within a let-binding, declarations bind symbols to their types in these environments. Values, functions, and modules are bound in $\Gamma$ whereas types and datatypes are bound in $\tau$. When processing a function or module, the parameters are added to the environment $\Gamma$ and only exist within the scope of the body. Since SML is a functional programming language, the implementation lends itself to discarding the augmented environments once the body has been processed, which correctly emulates the behavior of lexical scoping. As the AST is traversed, the mappings in $\Sigma$ are applied to both $\tau$ and $\Gamma$ in order to persist the results of unification. This constitutes the substitution algorithm, the main idea of which is summarized in Figure \ref{fig:substitution}.

\begin{figure*}
    \begin{tcolorbox}[title=\texttt{Substitute($\Sigma$, $env$)},
    fonttitle=\bfseries]
    \begin{Verbatim}[
        commandchars=\\\{\},
        codes={%
            \catcode`$=3\relax
            \catcode`^=7\relax
            \catcode`_=8\relax
        }]
$hasChanged \leftarrow$ false
$env^{\prime} \leftarrow env$
while $hasChanged$ do
    for ($name$, $type$) in $env$ do
        $env^{\prime} \leftarrow env^{\prime} \cup \{name \mapsto$ SubstituteType($type$, $\Sigma$, $hasChanged$)$\}$
return $env^{\prime}$
    \end{Verbatim}

    \tcbsubtitle[before skip=\baselineskip]%
    {\texttt{SubstituteType}($type$, $\Sigma$, $hasChanged$) (excerpt)}
    \begin{Verbatim}[
        commandchars=\\\{\},
        codes={%
            \catcode`$=3\relax
            \catcode`^=7\relax
            \catcode`_=8\relax
        }]
case $type$ of
    S\string_TY($stype$) => SubSW($type$, $\Sigma$, $hasChanged$, $\emptyset$)
  | H\string_TY($htype$) => SubHW($type$, $\Sigma$, $hasChanged$, $\emptyset$)
  | M\string_TY($mtype$) => SubMod($type$, $\Sigma$, $hasChanged$, $\emptyset$)
.
.
.
    \end{Verbatim}
    
    \tcbsubtitle[before skip=\baselineskip]%
    {\texttt{SubSW}($type$, $\Sigma$, $hasChanged$, $BV$) (excerpt)}
    \begin{Verbatim}[
        commandchars=\\\{\},
        codes={%
            \catcode`$=3\relax
            \catcode`^=7\relax
            \catcode`_=8\relax
        }]
case $type$ of
    S\string_META($sm$) => if $sm \in BV$
                    then $type$
                    else if $sm \in dom(\Sigma)$
                         then case $\Sigma(sm)$ of
                                    S\string_TY($type^{\prime}$) => (case $type^{\prime}$ of
                                                   S\string_META($sm^{\prime}$) => if $sm \neq sm^{\prime}$
                                                                  then $hasChanged \leftarrow$ true
                                                 | \string_ => $hasChanged \leftarrow$ true;
                                                  $type^{\prime}$)
                                  | \string_ => $type^{\prime}$
                         else $type$
  | INT => INT
  | ARROW($stype_1$, $stype_2$) => 
        ARROW(SubSW($stype_1$, $\Sigma$, $hasChanged$, $BV$), SubSW($stype_2$, $\Sigma$, $hasChanged$, $BV$))
  | S\string_POLY($PolyVars$, $ty^{\prime}$) =>
        S\string_POLY($PolyVars$, SubSW($ty^{\prime}$, $\Sigma$, $hasChanged$, $BV \cup PolyVars$))
  | S\string_MU($MuVars$, $ty^{\prime}$) => S\string_MU($MuVars$, SubSW($ty^{\prime}$, $\Sigma$, $hasChanged$, $BV \cup MuVars$))
.
.
.
    \end{Verbatim}

    \end{tcolorbox}
  \caption{Substitution algorithm and subroutines (pseudocode)}
  \label{fig:substitution}
\end{figure*}

\par
The definition of subroutines \texttt{SubHW} and \texttt{SubMod} are omitted, but are similar to that of \texttt{SubSW}. Some cases from \texttt{SubstituteType} and \texttt{SubSW} are also omitted, but the representative ones are shown. In substituting a metavariable, we first determine if it is a bound variable. If it is, then we must not substitute. If it is not, we look up the mapping in $\Sigma$ and return the mapped type if it exists. We must also make sure to let the iteration algorithm know if any substitution has occurred in order for it to continue iterating until it reaches a fixed point. Types like \texttt{INT} cannot be substituted any further and are therefore returned as is. For type constructors that possess inner types, such as \texttt{ARROW}, the \texttt{SubSW} routine is called recursively on the type arguments. The two most interesting cases are \texttt{S_POLY} and \texttt{S_MU}, which we will discuss further.

\par
The \texttt{S_POLY} type is inferred any time a function is parametrically polymorphic in its arguments. It represents the mathematical concept of the universal quantifier $\forall$. The set $PolyVars$ denotes the metavariables that are bound by the quantifier. As such, when performing substitution, these must not be substituted. Only upon function application does the \texttt{S_POLY} type become instantiated at which point the universal quantifier is shed, and each metavariable in $PolyVars$ is substituted uniformly with whichever type is supplied.

\par
Before discussing \texttt{S_MU}, we must first momentarily bring light to a special consideration made during the decoration phase. Since datatypes may be recursive, it is necessary to decorate their type uniquely when they are declared. The reason for this is twofold. First, while processing the body of the datatype there must exist some reference to the datatype itself since the constructor may be self-referential. In decorating datatype $d$, a temporary fresh metavariable $m$ is generated and the type environment $\tau$ is augmented with the mapping $\{d \mapsto m\}$. Then, the datatype constructors are decorated with any recursive reference to $d$ being replaced with the metavariable $m$. Once the entire datatype has been processed, the true discriminated union type $T_d$ can be determined and $\Sigma$ is augmented with the mapping $\{m \mapsto T_d\}$. Second, we wish to prevent infinite substitution from occuring in the inference phase if the datatype is recursive, and as such we wrap the type with the operator $\mu$, commonly known as $\mu$-recursion \cite{tarditi}. In substitution, whenever we encounter the $\mu$ operator, we refrain from substituting any variables it binds. Only when constructors are instantiated do we expand the recursive definition once, preventing infinite expansion.

\par
In inferring recursive functions, an approach similar to the handling of recursive datatypes is taken. Namely, the environment $\Gamma$ is augmented with a mapping from the function name to a function type with metavariables taking the place of parameter and return types. When processing the body, any application of the recursive function can be unified since the preliminary definition was polymorphic.

\par
Type inference enables parametric polymorphism by constraining types as loosely as possible. This allows a single part of a program to be used with different types. As an example, consider the following snippet of some language \titlesc{StrictGemini} which has neither type inference nor parametric polymorphism.

\begin{tcolorbox}
\begin{Verbatim}
fun concatInt
(x: int) (y: int list): int list = x::y

fun concatString
(x: string) (y: string list): string list
= x::y
\end{Verbatim}
\end{tcolorbox}

\par
The bodies of both functions are identical, yet we must declare them separately in order to be able to concatenate integers and strings. In the lambda calculus, the type of \texttt{concatInt} is $\lambda x:\mathtt{int}.\lambda y:\mathtt{int}$ $\mathtt{list}.x::y$, and the type of \texttt{concatString} is $\lambda x:\mathtt{string}.\lambda y:\mathtt{string}$ $\mathtt{list}.x::y$. For each further type, a separate concatenation function would need to be written. In \GEMINI, type construction enables us to instead define just the following.

\begin{tcolorbox}
\begin{Verbatim}
fun concat x y = x::y
\end{Verbatim}
\end{tcolorbox}

\par
Not only is the \GEMINI\space code less verbose, but it can be used to concatenate an element of any type to a list of the same type. By case 44 of the inversion of the typing relation, we know that \texttt{x} has some type \texttt{T$_S$} and \texttt{y} has some type \texttt{T$_S$} where \texttt{T$_S$} is the same in both cases. Since there are no further restrictions on the type \texttt{T$_S$}, the inference algorithm would find the loosest possible type which would be some metavariable that we will call \texttt{`a}. Therefore, our parametrically polymorphic function \texttt{concat} has the type signature \texttt{`a $\rightarrow$ `a list $\rightarrow$ `a list}. In the notation of the lambda calculus, the function is represented as $\forall \mathtt{`a}.\lambda x:\mathtt{`a}.\lambda y:\mathtt{`a}$ $\mathtt{list}.x::y$. When the function is applied, the quantifier is removed and the metavariable is substituted uniformly for the concrete type of the arguments.

\par
An optimization of the compiler is to gracefully handle type mismatches at this stage in order to allow the rest of the program to be type-checked without propagating the error forward. The way this is done is by using the \texttt{TOP} and \texttt{BOTTOM} types, and the software- and hardware-typed equivalents, found in \ref{sec:types-ty}. These act as the top and bottom types of the type system and are assigned to identifiers that fail to be unified during type inference; this prevents a single type mismatch from causing many errors if the identifier is used in many other expressions.

\subsubsection{Type checking}
With all types decorated and subsequently inferred, we are able to perform type-checking. The typing rules from Section \ref{sec:tyrules} are directly utilized in semantic verification. This is done in a recursive manner, since propositions of typing rules make statements about the types of subexpressions in order to verify the semantics of the entire expression. The recursion terminates once we reach a typing rule with no propositions, which indicates it is an axiom of our system.

\par
In our implementation, type-checking and type inference are actually performed concurrently. This is an optimization to avoid an unnecessary additional traversal of the AST, since the typing rules can be enforced once the types of subexpressions have been inferred. This is done by augmenting the unification algorithm from Figure \ref{fig:unification} in order to determine whether any two types can be unified, even if neither are metavariables. The way that this is done is by comparing the structure of the two types. If the types are both type constructors, then the unification algorithm recurses on the type arguments. The termination case of the recursive algorithm is when both types are proper types. If the types differ at any point, then there is a mismatch and an error is reported.

\subsection{Software evaluation}
\label{sec:sw-evaluation}
Once semantic analysis has been performed, we can safely begin the software evaluation phase knowing we have a well-typed program with regards to software expressions\footnote{Recall we can't always claim with certainty that the hardware terms are well-typed at this point, since their types may rely on the evaluation of software terms.}. Typically, compilers do not evaluate the program they are compiling and instead produce code in some target language that is functionally equivalent to the source program. However, in the case of the \GEMINI\space compiler there are two reasons we need to evaluate software terms. First, we wish to produce Verilog code which does not support certain software primitives. Second, we are required to evaluate software terms in order to perform hardware type-checking. As a result, this phase of compilation evaluates all software-typed expressions in order to generate an intermediate representation (IR) tree that consists solely of hardware-typed values, which we will later use to generate Verilog code.

\par
In our implementation, we evaluate according to the rules from Section \ref{sec:evalrules}. Having intentionally elected to define our rules by small-step semantics, the translation to implementation is much easier. For example, Figure \ref{fig:eval-conditionals} demonstrates how conditional expressions are evaluated in this phase\footnote{The code here is modified slightly for ease of readability}.

\begin{figure*}
    \begin{tcolorbox}
    \begin{Verbatim}
evalExp(Absyn.IfExp{guard, then', else', pos}) =
    let
        val guardVal = evalExp(guard)
    in
        if (getInt(guardVal)) <> 0
        then evalExp(then')
        else evalExp(else')
    end
    \end{Verbatim}
    \end{tcolorbox}
  \caption{Evaluating conditional expressions}
  \label{fig:eval-conditionals}
\end{figure*}

\par
As can be seen on line 3 of Figure \ref{fig:eval-conditionals} the guard is first evaluated recursively until a value is achieved. On line 5 we retrieve the integer constant for that value and compare it to \texttt{0}: if it is non-zero, then we evaluate the then-clause, else we evaluate the else-clause. Compare this strategy of evaluation to the rules \titlesc{E-IfThen}, \titlesc{E-IfThen-T}, and \titlesc{E-IfThen-F} and notice how they are aligned.

\par
While performing evaluation, a value store is maintained that maps symbols to the values they possess, specifically utilizing the \texttt{Value.value} datatype shown in full in \ref{sec:value-value} of the Appendix. Any time an identifier is referenced, its assigned value is looked up in the value store. This naturally enables lexical scoping since the value store within an inner scope is discarded once that scope is exited.

\par
Evaluating function and module declarations warrants special discussion. When a function is declared, its name is bound in the value store to a \texttt{Value.value $\rightarrow$ Value.value} function. When this function is called, the argument \texttt{Value.value} is bound to the function parameter names and the body is evaluated with the augmented value store. The reason we bind to a function as opposed to simply binding the function name to the function body is because the values of the arguments are only known upon function application. SML's closure rules enable the value function to remember the state of the value store upon declaration and process the body correctly after augmenting it with the supplied parameters.

\par
Module declarations are complicated by one more consideration. If modules are instantiated in the program itself, a similar approach can be taken in order to expand the module body inline. However, the top-level module that is returned is not instantiated but needs to be expanded in order to generate the final Verilog code. This is done by capturing the argument names when the module is declared and storing it as part of the module value; this is exhibited by the \texttt{Value.ModuleVal} variant of the datatype in \ref{sec:value-value}. At the top-level, the argument names are applied to the module function to expand the module body with all appearances of the argument variables replaced with \texttt{Value.NamedVal}, which represents some input pin the output Verilog module.

\subsection{Code generation}
At the final phase of compilation, we are left with a tree of hardware values representing a circuit constituted only of bits, gates, arrays, records, and pins.

\par
First, an additional pass of hardware type-checking must be performed. The manner in which this is done is similar to the process for software type-checking. As an optimization, this is done concurrently with the production of Verilog instead of performing an extra traversal.

\par
The manner in which Verilog is produced is also similar to previous phases. We recurse over the AST and build the output from the results of subtrees. The base case of recursion is when an input pin or constant bit value is encountered, in which case the appropriate symbol or constant is returned. At each node of the hardware tree, the subexpressions are evaluated to determine the wire that holds their value, and then the node itself is evaluated. A fresh variable name is generated and returned for superexpressions to use in computing themselves. For each generated instruction, the type of the wire is noted and an appropriate declaration is made at the top of the module. The result is a series of declarations and instructions which are finally emitted in order to produce the output.

\par
Hardware records are treated specially during this phase\footnote{Tuples are treated in the same way but since they are a derivation of records, they are not mentioned here.}. Verilog does not support record types since all data must be expressed in terms of bits and arrays. As a result, records are encoded into arrays by contanating the values of fields. Similarly, when accessing a record, the appropriate range of bits is determined and the array-transformed record is decoded in order to retrieve the appropriate field.

\section{Examples}
\label{sec:example}
We will now present some examples of \GEMINI\space programs to reinforce some of the concepts described throughout this paper. Each of the examples gets increasingly more practical in demonstrating how \GEMINI\space can be effectively used to generate Verilog code and we will see the output of a \GEMINI\space program in the final example.

\subsection{Canonical Functions}
The first example we observe exemplifies that the expressive power of \GEMINI\space is comparable to that of modern programming languages. In Figure \ref{fig:canonical}, we see some \GEMINI\space code for the canonical list functions of the functional programming paradigm. Each of these are defined using pattern-matching with structural decomposition and recursion. We also see how discriminated union types can be used to create the parametrically polymorphic \texttt{option} datatype.

\begin{figure*}
    \begin{tcolorbox}
    \begin{Verbatim}
fun map f x = case x of
                   [] => []
                |: a::rest => (f a)::(map f rest)

fun filter f x = case x of
                      [] => []
                   |: a::rest => if (f a)
                                 then a::(filter f rest)
                                 else (filter f rest)

fun foldl f acc init = case init of
                            [] => acc
                         |: (x::rest) => (foldl f (f(x, acc)) rest)
                         
fun foldr f acc init = case init of
                            [] => acc
                         |: (x::rest) => f(x, (foldr f acc rest))
                         
sdatatype 'a option = SOME of 'a
                   |: NONE

fun mapPartial f x = case x of
                          [] => []
                       |: (a::rest) => (case (f a) of
                                             NONE => mapPartial f rest
                                          |: SOME v => v::(mapPartial f rest))
    \end{Verbatim}
    \end{tcolorbox}
  \caption{Canonical list functions in \GEMINI}
  \label{fig:canonical}
\end{figure*}

\subsection{Explicit Logic}
While software functions like the ones shown in Figure \ref{fig:canonical} are useful, they are not the main point of \GEMINI. Software constructs are a means to an end used to make it easier to describe hardware, but ultimately a \GEMINI\space program must return some circuit that can be represented in Verilog.

\par
In this example, we showcase a \GEMINI\space program that makes use of helpful features such as pattern-matching, recursion, and discriminated union types in order to build a system of making explicit logic declarations. Figure \ref{fig:explicit-logic} shows this program.

\begin{figure*}
    \begin{tcolorbox}
    \begin{Verbatim}
let
    sdatatype explicitLogic = AND of explicitLogic list
                           |: OR of explicitLogic list
                           |: XOR of explicitLogic list
                           |: NOT of explicitLogic
                           |: INP of bit sw

    fun NAND lst = NOT(AND lst)
    fun NOR lst = NOT(OR lst)
    
    fun toHW expl =
        let
            fun listToArray elist =
                sw #[List.length elist; gen i => (unsw (toHW (List.nth(elist, i))))]
        in
            case expl of
                 INP(b) => b
              |: AND(lst) => sw (&-> (unsw (listToArray lst)))
              |: OR(lst) => sw (|-> (unsw (listToArray lst)))
              |: XOR(lst) => sw (^-> (unsw (listToArray lst)))
              |: NOT(el) => sw (! (unsw (toHW el)))
        end

    module mycircuit #(a, b, c) =
        let
            (* equivalent to if we had written !(c ^ (a & b)) *)
            val temp = NOT(XOR[INP(sw c), AND [INP(sw a), INP(sw b)]])
        in
            unsw (toHW temp)
        end
in
    mycircuit
end
    \end{Verbatim}
    \end{tcolorbox}
  \caption{Explicit Logic in \GEMINI}
  \label{fig:explicit-logic}
\end{figure*}

\par
It is worth paying attention to the interplay between software- and hardware-typed values in this example, and how \texttt{sw} and \texttt{unsw} are strategically used to convert from one kind to the other. This program exemplifies the way in which hardware values can be passed around in software constructs, such as functions and datatypes; their values are never read or accessed, but operations may be applied to them.

\par
One may imagine lines 2 through 22 being defined in a library, with \texttt{toHW} exposed for use by other programs. The function \texttt{toHW} takes an expression of type \texttt{explicitLogic} and returns a value of type \texttt{bit}. It does so by matching the expression against the variant types, each time recursively applying itself to the inner values except in the case of the base variant \texttt{INP}. We can see on line 21, for example, that if the expression has the variant type \texttt{NOT} then \texttt{toHW} first calls itself recursively on the inner value which it then unwraps, applies with the bitwise-not operator, and rewraps. The unwrapping and rewrapping are necessary to adhere to the kinding system since the bitwise-not operator can only be applied to values with the type \texttt{bit} which is in \HWKIND. In the production of Verilog, we collapse the wrap operators and apply the operators as specified.

\par
The module defined on line 24 takes three arguments, each of type \texttt{bit}\footnote{This is not explicitly specified, but it can be inferred since they are each wrapped with \texttt{sw} and passed as an argument to the \texttt{INP} constructor. Since the \texttt{INP} constructor accepts a \texttt{bit sw}, it must follow from our unification algorithm that each has the type \texttt{bit}.}. A logic expression is then explicitly written using the variants from \texttt{explicitLogic} and the result is converted to the hardware circuit equivalent using \texttt{toHW} and unwrapped in order to reveal the hardware value. As the comment on line 26 explains, this module definition is equivalent to if we had written \texttt{!(c \textasciicircum (a \& b))}. However, the point of this example is to demonstrate how we can introduce greater flexibility in a few ways. First, we allow an arbitrary number of inputs to be passed to the operators as a list. This is useful if we wish to construct a hardware circuit from some external source, such as a text file or as command line arguments. We can imagine writing interpreters of certain specifications to read in files and generate circuits by leveraging our explicit logic generator. Further, we are able to define composite operators from our primitive ones. We have done this for \texttt{NAND} and \texttt{NOR} on lines 8 and 9, though we may define more complex operators as well.

\subsection{N-bit RCA}
The final example will demonstrate the implementation of an n-bit ripple carry adder (RCA) circuit as a \GEMINI\space program. A difficulty of Verilog is that modules cannot be parametrically polymorphic in the size of inputs. As a result a circuit that requires a certain module, such as an RCA, to be instantiated on inputs of different sizes must duplicate the definition of these modules for each desired input size. From the perspective of software design, there are several problems with this approach. First, the same circuit definition must be made repeatedly which increases the overall size of the module definition for which readability and maintainability suffer. Second, any changes that need to be made to the general circuit algorithm must be made to each definition of the module. Third, as the input sizes increase, the complexity of the circuit may too increase exponentially making it unfeasible to program without the possibility of introducing subtle errors.

\par
The n-bit RCA shown in the program in Figure \ref{fig:rca} demonstrates how \GEMINI\space can be used to define an RCA module that is parametrically polymorphic in input size.

\begin{figure*}
    \begin{tcolorbox}
    \begin{Verbatim}
let
    val numbits = (* ... *)
    module rca_helper #(ai : bit, bi : bit, cin : bit) =
        let
            val sum = ai ^ bi ^ cin
            val cout = (ai & bi) | (ai & cin) | (bi & cin)
        in
            #(cout, sum)
        end

    fun getSecond x = (sw #2(unsw x))

    module rca <:n:> #(a, b) =
        let
            val couts = #[n; gen i =>
                                if i = 0 then
                                    rca_helper #(a[:i:], b[:i:], 'b:0)
                                else
                                    rca_helper #(a[:i:], b[:i:], #1(couts[:i - 1:]))]
        in
            unsw Array.fromList(List.map getSecond (Array.toList(sw couts)))
        end
    module n_bit_rca #(a : bit[numbits], b : bit[numbits]) = rca <:numbits:> #(a, b)
in
    n_bit_rca
end
    \end{Verbatim}
    \end{tcolorbox}
  \caption{N-bit RCA in \GEMINI}
  \label{fig:rca}
\end{figure*}

\par
On line 2, we define the desired size of the input arrays to be added. This can be defined as a constant, the result of some expression, or an input from a text file or command line. The module \texttt{rca} defined on line 13 is parameterized by the symbol \texttt{n} which is an integer referred to in the body, namely to define the size of the array in the generation expression on line 15. While we cannot return a parametrically polymorphic module, we can instantiate it to be used by another module which is what we have done on line 23. The result is an n-bit RCA module named \texttt{n_bit_rca} that takes two bit arrays of equal size and produces the result of addition using the ripple carry method. We can see the Verilog that is produced as a result of compiling this \GEMINI\space program with \texttt{numbits = 2} in Figure \ref{fig:rca-output}\footnote{The Verilog module name is taken from the name of the \GEMINI\space file from which it is compiled.}. 

\begin{figure*}
    \begin{tcolorbox}
    \begin{Verbatim}
module adder(input [1:0] a, input [1:0] b, output reg [1:0] out);
    reg r9, r13, r10, r12, r11, r2, r6, r8, r7, r3, r5, r4;
    reg [1:0] r1;

    always @(*) begin
        r4 <= a[1];
        r5 <= b[1];
        r3 <= r4 ^ r5;
        r7 <= a[0];
        r8 <= b[0];
        r6 <= r7 & r8;
        r2 <= r3 ^ r6; r11 <= a[0];
        r12 <= b[0];
        r10 <= r11 ^ r12; r13 <= 1'b0;
        r9 <= r10 ^ r13; r1[1] <= r2; r1[0] <= r9;
        out <= r1;
    end
endmodule
    \end{Verbatim}
    \end{tcolorbox}
  \caption{2-bit RCA in Verilog, compiled from \GEMINI}
  \label{fig:rca-output}
\end{figure*}

\section{Future Work}
\label{sec:future}
We conclude with some remarks on the various ways in which to improve upon and extend the work presented in this paper.

\subsection{Improvements}
While the \GEMINI\space compiler presented in this paper makes a few optimizations to reduce compilation time, there is additional scope for improvements to the output Verilog code. We list some of these in this Section.

\par
\textbf{\textit{Dataflow optimizations.}} The analysis of dataflow is common in compiler optimization as a way to determine the flow of information through a program. In particular, available expression analysis is a useful kind of dataflow optimization in the context of \GEMINI\space in order to reduce the number of logical gates used in the final program. The analysis of available expressions can be used in order to carry out common subexpression elimination by determining expressions that need not be recomputed and reusing their results after the initial computation instead of computing them again. This would also have the effect of reducing the number of intermediary wires declared for a given module.

\par
\textbf{\textit{Constraining resources finitely.}} Along a similar line, it may be the case that the circuit to be designed realistically has specific limitations on the number of resources, such as wires and logic gates, that are available for use. As such, the \GEMINI\space compiler can be instructed to constrain resource allocation to a certain amount. Dataflow optimizations such as common subexpression elimination are one way to reduce resource consumption, though other methods such as register allocation can be leveraged in order to deal with a finite number of available resources.

\par
\textbf{\textit{Improved legibility.}} Currently, a compiled \GEMINI\space program produces a single Verilog module. Any intermediate modules defined within a \GEMINI\space program are effectively expanded inline in the final module that is produced. For large and complex circuits, this can negatively impact modularity and readability by producing monolithic modules that are hard to test and debug. A useful improvement to the \GEMINI\space compiler would be to enable multiple modules to be produced from a single \GEMINI\space program. One possible mechanism to allow this to happen is to mark certain modules as "persistent", which would prevent the compiler from expanding them inline and instead create them as a separate module that can be referred from the top-level module in the resulting Verilog file. Another issue with legibility is that the names of wires are generated automatically, which makes it difficult to trace the origins of a particular logical operation to the source \GEMINI\space code. A valuable improvement to the \GEMINI\space compiler would be to allow programmers to optionally name the results of certain logical operations such that the produced Verilog code is more understandable.

\subsection{Extensions}

\par
\textbf{\textit{Testbench generation.}} An important aspect of designing electronic circuits is the ability to test its correctness. Popular programmable logic design software such as Quartus support the simulation of Verilog testbenches. The testbench is a Verilog file that defines a simulation of a given module against certain inputs, intended to ascertain the correct behavior. There is therefore scope for an extension to the \GEMINI\space compiler in which a programmer can write \GEMINI\space testbenches or individual test cases that are used to produce a Verilog testbench to accompany the produced module. These can then both be used in software such as Quartus to verify that the module behaves as expected.

\par
\textbf{\textit{\GEMINI\space signature files.}} The concept of interfaces is borrowed from popular modern programming languages. In SML and OCaml, these are named \textit{signatures}. They act by limiting the ways in which other programs are able to interact with each other, keep some of the implementation details private to a given program. In general, this does not have a significant effect on the quality of the produced Verilog module, though it can lead to improved software design practices when dealing with projects of a large scale.

\par
\textbf{\textit{Backends for other HDLs.}} As mentioned in Section \ref{sec:sw-evaluation}, the software evaluation phase of compilation produces an intermediate representation of our program consisting solely of hardware-typed values. The compiler was designed in this way intentionally to improve modularity; by generating an IR, the frontend and backend of the compiler are largely independent. In the compiler presented in this paper, the IR was used to generate a Verilog program. However, additional backends may be developed in order to produce modules of other popular HDLs, such as VHDL. These backends can use the IR produced by the existing frontend thereby halving the amount of work needed to compile to a new target language.

\bibliographystyle{unsrt}

\end{multicols}

\appendix
\section{Language Specification}
\subsection{Hardware Operators}
\label{sec:hwops}
\begin{threeparttable}
\begin{tabularx}{\linewidth}{llX}  
\toprule
Operator        & Syntax                & Semantic Result \\
\midrule
\texttt{\&}     & \texttt{e1 \& e2}     & Bitwise logical "and" of composing bits\tnote{1}\\
\texttt{|}      & \texttt{e1 | e2}      & Bitwise logical "or" of composing bits\tnote{1}\\
\texttt{\textasciicircum}     & \texttt{e1 \textasciicircum e2}     & Bitwise logical "xor" of composing bits\tnote{1}\\
\texttt{!}      & \texttt{!e}           & Negation of bit operand\\
\texttt{<{}<}     & \texttt{e1 <{}< e2}     & Shifts left bit array operand to the left by the amount specified (as an unsigned integer) by the right bit array\\
\texttt{>{}>}     & \texttt{e1 >{}> e2}     & Shifts left bit array operand to the right by the amount specified (as an unsigned integer) by the right bit array\\
\texttt{>{}>{}>}    & \texttt{e1 >{}>{}> e2}    & Shifts left bit array operand to the right by the amount specified (as an unsigned integer) by the right bit array, filling with the most significant bit value\\
\texttt{\&->}   & \texttt{\&->e1}       & Bitwise and-reduction of bit array operand\\
\texttt{|->}    & \texttt{|->e1}        & Bitwise or-reduction of bit array operand\\
\texttt{\textasciicircum->}    & \texttt{\textasciicircum->e1}        & Bitwise xor-reduction of bit array operand\\
\texttt{\&\&}    & \texttt{e1 \&\& e2}  & Or-reduction of both bit array operands, followed by bitwise logical "and" of resulting bits\\
\texttt{||}    & \texttt{e1 || e2}  & Or-reduction of both bit array operands, followed by bitwise logical "or" of resulting bits\\
\texttt{\textasciicircum\textasciicircum}    & \texttt{e1 \textasciicircum\textasciicircum e2}  & Or-reduction of both bit array operands, followed by bitwise logical "xor" of resulting bits\\
\bottomrule
\end{tabularx}
\begin{tablenotes}\footnotesize
\item [1]These operators recurse through subexpressions to perform the bitwise operation on the composing bits while retaining the overall structure.
\end{tablenotes}
\end{threeparttable}

\subsection{Arithmetic Operators}
\begin{tabularx}{\linewidth}{llX}  
\toprule
Operator        & Syntax                & Semantic Result \\
\midrule
\texttt{\~}      & \texttt{\textasciitilde e}           & Negation of integer operand\\
\texttt{+}      & \texttt{e1 + e2}      & Addition of integer operands\\
\texttt{-}      & \texttt{e1 - e2}      & Subtraction of right integer operand from left integer operand\\
\texttt{/}      & \texttt{e1 / e2}      & Division of left integer operand by right integer operand, rounded towards negative infinity\\
\texttt{*}      & \texttt{e1 * e2}      & Multiplication of integer operands\\
\texttt{\%}     & \texttt{e1 \% e2}     & Modulo of dividend left integer operand with divisor right integer operand\\
\texttt{+.}     & \texttt{e1 +. e2}     & Addition of real operands\\
\texttt{-.}     & \texttt{e1 -. e2}     & Subtraction of right real operand from left real operand\\
\texttt{/.}     & \texttt{e1 /. e2}     & Division of left real operand by right real operand\\
\texttt{*.}     & \texttt{e1 *. e2}     & Multiplication of real operands\\
\bottomrule
\end{tabularx}

\subsection{Conditional Operators}
\begin{tabularx}{\linewidth}{llX}  
\toprule
Operator        & Syntax                & Semantic Result \\
\midrule
\texttt{andalso}    & \texttt{e1 andalso e2}    & Logical conjunction of both integer operands\\
\texttt{orelse}     & \texttt{e1 orelse e2}     & Logical disjunction of both integer operands\\
\texttt{not}     & \texttt{not e}               & Logical complementation of integer operand\\
\bottomrule
\end{tabularx}

\subsection{Comparison Operators}
\begin{tabularx}{\linewidth}{llX}  
\toprule
Operator        & Syntax                & Semantic Result \\
\midrule
\texttt{=}    & \texttt{e1 = e2}    & Equality of operands\\
\texttt{<>}     & \texttt{e1 <> e2}     & Non-equality of operands\\
\texttt{>}     & \texttt{e1 > e2}               & Left operand has a strictly greater order than right operand\\
\texttt{<}     & \texttt{e1 < e2}               & Left operand has a strictly lesser order than right operand\\
\texttt{>=}     & \texttt{e1 >= e2}               & Left operand has a greater or equal order compared to right operand\\
\texttt{<=}     & \texttt{e1 <= e2}               & Left operand has a lesser or equal order compared to right operand\\
\bottomrule
\end{tabularx}

\subsection{List Operators}
\label{sec:lstops}
\begin{tabularx}{\linewidth}{llX}  
\toprule
Operator        & Syntax                & Semantic Result \\
\midrule
\texttt{::}    & \texttt{e1::e2}    & Concatenation of left element operand to the beginning of right list operand\\
\bottomrule
\end{tabularx}

\subsection{Library Functions}
\label{sec:lib}
\begin{tabularx}{\linewidth}{lllX}  
\toprule
Structure       & Function      & Type      & Semantic Result \\
\midrule
\multirow{2}{*}{\texttt{Core}} & \texttt{print} & \texttt{string -> unit} & Write a string to the standard output\\
\cline{2-4}
                               & \texttt{read} & \texttt{string -> string} & Read the contents of a file\\
\hline
\multirow{7}{*}{\texttt{List}} & \texttt{nth} & \texttt{(`a list * int) -> `a} & Return an element from a list given an index; raises an exception if the index is out of bounds\\
\cline{2-4}
                               & \texttt{length} & \texttt{`a list -> int} & Return the length of a list\\
\cline{2-4}
                               & \texttt{rev} & \texttt{`a list -> `a list} & Return the reversed list\\
\cline{2-4}
                               & \texttt{map} & \texttt{(`a -> `b) -> `a list -> `b list} & Apply a mapping function to each element of a list and return the result\\
\cline{2-4}
                               & \texttt{filter} & \texttt{(`a -> int) -> `a list -> `a list} & Return a list containing only elements that satisfy the predicate function\\
\cline{2-4}
                               & \texttt{foldl} & \texttt{(`a * `b -> `b) -> `b -> `a list -> `b} & Accumulate a value by iterating over a list from left to right\\
\cline{2-4}
                               & \texttt{foldr} & \texttt{(`a * `b -> `b) -> `b -> `a list -> `b} & Same as \texttt{foldl} except iteration is from right to left\\
\hline
\texttt{Int} & \texttt{toString} & \texttt{int -> string} & Return a string representation of an int\\
\hline
\multirow{4}{*}{\texttt{String}} & \texttt{size} & \texttt{string -> int} & Return the number of characters in a string\\
\cline{2-4}
                & \texttt{substring} & \texttt{(string * int * int) -> string} & Return the substring from a start to end location of a string; raises an exception if either index is out of bounds\\
\cline{2-4}
                & \texttt{concat} & \texttt{string list -> string} & Return the concatenation of all strings in a list\\
\cline{2-4}
                & \texttt{split} & \texttt{string -> string -> string list} & Return a list of strings resulting from splitting an original string over some delimiter\\
\hline
\multirow{5}{*}{\texttt{Real}}   & \texttt{floor} & \texttt{real -> int} & Return a real rounded towards negative infinity\\
\cline{2-4}
                & \texttt{ceil} & \texttt{real -> int} & Return a real rounded towards positive infinity\\
\cline{2-4}
                & \texttt{round} & \texttt{real -> int} & Return a real rounded towards the closest integer\\
\cline{2-4}
                & \texttt{fromInt} & \texttt{int -> real} & Return a real converted from an integer\\
\cline{2-4}
                & \texttt{toString} & \texttt{real -> string} & Return a string representation of a real\\
\hline
\multirow{2}{*}{\texttt{Array}}   & \texttt{toList} & \texttt{`a[n] sw -> `a sw list} & Return a list of software-wrapped hardware values from a software-wrapped hardware array\\
\cline{2-4}
                & \texttt{fromList} & \texttt{`a sw list -> `a[n] sw} & Return a software-wrapped hardware array from a list of software-wrapped hardware values\\
\hline
\texttt{BitArray}  & \texttt{twosComp} & \texttt{bit[n] ~> bit[n]} & Return a circuit performing twos-complement of a bit array\\
\hline
\texttt{HW}     & \texttt{dff} & \texttt{`a @ n ~> `a @ (n + 1)} & Return a DFF circuit from a given hardware input\\
\bottomrule
\end{tabularx}

\subsection{Operator Order of Precedence}
\label{sec:order-of-precedence}
\begin{tabularx}{\linewidth}{ll}  
\toprule
Operator(s)        & Associativity \\
\midrule
\texttt{\~}, \texttt{!}, \texttt{|->}, \texttt{\&->}, \texttt{\textasciicircum->} & N/A\\
\texttt{\$} & N/A\\
\texttt{\#f} & N/A\\
\texttt{[:i:]} & N/A\\
\texttt{/.}, \texttt{*.}, \texttt{/}, \texttt{*}, \texttt{\&}, \texttt{\%} & left\\
\texttt{-.}, \texttt{+.}, \texttt{-}, \texttt{+}, \texttt{\textasciicircum}, \texttt{|} & left\\
\texttt{\&\&} & left\\
\texttt{||}, \texttt{\textasciicircum\textasciicircum} & left\\
\texttt{::} & right\\
\texttt{>}, \texttt{<}, \texttt{>=}, \texttt{<=} & left\\
\texttt{=}, \texttt{<>} & left\\
\texttt{<{}<}, \texttt{>{}>}, \texttt{>{}>{}>} & left\\
\texttt{andalso} & left\\
\texttt{orelse} & left\\
\texttt{:=} & right\\
\bottomrule
\end{tabularx}

\section{Formalizations}
\subsection{Derived Terms}
\label{sec:derivedterms}
\begin{tabularx}{\linewidth}{ll}  
\toprule
Name        & Equivalence \\
\midrule
tuple & $(e_i)^{i \in 1..n} \equiv \{i = e_i\}^{i \in 1..n}$\\
unit & \texttt{() $\equiv$ \{\}}\\
logical and & \texttt{e1 andalso e2 $\equiv$ if e1 then e2 else 0}\\
logical or & \texttt{e1 orelse e2 $\equiv$ if e1 then 1 else e2}\\
logical not & \texttt{not e $\equiv$ if e1 then 0 else 1}\\
and-reduction & \texttt{\&->\#[e$_i$]$^{i \in 1..n}$ $\equiv$ e$_1$ \& ... \& e$_n$}\\
or-reduction & \texttt{|->\#[e$_i$]$^{i \in 1..n}$ $\equiv$ e$_1$ | ... | e$_n$}\\
xor-reduction & \texttt{\textasciicircum->\#[e$_i$]$^{i \in 1..n}$ $\equiv$ e$_1$ \textasciicircum ... \textasciicircum e$_n$}\\
and-collapse & \texttt{e1 \&\& e2} $\equiv$ \texttt{(|->e1) \& (|->e2)}\\
or-collapse & \texttt{e1 \&\& e2} $\equiv$ \texttt{(|->e1) | (|->e2)}\\
xor-collapse & \texttt{e1 \&\& e2} $\equiv$ \texttt{(|->e1) \textasciicircum (|->e2)}\\
if-then & \texttt{if e1 then e2 $\equiv$ if e1 then e2 else \{\}}\\
sequence & \texttt{(e1; e2) $\equiv$ $(\lambda x:T.e_2) e_1$ where $x \notin FreeVar(e_2)$}\\
\bottomrule
\end{tabularx}

\subsection{Software Value Grammar}
\label{sec:swvgrammar}
\begin{grammar}
    <swv> ::= <integer>
    \alt <real>
    \alt <string>
    \alt <list>
    \alt <software record>
    \alt <sw>
    \alt <ref>
    \alt <variant>
    \alt <function>
    
    <integer> ::= $i \in \mathbb{Z} \cap [-2^{31}, 2^{31} - 1]$
    
    <real> ::= $r \in \mathbb{R} \cap [2^{-1074}, (2 - 2^{-52}) \times 2^{1023}] \cap [-(2 - 2^{-52}) \times 2^{1023}, -2^{-1074}]\: \cap $ \{numbers expressible as IEEE FP\}
    
    <string> ::= $s \in \textstyle \bigcup\limits_{i = 0}^{2^{63} - 1}A^{i}$ where $A$ is the ASCII alphabet and $A^i$ denotes a sequence of $i$ characters from the alphabet $A$
    
    <list> ::= [<swv>$_i^{i \in 0..n-1}$]
    
    <software record> ::= $\{l_i = \langle swv \rangle_i^{i \in 1..n}\}$
    
    <sw> ::= $\omega$(<hwv>)
    
    <ref> ::= $\mu$[$l \mapsto $ <swv>]
    
    <variant> ::= $C_i$ $\langle swv \rangle$
    
    <function> ::= $\lambda x : T_S.e$
    
\end{grammar}

\subsection{Software Term Grammar}
\label{sec:swtgrammar}
\begin{grammar}
    <exp> ::= <literal>
    \alt <access>
    \alt <let binding>
    \alt <conditional>
    \alt <operation>
    \alt <assignment>
    \alt <pattern match>
    \alt <sequence>
    \alt <application>
    
    <literal> ::= <identifier>
    \alt <integer literal>
    \alt <real literal>
    \alt <string literal>
    \alt <list literal>
    \alt <software record literal>
    \alt <ref literal>
    \alt <sw literal>
    
    <identifier> ::= <id-start> <id-tail>
    
    <id-start> ::= \{any alphabetic character or underscore\}
    
    <id-tail> ::= \{any alphanumeric character or underscore\} <id-tail> 
    \alt $\epsilon$

    <integer literal> ::= <binary-integer>
    \alt <octal-integer>
    \alt <decimal-integer>
    \alt <hex-integer>
    
    <binary-integer> ::= "#\'b:" <binary-digits>
    
    <octal-integer> ::= "#\'o:" <octal-digits>
    
    <decimal-integer> ::= <decimal-digits>
    \alt <sign> <decimal-digits>

    <hex-integer> ::= "#\'x:" <hex-digits>
    
    <real literal> ::= <real-tail>
    \alt <sign> <real-tail>
    
    <real-tail> ::= <decimal-digits> "." \\
                    <decimal-digits-or-empty> <exponent-or-empty>
    \alt <decimal-digits-or-empty> "." \\
         <decimal-digits>\\
         <exponent-or-empty>
    \alt <decimal-digits> <exponent>
    
    <decimal-digits-or-empty> ::= <decimal-digits>
    \alt $\epsilon$
    
    <exponent> ::= "E" <decimal-digits>
    \alt "E" <sign> <decimal-digits>
    \alt "e" <decimal-digits>
    \alt "e" <sign> <decimal-digits>
    
    <exponent-or-empty> ::= <exponent>
    \alt $\epsilon$
    
    <binary-digit> ::= any of \{0, 1\}
    
    <octal-digit> ::= any of \{0, 1, 2, 3, 4, 5, 6, 7\}
    
    <decimal-digit> ::= any of \{0, 1, 2, 3, 4, 5, 6, 7, 8, 9\}
    
    <hex-digit> ::= any of \{0, 1, 2, 3, 4, 5, 6, 7, 8, 9, a, b, c, d, e, f, A, B, C, D, E, F\}
    
    <binary-digits> ::= <binary-digits> <binary-digit>
    \alt <binary-digit>
    
    <octal-digits> ::= <octal-digits> <octal-digit>
    \alt <octal-digit>
    
    <decimal-digits> ::= <decimal-digits> <decimal-digit> 
    \alt <decimal-digit>
    
    <hex-digits> ::= <hex-digits> <hex-digit>
    \alt <hex-digit>
    
    <sign> ::= "~"
    
    <string literal> ::= "\"" <string-body> "\""
    
    <string-body> ::= <string-body> <string-character>
    \alt $\epsilon$
    
    <string-character> ::= \{any printable character, including space, except for double-quotes ("\"") and backslash ("\\")\}
    \alt "\\" <escape-character>
    
    <escape-character> ::= any of \{"\\", "\'", "\"", a, b, e, f, n, r, t, 0\}
    \alt <hex-digits>
    
    <list literal> ::= "[" <list-body> "]"
    \alt "nil"
    
    <list-body> ::= <exp> <exp comma tail>
    \alt $\epsilon$
    
    <exp comma tail> ::= "," <exp> <exp comma tail>
    \alt $\epsilon$
    
    <software record literal> ::= "\{" <record-body> "\}"
    
    <record-body> ::= <identifier> "=" <exp> <record-tail>
    \alt $\epsilon$
    
    <record-tail> ::= "," <identifier> "=" <exp> <record tail>
    \alt $\epsilon$
    
    <ref literal> ::= "ref" <exp>
    
    <sw literal> ::= "sw" <hwv>
    
    <access> ::= <struct-access>
    \alt <record-access>
    \alt <ref-access>
    \alt <tuple-access>
    
    <struct-access> ::= <identifier> "." <identifier>
    
    <record-access> ::= "#" <identifier> <exp>
    
    <ref-access> ::= "\$" <exp>
    
    <tuple-access> ::= "#" <decimal-integer> <exp>
    
    <let binding> ::= "let" <decs> "in" <exp> "end"
    
    <decs> ::= <val-dec> <decs>
    \alt <ty-dec> <decs>
    \alt <sdataty-dec> <decs>
    \alt <fun-dec> <decs>
    \alt $\epsilon$

    <val-dec> ::= "val" <identifier> "=" <exp>
    \alt "val" <identifier> ":" <ty> "=" <exp>
    
    <ty-dec> ::= "type" <tyvars> <identifier> "=" <ty>
    
    <tyvars> ::= <tyvar>
    \alt "(" <tyvar> <tyvars tail> ")"
    \alt $\epsilon$
    
    <tyvars tail> ::= "," <tyvar> <tyvars tail>
    \alt $\epsilon$
    
    <ty> ::= <tyvar>
    \alt <identifier>
    \alt "{" <ty-fields> "}"
    \alt <ty> "list"
    \alt <ty> "ref"
    \alt <ty> "sw"
    \alt <ty> "->" <ty>
    \alt "(" <ty> ")"
    
    <tyvar> ::= "'" <identifier>
    
    <ty-fields> ::= <ty-fields-tail>
    \alt $\epsilon$
    
    <ty-fields-tail> ::= <identifier> ":" <ty> 
    \alt <ty-fields-tail> "," <identifier> ":" <ty> 
    
    <sdataty-dec> ::= "sdatatype" <tyvars> <identifier> "=" <identifier> "of" <ty> <dataty-tail>
    
    <dataty-tail> ::= "|:" <identifier> "of" <ty> <dataty-tail>
    \alt $\epsilon$
    
    <fun-dec> ::= "fun" <identifier> <fun-params> "=" <exp>
    \alt "fun" <identifier> <fun-params> ":" <ty> "=" <exp>
    
    <fun-params> ::= <fun-params> <fun-param>
    \alt <fun-param>
    
    <fun-param> ::= <identifier>
    \alt "(" <fun-param-list> ")"
    \alt "{" <fun-param-list> "}"
    
    <fun-param-list> ::= <fun-param-elem> <fun-param-list-tail>
    
    <fun-param-elem> ::= <identifier>
    \alt <identifier> ":" <ty>
    
    <fun-param-list-tail> ::= "," <fun-param-elem> <fun-param-list-tail>
    \alt $\epsilon$
    
    <conditional> ::= "if" <exp> "then" <exp> "else" <exp>
    \alt "if" <exp> "then" <exp>
    
    <operation> ::= <arith-op>
    \alt <compare-op>
    \alt <list-op>
    
    <arith-op> ::= <int-op>
    \alt <real-op>
    
    <int-op> ::= "~" <exp>
    \alt <exp> "+" <exp>
    \alt <exp> "-" <exp>
    \alt <exp> "*" <exp>
    \alt <exp> "/" <exp>
    \alt <exp> "
    
    <real-op> ::= <exp> "+." <exp>
    \alt <exp> "-." <exp>
    \alt <exp> "*." <exp>
    \alt <exp> "/." <exp>

    <compare-op> ::= <exp> "=" <sw-t>
    \alt <exp> "<>" <exp>
    \alt <exp> ">" <exp>
    \alt <exp> "<" <exp>
    \alt <exp> ">=" <exp>
    \alt <exp> "<=" <exp>
    
    <list-op> ::= <exp> "::" <exp>
    
    <assign> ::= <exp> ":=" <exp>
    
    <pattern-match> ::= "case" <exp> "of" <pattern> "=>" <exp> <matches-tail>
    
    <pattern> ::= <integer literal>
    \alt <string literal>
    \alt <real literal>
    \alt <identifier>
    \alt <identifier> <pattern>
    \alt "(" <opt-patterns> ")"
    \alt "{" <record-patterns> "}"
    \alt <pattern> "::" <pattern>
    
    <opt-patterns> ::= <pattern> <opt-patterns-tail>
    \alt $\epsilon$
    
    <opt-patterns-tail> ::= "," <pattern> <opt-pattern-tails>
    \alt $\epsilon$
    
    <record-patterns> ::= <identifier "=" <pattern> <record-patterns-tail>
    
    <record-patterns-tail> ::= "," <identifier> "=" <pattern> <rec-patterns-tail>
    \alt $\epsilon$
    
    <matches-tail> ::= <matches-tail> "|:" <pattern> "=>" <exp>
    
    <sequence> ::= "(" <exp> ")"
    \alt "(" <exp> ";" <exp> <sequence-tail> ")"
    
    <sequence-tail> ::= ";" <exp> <sequence-tail>
    \alt $\epsilon$
    
    <application> ::= <exp> <exp>
\end{grammar}

\subsection{Hardware Syntax Grammar}
\label{sec:hwsgrammar}
\begin{grammar}
    <exp> ::= <literal>
    \alt <access>
    \alt <let binding>
    \alt <operation>
    \alt <parameterization>

    <literal> ::= <bit literal>
    \alt <array literal>
    \alt <hardware record literal>
    
    <bit literal> ::= "\'b:" <binary-digit>
    
    <array literal> ::= "#[" <list-body> "]"
    \alt <gen-array>
    \alt <bit-array>
    
    <gen-array> ::= "#[" <exp> ";" "gen" <identifier> "=>" <exp> "]"
    
    <bit-array> ::= <exp> "'s:" <exp>
    \alt  <exp> "'u:" <exp>
    \alt  <exp> "'r:" <exp>
    
    <hardware record literal> ::= "#\{" <record-body> "\}"
    
    <access> ::= <array-access>
    
    <array-access> ::= <exp> "[:" <exp> ":]"
    
    <let binding> ::= "let" <decs> "in" <exp> "end"
    
    <decs> ::= <hdataty-dec> <decs>
    \alt <module-dec> <decs>
    \alt $\epsilon$
    
    <ty> ::= "#{" <ty-fields> "}"
    \alt <ty> "#*" <ty>
    \alt <ty> "[" <decimal-integer> "]"
    \alt <ty> "@" <decimal-integer>
    
    <hdataty-dec> ::= "hdatatype" <tyvars> <identifier> "=" <identifier> "of" <ty> <dataty-tail>
    
    <module-dec> ::= "module" <identifier> <mod-sw-param> <mod-param> "=" <exp>
    \alt "module" <identifier> <mod-sw-param> <mod-param> ":" <ty> "=" <exp>
    
    <mod-sw-param> ::= "<:" <fun-param> ">:"
    \alt $\epsilon$
    
    <mod-param> ::= <identifier>
    \alt "#(" <fun-param-list> ")"
    \alt "#{" <fun-param-list> "}"
    
    <operation> ::= <bit-op>
    \alt <unsw>
    
    <bit-op> ::= "!" <exp>
    \alt "|->" <exp>
    \alt "&->" <exp>
    \alt "^->" <exp>
    \alt <exp> "&&" <exp>
    \alt <exp> "||" <exp>
    \alt <exp> "^^" <exp>
    \alt <exp> "&" <exp>
    \alt <exp> "|" <exp>
    \alt <exp> "^" <exp>
    \alt <exp> "<<" <exp>
    \alt <exp> ">>" <exp>
    \alt <exp> ">>>" <exp>
    
    <unsw> ::= "unsw" <exp>
    
    <parameterization> ::= <exp> "<:" <exp> ">:"
\end{grammar}

\subsection{Typing Rules}
\label{sec:fulltyrules}
\tyrule{%
    \AxiomC{$\mathtt{x : T \in \Gamma}$}%
    \UnaryInfC{$\mathtt{\Gamma \vdash x : T}$}
    \DisplayProof
}{T-VAR}

\tyrule{%
    \AxiomC{$\mathtt{\Gamma, x : T_1 \vdash t_2 : T_2}$}%
    \UnaryInfC{$\mathtt{\Gamma \vdash \lambda x : T_1 . t_2 : T_1 \rightarrow T_2}$}
    \DisplayProof
}{T-ABS}

\tyrule{%
    \AxiomC{$\mathtt{\Gamma \vdash t_1 : T_1 \rightarrow T_2}$}%
    \AxiomC{$\mathtt{\Gamma \vdash t_2 : T_1}$}
    \BinaryInfC{$\mathtt{\Gamma \vdash t_1 t_2 : T_2}$}
    \DisplayProof
}{T-APP}

\tyrule{%
    \AxiomC{$\langle int \rangle$ : \texttt{int}}%
    \DisplayProof
}{T-INT}

\tyrule{%
    \AxiomC{$\langle real \rangle$ : \texttt{real}}%
    \DisplayProof
}{T-REAL}

\tyrule{%
    \AxiomC{$\langle string \rangle$ : \texttt{string}}%
    \DisplayProof
}{T-STRING}

\tyrule{%
    \AxiomC{$\langle bit \rangle$ : \texttt{bit}}%
    \DisplayProof
}{T-BIT}

\tyrule{%
    \AxiomC{\texttt{nil :} $\mathtt{T_S}$ \texttt{list}}%
    \DisplayProof
}{T-NIL}

\tyrule{%
    \AxiomC{$\mathtt{t_1 : int}$}%
    \UnaryInfC{\textasciitilde $\mathtt{t_1 : int}$}
    \DisplayProof
}{T-INT-NEG}

\tyrule{%
    \AxiomC{$\mathtt{t_1 : int}$}%
    \AxiomC{$\mathtt{t_2 : int}$}
    \BinaryInfC{$\mathtt{t_1}$ \texttt{+} $\mathtt{t_2 : int}$}
    \DisplayProof
}{T-INT-ADD}

\tyrule{%
    \AxiomC{$\mathtt{t_1 : int}$}%
    \AxiomC{$\mathtt{t_2 : int}$}
    \BinaryInfC{$\mathtt{t_1}$ \texttt{-} $\mathtt{t_2 : int}$}
    \DisplayProof
}{T-INT-SUB}

\tyrule{%
    \AxiomC{$\mathtt{t_1 : int}$}%
    \AxiomC{$\mathtt{t_2 : int}$}
    \BinaryInfC{$\mathtt{t_1}$ \texttt{*} $\mathtt{t_2 : int}$}
    \DisplayProof
}{T-INT-MUL}

\tyrule{%
    \AxiomC{$\mathtt{t_1 : int}$}%
    \AxiomC{$\mathtt{t_2 : int}$}
    \BinaryInfC{$\mathtt{t_1}$ \texttt{/} $\mathtt{t_2 : int}$}
    \DisplayProof
}{T-INT-DIV}

\tyrule{%
    \AxiomC{$\mathtt{t_1 : int}$}%
    \AxiomC{$\mathtt{t_2 : int}$}
    \BinaryInfC{$\mathtt{t_1}$ \texttt{\%} $\mathtt{t_2 : int}$}
    \DisplayProof
}{T-INT-MOD}

\tyrule{%
    \AxiomC{$\mathtt{t_1 : real}$}%
    \UnaryInfC{\textasciitilde $\mathtt{t_1 : real}$}
    \DisplayProof
}{T-REAL-NEG}

\tyrule{%
    \AxiomC{$\mathtt{t_1 : real}$}%
    \AxiomC{$\mathtt{t_2 : real}$}
    \BinaryInfC{$\mathtt{t_1}$ \texttt{+.} $\mathtt{t_2 : real}$}
    \DisplayProof
}{T-REAL-ADD}

\tyrule{%
    \AxiomC{$\mathtt{t_1 : real}$}%
    \AxiomC{$\mathtt{t_2 : real}$}
    \BinaryInfC{$\mathtt{t_1}$ \texttt{-.} $\mathtt{t_2 : real}$}
    \DisplayProof
}{T-REAL-SUB}

\tyrule{%
    \AxiomC{$\mathtt{t_1 : real}$}%
    \AxiomC{$\mathtt{t_2 : real}$}
    \BinaryInfC{$\mathtt{t_1}$ \texttt{*.} $\mathtt{t_2 : real}$}
    \DisplayProof
}{T-REAL-MUL}

\tyrule{%
    \AxiomC{$\mathtt{t_1 : real}$}%
    \AxiomC{$\mathtt{t_2 : real}$}
    \BinaryInfC{$\mathtt{t_1}$ \texttt{/.} $\mathtt{t_2 : real}$}
    \DisplayProof
}{T-REAL-DIV}

\tyrule{%
    \AxiomC{$\mathtt{t_1 : T_H}$}%
    \UnaryInfC{\texttt{!}$\mathtt{t_1 : T_H}$}
    \DisplayProof
}{T-BIT-NEG}

\tyrule{%
    \AxiomC{$\mathtt{t_1 : T_H}$}
    \AxiomC{$\mathtt{t_2 : T_H}$}
    \BinaryInfC{$\mathtt{t_1}$ \texttt{\&} $\mathtt{t_2 : T_H}$}
    \DisplayProof
}{T-AND}

\tyrule{%
    \AxiomC{$\mathtt{t_1 : T_H}$}
    \AxiomC{$\mathtt{t_2 : T_H}$}
    \BinaryInfC{$\mathtt{t_1}$ \texttt{|} $\mathtt{t_2 : T_H}$}
    \DisplayProof
}{T-OR}

\tyrule{%
    \AxiomC{$\mathtt{t_1 : T_H}$}
    \AxiomC{$\mathtt{t_2 : T_H}$}
    \BinaryInfC{$\mathtt{t_1}$ \texttt{$^\wedge$} $\mathtt{t_2 : T_H}$}
    \DisplayProof
}{T-XOR}

\tyrule{%
    \AxiomC{$\mathtt{t_1 : bit}$\texttt{[n]}}%
    \UnaryInfC{\texttt{\&->}$\mathtt{t_1 : bit}$}
    \DisplayProof
}{T-AND-RED}

\tyrule{%
    \AxiomC{$\mathtt{t_1 : bit}$\texttt{[n]}}%
    \UnaryInfC{\texttt{|->}$\mathtt{t_1 : bit}$}
    \DisplayProof
}{T-OR-RED}

\tyrule{%
    \AxiomC{$\mathtt{t_1 : bit}$\texttt{[n]}}%
    \UnaryInfC{\texttt{$^\wedge$->}$\mathtt{t_1 : bit}$}
    \DisplayProof
}{T-XOR-RED}

\tyrule{%
    \AxiomC{$\mathtt{t_1 : bit}$\texttt{[n]}}%
    \AxiomC{$\mathtt{t_2 : bit}$\texttt{[n]}}
    \BinaryInfC{$\mathtt{t_1}$ \texttt{\&}\texttt{\&} $\mathtt{t_2 : bit}$}
    \DisplayProof
}{T-LOG-AND}

\tyrule{%
    \AxiomC{$\mathtt{t_1 : bit}$\texttt{[n]}}%
    \AxiomC{$\mathtt{t_2 : bit}$\texttt{[n]}}
    \BinaryInfC{$\mathtt{t_1}$ \texttt{|}\texttt{|} $\mathtt{t_2 : bit}$}
    \DisplayProof
}{T-LOG-OR}

\tyrule{%
    \AxiomC{$\mathtt{t_1 : bit}$\texttt{[n]}}%
    \AxiomC{$\mathtt{t_2 : bit}$\texttt{[n]}}
    \BinaryInfC{$\mathtt{t_1}$ \texttt{$^\wedge$}\texttt{$^\wedge$} $\mathtt{t_2 : bit}$}
    \DisplayProof
}{T-LOG-XOR}

\tyrule{%
    \AxiomC{$\mathtt{t_1 : bit}$\texttt{[n]}}%
    \AxiomC{$\mathtt{t_2 : bit}$\texttt{[m]}}
    \BinaryInfC{$\mathtt{t_1}$ \texttt{<}\texttt{<} $\mathtt{t_2 : bit}$\texttt{[n]}}
    \DisplayProof
}{T-SLL}

\tyrule{%
    \AxiomC{$\mathtt{t_1 : bit}$\texttt{[n]}}%
    \AxiomC{$\mathtt{t_2 : bit}$\texttt{[m]}}
    \BinaryInfC{$\mathtt{t_1}$ \texttt{>}\texttt{>} $\mathtt{t_2 : bit}$\texttt{[n]}}
    \DisplayProof
}{T-SRL}

\tyrule{%
    \AxiomC{$\mathtt{t_1 : bit}$\texttt{[n]}}%
    \AxiomC{$\mathtt{t_2 : bit}$\texttt{[m]}}
    \BinaryInfC{$\mathtt{t_1}$ \texttt{>}\texttt{>}\texttt{>} $\mathtt{t_2 : bit}$\texttt{[n]}}
    \DisplayProof
}{T-SRA}

\tyrule{%
    \AxiomC{$\mathtt{t_1 : T_S}$}%
    \AxiomC{$\mathtt{t_2 : T_S}$}
    \BinaryInfC{$\mathtt{t_1}$ \texttt{=} $\mathtt{t_2 : int}$}
    \DisplayProof
}{T-EQ}

\tyrule{%
    \AxiomC{$\mathtt{t_1 : T_S}$}%
    \AxiomC{$\mathtt{t_2 : T_S}$}
    \BinaryInfC{$\mathtt{t_1}$ \texttt{<>} $\mathtt{t_2 : int}$}
    \DisplayProof
}{T-NEQ}

\tyrule{%
    \AxiomC{$\mathtt{t_1 : T_S}$}%
    \AxiomC{$\mathtt{t_2 : T_S}$}
    \BinaryInfC{$\mathtt{t_1}$ \texttt{>=} $\mathtt{t_2 : int}$}
    \DisplayProof
}{T-GEQ}

\tyrule{%
    \AxiomC{$\mathtt{t_1 : T_S}$}%
    \AxiomC{$\mathtt{t_2 : T_S}$}
    \BinaryInfC{$\mathtt{t_1}$ \texttt{>} $\mathtt{t_2 : int}$}
    \DisplayProof
}{T-GT}

\tyrule{%
    \AxiomC{$\mathtt{t_1 : T_S}$}%
    \AxiomC{$\mathtt{t_2 : T_S}$}
    \BinaryInfC{$\mathtt{t_1}$ \texttt{<=} $\mathtt{t_2 : int}$}
    \DisplayProof
}{T-LEQ}

\tyrule{%
    \AxiomC{$\mathtt{t_1 : T_S}$}%
    \AxiomC{$\mathtt{t_2 : T_S}$}
    \BinaryInfC{$\mathtt{t_1}$ \texttt{<} $\mathtt{t_2 : int}$}
    \DisplayProof
}{T-LT}

\tyrule{%
    \AxiomC{$\mathtt{t_1 : T_S}$}%
    \AxiomC{$\mathtt{t_2 : T_S}$ \texttt{list}}
    \BinaryInfC{$\mathtt{t_1 :: t_2 : T_S}$ \texttt{list}}
    \DisplayProof
}{T-CONS}

\tyrule{%
    \AxiomC{$\mathtt{t_1 : int}$}%
    \AxiomC{$\mathtt{t_2 : T}$}
    \AxiomC{$\mathtt{t_1 : T}$}
    \TrinaryInfC{\texttt{if} $\mathtt{t_1}$ \texttt{then} $\mathtt{t_2}$ \texttt{else} $\mathtt{t_3 : T}$}
    \DisplayProof
}{T-IFELSE}

\tyrule{%
    \AxiomC{$\mathtt{t_1 : T_S}$}%
    \UnaryInfC{\texttt{ref} $\mathtt{t_1 : T_S}$ \texttt{ref}}
    \DisplayProof
}{T-REF}

\tyrule{%
    \AxiomC{$\mathtt{t_1 : T_H}$}%
    \UnaryInfC{\texttt{sw} $\mathtt{t_1 : T_H}$ \texttt{sw}}
    \DisplayProof
}{T-SW}

\tyrule{%
    \AxiomC{\texttt{$\mathtt{t_1 : T_H}$ sw}}%
    \UnaryInfC{\texttt{unsw} $\mathtt{t_1 : T_H}$}
    \DisplayProof
}{T-UNSW}

\tyrule{%
    \AxiomC{$\mathtt{t_1 : T_S}$ \texttt{ref}}%
    \AxiomC{$\mathtt{t_2 : T_S}$}
    \BinaryInfC{$\mathtt{t_1}$ \texttt{:=} $\mathtt{t_2 : unit}$}
    \DisplayProof
}{T-ASSIGN}

\tyrule{%
    \AxiomC{$\mathtt{t_1 : T_H}$\texttt{[n]}}%
    \AxiomC{$\mathtt{t_2 : int}$}
    \BinaryInfC{$\mathtt{t_1}$\texttt{[}$\mathtt{t_2}$\texttt{]} $\mathtt{: T_H}$}
    \DisplayProof
}{T-ARR-ACC}

\tyrule{%
    \AxiomC{$\mathtt{t_1 : T_S}$ \texttt{ref}}%
    \UnaryInfC{\texttt{\$}$\mathtt{t_1 : T_S}$}
    \DisplayProof
}{T-DEREF}

\tyrule{%
    \AxiomC{for each $i$ $\mathtt{ \Gamma \vdash t_i : T_i}$}%
    \UnaryInfC{$\mathtt{\Gamma \vdash \{l_i = t_{i}^{i \in 1..n}\} : \{l_i : T_{i}^{i \in 1..n}\}}$}
    \DisplayProof
}{T-RCD}

\tyrule{%
    \AxiomC{$\mathtt{\Gamma \vdash t_1 : \{l_i : T_{i}^{i \in 1..n}\}}$}%
    \UnaryInfC{$\mathtt{\Gamma \vdash}$ \texttt{\#}$\mathtt{l_j}$ $\mathtt{t_1 : T_{j}}$}
    \DisplayProof
}{T-PROJ}

\tyrule{%
    \AxiomC{$\mathtt{\Gamma, \tau \vdash t_1 : T_1}$}%
    \noLine
    \UnaryInfC{$\mathtt{\Gamma, \tau, x_1 : T_1 \vdash}$\texttt{let} $\mathtt{(x_i}$ \texttt{=} $\mathtt{t_i)^{i \in 2..n}}$ \texttt{in} $\mathtt{t_0}$ \texttt{end}$\mathtt{: T_0}$}
    \UnaryInfC{\texttt{let} $\mathtt{(x_i}$ \texttt{=} $\mathtt{t_i)^{i \in 1..n}}$ \texttt{in} $\mathtt{t_0}$ \texttt{end} $\mathtt{: T_0}$}
    \DisplayProof
}{T-LET}

\tyrule{%
    \AxiomC{$\mathtt{D : \langle C_i:T_i \rangle^{i \in 1..n} \in \tau}$}
    \AxiomC{$\mathtt{\Gamma \vdash t : T_j}$}
    \BinaryInfC{$\mathtt{\Gamma, \tau \vdash C_j}$ \texttt{t} $\mathtt{: D}$}
    \DisplayProof
}{T-DATATY}

\tyrule{%
    \AxiomC{$\mathtt{\Gamma \vdash t_0 : \langle C_i : T_i \rangle ^{i \in 1..n}}$}
    \noLine
    \UnaryInfC{for each $i$ $\mathtt{ \Gamma, x_i : T_i \vdash t_i : T}$}
    \UnaryInfC{$\mathtt{\Gamma \vdash}$ \texttt{case} $\mathtt{t_0}$ \texttt{of} $\mathtt{C_i x_i}$ \texttt{=>} $\mathtt{t_i ^{\hspace{1mm} i \in 1..n} : T}$}
    \DisplayProof
}{T-CASE}

\subsection{Evaluation Rules}
\label{sec:fullevrules}
\tyrule{%
    \AxiomC{$\mathtt{t_1} \longrightarrow \mathtt{t_1'}$}
    \UnaryInfC{$\mathtt{t_1 t_2} \longrightarrow \mathtt{t_1' t_2}$}
    \DisplayProof
}{E-APP1}

\tyrule{%
    \AxiomC{$\mathtt{t_2} \longrightarrow \mathtt{t_2'}$}
    \UnaryInfC{$\mathtt{v_1 t_2} \longrightarrow \mathtt{v_1 t_2'}$}
    \DisplayProof
}{E-APP2}

\tyrule{%
    \AxiomC{$\mathtt{(\lambda x.t_1)v_1} \longrightarrow \mathtt{[x \mapsto v_1]t_1}$}
    \DisplayProof
}{E-APPABS}

\tyrule{%
    \AxiomC{$\mathtt{t_1 \longrightarrow t_1'}$}
    \UnaryInfC{\texttt{if} $\mathtt{t_1}$ \texttt{then} $\mathtt{t_2}$ \texttt{else} $\mathtt{t_3}$}
    \noLine
    \UnaryInfC{$\longrightarrow$ \texttt{if} $\mathtt{t_1'}$ \texttt{then} $\mathtt{t_2}$ \texttt{else} $\mathtt{t_3}$}
    \DisplayProof
}{E-IFELSE}

\tyrule{%
    \AxiomC{$\mathtt{v_1 : int}$}
    \AxiomC{$\mathtt{v_1 \neq 0}$}
    \BinaryInfC{\texttt{if} $\mathtt{v_1}$ \texttt{then} $\mathtt{t_2}$ \texttt{else} $\mathtt{t_3} \longrightarrow \mathtt{t_2}$}
    \DisplayProof
}{E-IFELSE-T}

\tyrule{%
    \AxiomC{\texttt{if 0 then} $\mathtt{t_2}$ \texttt{else} $\mathtt{t_3} \longrightarrow \mathtt{t_3}$}
    \DisplayProof
}{E-IFELSE-F}

\tyrule{%
    \AxiomC{$\mathtt{t_1 \longrightarrow t_1'}$}
    \UnaryInfC{\textasciitilde $\mathtt{t_1} \longrightarrow$ \textasciitilde $\mathtt{t_1'}$}
    \DisplayProof
}{E-NEG}

\tyrule{%
    \AxiomC{$\mathtt{t_1 \longrightarrow t_1'}$}
    \UnaryInfC{$\mathtt{t_1}$ \texttt{+} $\mathtt{t_2} \longrightarrow \mathtt{t_1'}$ \texttt{+} $\mathtt{t_2}$}
    \DisplayProof
}{E-INT-ADD1}

\tyrule{%
    \AxiomC{$\mathtt{t_2 \longrightarrow t_2'}$}
    \UnaryInfC{$\mathtt{v_1}$ \texttt{+} $\mathtt{t_2} \longrightarrow \mathtt{v_1}$ \texttt{+} $\mathtt{t_2'}$}
    \DisplayProof
}{E-INT-ADD2}

\tyrule{%
    \AxiomC{$\mathtt{t_1 \longrightarrow t_1'}$}
    \UnaryInfC{$\mathtt{t_1}$ \texttt{-} $\mathtt{t_2} \longrightarrow \mathtt{t_1'}$ \texttt{-} $\mathtt{t_2}$}
    \DisplayProof
}{E-INT-SUB1}

\tyrule{%
    \AxiomC{$\mathtt{t_2 \longrightarrow t_2'}$}
    \UnaryInfC{$\mathtt{v_1}$ \texttt{-} $\mathtt{t_2} \longrightarrow \mathtt{v_1}$ \texttt{-} $\mathtt{t_2'}$}
    \DisplayProof
}{E-INT-SUB2}

\tyrule{%
    \AxiomC{$\mathtt{t_1 \longrightarrow t_1'}$}
    \UnaryInfC{$\mathtt{t_1}$ \texttt{*} $\mathtt{t_2} \longrightarrow \mathtt{t_1'}$ \texttt{*} $\mathtt{t_2}$}
    \DisplayProof
}{E-INT-MUL1}

\tyrule{%
    \AxiomC{$\mathtt{t_2 \longrightarrow t_2'}$}
    \UnaryInfC{$\mathtt{v_1}$ \texttt{*} $\mathtt{t_2} \longrightarrow \mathtt{v_1}$ \texttt{*} $\mathtt{t_2'}$}
    \DisplayProof
}{E-INT-MUL2}

\tyrule{%
    \AxiomC{$\mathtt{t_1 \longrightarrow t_1'}$}
    \UnaryInfC{$\mathtt{t_1}$ \texttt{/} $\mathtt{t_2} \longrightarrow \mathtt{t_1'}$ \texttt{/} $\mathtt{t_2}$}
    \DisplayProof
}{E-INT-DIV1}

\tyrule{%
    \AxiomC{$\mathtt{t_2 \longrightarrow t_2'}$}
    \UnaryInfC{$\mathtt{v_1}$ \texttt{/} $\mathtt{t_2} \longrightarrow \mathtt{v_1}$ \texttt{/} $\mathtt{t_2'}$}
    \DisplayProof
}{E-INT-DIV2}

\tyrule{%
    \AxiomC{$\mathtt{t_1 \longrightarrow t_1'}$}
    \UnaryInfC{$\mathtt{t_1}$ \texttt{\%} $\mathtt{t_2} \longrightarrow \mathtt{t_1'}$ \texttt{\%} $\mathtt{t_2}$}
    \DisplayProof
}{E-INT-MOD1}

\tyrule{%
    \AxiomC{$\mathtt{t_2 \longrightarrow t_2'}$}
    \UnaryInfC{$\mathtt{v_1}$ \texttt{\%} $\mathtt{t_2} \longrightarrow \mathtt{v_1}$ \texttt{\%} $\mathtt{t_2'}$}
    \DisplayProof
}{E-INT-MOD2}

\tyrule{%
    \AxiomC{$\mathtt{t_1 \longrightarrow t_1'}$}
    \UnaryInfC{$\mathtt{t_1}$ \texttt{+.} $\mathtt{t_2} \longrightarrow \mathtt{t_1'}$ \texttt{+.} $\mathtt{t_2}$}
    \DisplayProof
}{E-REAL-ADD1}

\tyrule{%
    \AxiomC{$\mathtt{t_2 \longrightarrow t_2'}$}
    \UnaryInfC{$\mathtt{v_1}$ \texttt{+.} $\mathtt{t_2} \longrightarrow \mathtt{v_1}$ \texttt{+.} $\mathtt{t_2'}$}
    \DisplayProof
}{E-REAL-ADD2}

\tyrule{%
    \AxiomC{$\mathtt{t_1 \longrightarrow t_1'}$}
    \UnaryInfC{$\mathtt{t_1}$ \texttt{-.} $\mathtt{t_2} \longrightarrow \mathtt{t_1'}$ \texttt{-.} $\mathtt{t_2}$}
    \DisplayProof
}{E-REAL-SUB1}

\tyrule{%
    \AxiomC{$\mathtt{t_2 \longrightarrow t_2'}$}
    \UnaryInfC{$\mathtt{v_1}$ \texttt{-.} $\mathtt{t_2} \longrightarrow \mathtt{v_1}$ \texttt{-.} $\mathtt{t_2'}$}
    \DisplayProof
}{E-REAL-SUB2}

\tyrule{%
    \AxiomC{$\mathtt{t_1 \longrightarrow t_1'}$}
    \UnaryInfC{$\mathtt{t_1}$ \texttt{*.} $\mathtt{t_2} \longrightarrow \mathtt{t_1'}$ \texttt{*.} $\mathtt{t_2}$}
    \DisplayProof
}{E-REAL-MUL1}

\tyrule{%
    \AxiomC{$\mathtt{t_2 \longrightarrow t_2'}$}
    \UnaryInfC{$\mathtt{v_1}$ \texttt{*.} $\mathtt{t_2} \longrightarrow \mathtt{v_1}$ \texttt{*.} $\mathtt{t_2'}$}
    \DisplayProof
}{E-REAL-MUL2}

\tyrule{%
    \AxiomC{$\mathtt{t_1 \longrightarrow t_1'}$}
    \UnaryInfC{$\mathtt{t_1}$ \texttt{/.} $\mathtt{t_2} \longrightarrow \mathtt{t_1'}$ \texttt{/.} $\mathtt{t_2}$}
    \DisplayProof
}{E-REAL-DIV1}

\tyrule{%
    \AxiomC{$\mathtt{t_2 \longrightarrow t_2'}$}
    \UnaryInfC{$\mathtt{v_1}$ \texttt{/.} $\mathtt{t_2} \longrightarrow \mathtt{v_1}$ \texttt{/.} $\mathtt{t_2'}$}
    \DisplayProof
}{E-REAL-DIV2}

\tyrule{%
    \AxiomC{$\mathtt{t_j} \longrightarrow \mathtt{t_j'}$}
    \UnaryInfC{\texttt{\#[$\mathtt{v_i}^{i \in 1..j-1}$, $\mathtt{t_j}$, $\mathtt{t_k}^{k \in j+1..n}$]}}
    \noLine
    \UnaryInfC{\texttt{$\longrightarrow$ \#[$\mathtt{v_i}^{i \in 1..j-1}$, $\mathtt{t_j'}$, $\mathtt{t_k}^{k \in j+1..n}$]}}
    \DisplayProof
}{E-ARR}

\tyrule{%
    \AxiomC{$\mathtt{t_1} \longrightarrow \mathtt{t_1'}$}
    \UnaryInfC{\texttt{!$\mathtt{t_1} \longrightarrow$ !$\mathtt{t_1'}$}}
    \DisplayProof
}{E-BIT-NEG1}

\tyrule{%
    \AxiomC{\texttt{$\mathtt{v : T_H}$[n]}}
    \UnaryInfC{\texttt{!$\mathtt{v} \longrightarrow$ \#[!$\mathtt{v}$[i]]$^{i \in 0..n-1}$}}
    \DisplayProof
}{E-BIT-NEG2}

\tyrule{%
    \AxiomC{\texttt{$\mathtt{v : }$ \#\{$\mathtt{l_i : T_{Hi}}^{i \in 1..n}$\}}}
    \UnaryInfC{\texttt{!$\mathtt{v} \longrightarrow$ \#\{$\mathtt{l_i = }$!\#$\mathtt{l_i}$ $\mathtt{v}$\}$^{i \in 1..n}$}}
    \DisplayProof
}{E-BIT-NEG3}

\tyrule{%
    \AxiomC{$\mathtt{v : bit}$}
    \UnaryInfC{\texttt{!$\mathtt{v} \longrightarrow$
    \begin{tikzpicture}
        \node (x) at (0, 0) {$\mathtt{v}$};
        \node[not gate US, draw] at ($(x) + (0.8, 0)$) (notx) {};
        \draw (x) -- (notx.input);
        \draw (notx.output) -- node[above]{$\mathtt{\bar v_1}$} ($(notx) + (1.5, 0)$);
    \end{tikzpicture}}}
    \DisplayProof
}{E-BIT-NEG}

\tyrule{%
    \AxiomC{$\mathtt{t_1} \longrightarrow \mathtt{t_1'}$}
    \UnaryInfC{\texttt{$\mathtt{t_1}$ \& $\mathtt{t_2} \longrightarrow \mathtt{t_1'}$ \& $\mathtt{t_2}$}}
    \DisplayProof
}{E-AND1}

\tyrule{%
    \AxiomC{$\mathtt{t_2} \longrightarrow \mathtt{t_2'}$}
    \UnaryInfC{\texttt{$\mathtt{v_1}$ \& $\mathtt{t_2} \longrightarrow \mathtt{v_1}$ \& $\mathtt{t_2'}$}}
    \DisplayProof
}{E-AND2}

\tyrule{%
    \AxiomC{\texttt{$\mathtt{v_1 : bit}$[n]}}
    \AxiomC{\texttt{$\mathtt{v_2 : bit}$[n]}}
    \BinaryInfC{\texttt{$\mathtt{v_1}$ \& $\mathtt{v_2} \longrightarrow$ \#[$\mathtt{v_{1, i}}$ \& $\mathtt{v_{2, i}}$]$^{i \in 0..n-1}$}}
    \DisplayProof
}{E-AND3}

\tyrule{%
    \AxiomC{\texttt{$\mathtt{v_1 : }$ \#\{$\mathtt{l_i : T_{Hi}}^{i \in 1..n}$\}}}
    \noLine
    \UnaryInfC{\texttt{$\mathtt{v_2 : }$ \#\{$\mathtt{l_i : T_{Hi}}^{i \in 1..n}$\}}}
    \UnaryInfC{\texttt{$\mathtt{v_1}$ \& $\mathtt{v_2} \longrightarrow$ \#\{$\mathtt{l_i}$ = \#$\mathtt{l_i}$ $\mathtt{v_1}$ \& \#$\mathtt{l_i}$ $\mathtt{v_2}$\}$^{i \in 1..n}$}}
    \DisplayProof
}{E-AND4}

\tyrule{%
    \AxiomC{$\mathtt{t_1} \longrightarrow \mathtt{t_1'}$}
    \UnaryInfC{\texttt{$\mathtt{t_1}$ | $\mathtt{t_2} \longrightarrow \mathtt{t_1'}$ | $\mathtt{t_2}$}}
    \DisplayProof
}{E-OR1}

\tyrule{%
    \AxiomC{$\mathtt{t_2} \longrightarrow \mathtt{t_2'}$}
    \UnaryInfC{\texttt{$\mathtt{v_1}$ | $\mathtt{t_2} \longrightarrow \mathtt{v_1}$ | $\mathtt{t_2'}$}}
    \DisplayProof
}{E-OR2}

\tyrule{%
    \AxiomC{\texttt{$\mathtt{v_1 : bit}$[n]}}
    \AxiomC{\texttt{$\mathtt{v_2 : bit}$[n]}}
    \BinaryInfC{\texttt{$\mathtt{v_1}$ | $\mathtt{v_2} \longrightarrow$ \#[$\mathtt{v_{1, i}}$ | $\mathtt{v_{2, i}}$]$^{i \in 0..n-1}$}}
    \DisplayProof
}{E-OR3}

\tyrule{%
    \AxiomC{\texttt{$\mathtt{v_1 : }$ \#\{$\mathtt{l_i : T_{Hi}}^{i \in 1..n}$\}}}
    \noLine
    \UnaryInfC{\texttt{$\mathtt{v_2 : }$ \#\{$\mathtt{l_i : T_{Hi}}^{i \in 1..n}$\}}}
    \UnaryInfC{\texttt{$\mathtt{v_1}$ | $\mathtt{v_2} \longrightarrow$ \#\{$\mathtt{l_i}$ = \#$\mathtt{l_i}$ $\mathtt{v_1}$ | \#$\mathtt{l_i}$ $\mathtt{v_2}$\}$^{i \in 1..n}$}}
    \DisplayProof
}{E-OR4}

\tyrule{%
    \AxiomC{$\mathtt{t_1} \longrightarrow \mathtt{t_1'}$}
    \UnaryInfC{\texttt{$\mathtt{t_1}$ $^\wedge$ $\mathtt{t_2} \longrightarrow \mathtt{t_1'}$ $^\wedge$ $\mathtt{t_2}$}}
    \DisplayProof
}{E-XOR1}

\tyrule{%
    \AxiomC{$\mathtt{t_2} \longrightarrow \mathtt{t_2'}$}
    \UnaryInfC{\texttt{$\mathtt{v_1}$ $^\wedge$ $\mathtt{t_2} \longrightarrow \mathtt{v_1}$ $^\wedge$ $\mathtt{t_2'}$}}
    \DisplayProof
}{E-XOR2}

\tyrule{%
    \AxiomC{\texttt{$\mathtt{v_1 : bit}$[n]}}
    \AxiomC{\texttt{$\mathtt{v_2 : bit}$[n]}}
    \BinaryInfC{\texttt{$\mathtt{v_1}$ $^\wedge$ $\mathtt{v_2} \longrightarrow$ \#[$\mathtt{v_{1, i}}$ $^\wedge$ $\mathtt{v_{2, i}}$]$^{i \in 0..n-1}$}}
    \DisplayProof
}{E-XOR3}

\tyrule{%
    \AxiomC{\texttt{$\mathtt{v_1 : }$ \#\{$\mathtt{l_i : T_{Hi}}^{i \in 1..n}$\}}}
    \noLine
    \UnaryInfC{\texttt{$\mathtt{v_2 : }$ \#\{$\mathtt{l_i : T_{Hi}}^{i \in 1..n}$\}}}
    \UnaryInfC{\texttt{$\mathtt{v_1}$ $^\wedge$ $\mathtt{v_2} \longrightarrow$ \#\{$\mathtt{l_i}$ = \#$\mathtt{l_i}$ $\mathtt{v_1}$ $^\wedge$ \#$\mathtt{l_i}$ $\mathtt{v_2}$\}$^{i \in 1..n}$}}
    \DisplayProof
}{E-XOR4}

\tyrule{%
    \AxiomC{$\mathtt{t_1} \longrightarrow \mathtt{t_1'}$}
    \UnaryInfC{\texttt{\&->$\mathtt{t_1} \longrightarrow$ \&->$\mathtt{t_1'}$}}
    \DisplayProof
}{E-AND-RED1}

\tyrule{%
    \AxiomC{$\mathtt{v : bit}$[n]}
    \UnaryInfC{\texttt{\&->}$\mathtt{v} \longrightarrow$
    \begin{tikzpicture}
        \node (v0) at (0, 0.9) {\texttt{$\mathtt{v}$[0]}};
        \node (vdots) at (0, 0.55) {$\vdots$};
        \node (vn) at (0, 0) {\texttt{$\mathtt{v}$[n-1]}};
        \node[and gate US, draw, logic gate inputs=nnnnn, scale=1.1] at ($(vdots) + (1.5, -0.1)$) (xand) {};
        \draw (v0) -| (xand.input 1);
        \draw (vn) -| (xand.input 5);
        \draw (xand.output) -- ($(xand) + (1, 0)$);
        \draw ($(xand.input 2) + (-0.5, 0)$) -- (xand.input 2); \draw ($(xand.input 3) + (-0.5, 0)$) -- (xand.input 3);
        \draw ($(xand.input 4) + (-0.5, 0)$) -- (xand.input 4);
    \end{tikzpicture}}
    \DisplayProof
}{E-AND-RED}

\tyrule{%
    \AxiomC{$\mathtt{t_1} \longrightarrow \mathtt{t_1'}$}
    \UnaryInfC{\texttt{|->$\mathtt{t_1} \longrightarrow$ |->$\mathtt{t_1'}$}}
    \DisplayProof
}{E-OR-RED1}

\tyrule{%
    \AxiomC{$\mathtt{v : bit}$[n]}
    \UnaryInfC{\texttt{|->}$\mathtt{v} \longrightarrow$
    \begin{tikzpicture}
        \node (v0) at (0, 0.9) {\texttt{$\mathtt{v}$[0]}};
        \node (vdots) at (0, 0.55) {$\vdots$};
        \node (vn) at (0, 0) {\texttt{$\mathtt{v}$[n-1]}};
        \node[or gate US, draw, logic gate inputs=nnnnn, scale=1.1] at ($(vdots) + (1.5, -0.1)$) (xand) {};
        \draw (v0) -| (xand.input 1);
        \draw (vn) -| (xand.input 5);
        \draw (xand.output) -- ($(xand) + (1, 0)$);
        \draw ($(xand.input 2) + (-0.5, 0)$) -- (xand.input 2); \draw ($(xand.input 3) + (-0.5, 0)$) -- (xand.input 3);
        \draw ($(xand.input 4) + (-0.5, 0)$) -- (xand.input 4);
    \end{tikzpicture}}
    \DisplayProof
}{E-OR-RED}

\tyrule{%
    \AxiomC{$\mathtt{t_1} \longrightarrow \mathtt{t_1'}$}
    \UnaryInfC{\texttt{$^\wedge$->$\mathtt{t_1} \longrightarrow$ $^\wedge$->$\mathtt{t_1'}$}}
    \DisplayProof
}{E-XOR-RED1}

\tyrule{%
    \AxiomC{$\mathtt{v : bit}$[n]}
    \UnaryInfC{\texttt{$^\wedge$->}$\mathtt{v} \longrightarrow$
    \begin{tikzpicture}
        \node (v0) at (0, 0.9) {\texttt{$\mathtt{v}$[0]}};
        \node (vdots) at (0, 0.55) {$\vdots$};
        \node (vn) at (0, 0) {\texttt{$\mathtt{v}$[n-1]}};
        \node[xor gate US, draw, scale=2] at ($(vdots) + (1.5, -0.1)$) (xand) {};
        \draw ($(xand.input 1) + (-0.5, 0)$) -- (xand.input 1);
        \draw ($(xand.input 2) + (-0.5, 0)$) -- (xand.input 2);
        \draw ($(xand.west) + (-0.5, 0)$) -- (xand.west);
        \draw ($(xand.north west) + (-0.5, -0.1)$) -- ($(xand.north west) + (0.05, -0.1)$);
        \draw ($(xand.south west) + (-0.5, 0.1)$) -- ($(xand.south west) + (0.05, 0.1)$);
        \draw (xand.output) -- ($(xand) + (1, 0)$);
    \end{tikzpicture}}
    \DisplayProof
}{E-XOR-RED}

\tyrule{%
    \AxiomC{$\mathtt{t_1 \longrightarrow t_1'}$}
    \UnaryInfC{$\mathtt{t_1}$ \texttt{<}\texttt{<} $\mathtt{t_2} \longrightarrow \mathtt{t_1'}$ \texttt{<}\texttt{<} $\mathtt{t_2}$}
    \DisplayProof
}{E-SLL1}

\tyrule{%
    \AxiomC{$\mathtt{t_2 \longrightarrow t_2'}$}
    \UnaryInfC{$\mathtt{v_1}$ \texttt{<}\texttt{<} $\mathtt{t_2} \longrightarrow \mathtt{v_1}$ \texttt{<}\texttt{<} $\mathtt{t_2'}$}
    \DisplayProof
}{E-SLL2}

\tyrule{%
    \AxiomC{$\mathtt{t_1 \longrightarrow t_1'}$}
    \UnaryInfC{$\mathtt{t_1}$ \texttt{>}\texttt{>} $\mathtt{t_2} \longrightarrow \mathtt{t_1'}$ \texttt{>}\texttt{>} $\mathtt{t_2}$}
    \DisplayProof
}{E-SRL1}

\tyrule{%
    \AxiomC{$\mathtt{t_2 \longrightarrow t_2'}$}
    \UnaryInfC{$\mathtt{v_1}$ \texttt{>}\texttt{>} $\mathtt{t_2} \longrightarrow \mathtt{v_1}$ \texttt{>}\texttt{>} $\mathtt{t_2'}$}
    \DisplayProof
}{E-SRL2}

\tyrule{%
    \AxiomC{$\mathtt{t_1 \longrightarrow t_1'}$}
    \UnaryInfC{$\mathtt{t_1}$ \texttt{>}\texttt{>}\texttt{>} $\mathtt{t_2} \longrightarrow \mathtt{t_1'}$ \texttt{>}\texttt{>}\texttt{>} $\mathtt{t_2}$}
    \DisplayProof
}{E-SRA1}

\tyrule{%
    \AxiomC{$\mathtt{t_2 \longrightarrow t_2'}$}
    \UnaryInfC{$\mathtt{v_1}$ \texttt{>}\texttt{>}\texttt{>} $\mathtt{t_2} \longrightarrow \mathtt{v_1}$ \texttt{>}\texttt{>}\texttt{>} $\mathtt{t_2'}$}
    \DisplayProof
}{E-SRA2}

\tyrule{%
    \AxiomC{$\mathtt{t_1 \longrightarrow t_1'}$}
    \UnaryInfC{$\mathtt{t_1}$ \texttt{=} $\mathtt{t_2} \longrightarrow \mathtt{t_1'}$ \texttt{=} $\mathtt{t_2}$}
    \DisplayProof
}{E-EQ1}

\tyrule{%
    \AxiomC{$\mathtt{t_2 \longrightarrow t_2'}$}
    \UnaryInfC{$\mathtt{v_1}$ \texttt{=} $\mathtt{t_2} \longrightarrow \mathtt{v_1}$ \texttt{=} $\mathtt{t_2'}$}
    \DisplayProof
}{E-EQ2}

\tyrule{%
    \AxiomC{$\mathtt{v_1}$ \texttt{=} $\mathtt{v_2} \longrightarrow$ subject to semantics of type}
    \DisplayProof
}{E-EQ}

\tyrule{%
    \AxiomC{$\mathtt{t_1 \longrightarrow t_1'}$}
    \UnaryInfC{$\mathtt{t_1}$ \texttt{<>} $\mathtt{t_2} \longrightarrow \mathtt{t_1'}$ \texttt{<>} $\mathtt{t_2}$}
    \DisplayProof
}{E-NEQ1}

\tyrule{%
    \AxiomC{$\mathtt{t_2 \longrightarrow t_2'}$}
    \UnaryInfC{$\mathtt{v_1}$ \texttt{<>} $\mathtt{t_2} \longrightarrow \mathtt{v_1}$ \texttt{<>} $\mathtt{t_2'}$}
    \DisplayProof
}{E-NEQ2}

\tyrule{%
    \AxiomC{$\mathtt{v_1}$ \texttt{<>} $\mathtt{v_2} \longrightarrow$ subject to semantics of type}
    \DisplayProof
}{E-NEQ}

\tyrule{%
    \AxiomC{$\mathtt{t_1 \longrightarrow t_1'}$}
    \UnaryInfC{$\mathtt{t_1}$ \texttt{<} $\mathtt{t_2} \longrightarrow \mathtt{t_1'}$ \texttt{<} $\mathtt{t_2}$}
    \DisplayProof
}{E-LT1}

\tyrule{%
    \AxiomC{$\mathtt{t_2 \longrightarrow t_2'}$}
    \UnaryInfC{$\mathtt{v_1}$ \texttt{<} $\mathtt{t_2} \longrightarrow \mathtt{v_1}$ \texttt{<} $\mathtt{t_2'}$}
    \DisplayProof
}{E-LT2}

\tyrule{%
    \AxiomC{$\mathtt{v_1}$ \texttt{<} $\mathtt{v_2} \longrightarrow$ subject to semantics of type}
    \DisplayProof
}{E-LT}

\tyrule{%
    \AxiomC{$\mathtt{t_1 \longrightarrow t_1'}$}
    \UnaryInfC{$\mathtt{t_1}$ \texttt{<=} $\mathtt{t_2} \longrightarrow \mathtt{t_1'}$ \texttt{<=} $\mathtt{t_2}$}
    \DisplayProof
}{E-LEQ1}

\tyrule{%
    \AxiomC{$\mathtt{t_2 \longrightarrow t_2'}$}
    \UnaryInfC{$\mathtt{v_1}$ \texttt{<=} $\mathtt{t_2} \longrightarrow \mathtt{v_1}$ \texttt{<=} $\mathtt{t_2'}$}
    \DisplayProof
}{E-LEQ2}

\tyrule{%
    \AxiomC{$\mathtt{v_1}$ \texttt{<=} $\mathtt{v_2} \longrightarrow$ subject to semantics of type}
    \DisplayProof
}{E-LEQ}

\tyrule{%
    \AxiomC{$\mathtt{t_1 \longrightarrow t_1'}$}
    \UnaryInfC{$\mathtt{t_1}$ \texttt{>} $\mathtt{t_2} \longrightarrow \mathtt{t_1'}$ \texttt{>} $\mathtt{t_2}$}
    \DisplayProof
}{E-GT1}

\tyrule{%
    \AxiomC{$\mathtt{t_2 \longrightarrow t_2'}$}
    \UnaryInfC{$\mathtt{v_1}$ \texttt{>} $\mathtt{t_2} \longrightarrow \mathtt{v_1}$ \texttt{>} $\mathtt{t_2'}$}
    \DisplayProof
}{E-GT2}

\tyrule{%
    \AxiomC{$\mathtt{v_1}$ \texttt{>} $\mathtt{v_2} \longrightarrow$ subject to semantics of type}
    \DisplayProof
}{E-GT}

\tyrule{%
    \AxiomC{$\mathtt{t_1 \longrightarrow t_1'}$}
    \UnaryInfC{$\mathtt{t_1}$ \texttt{>=} $\mathtt{t_2} \longrightarrow \mathtt{t_1'}$ \texttt{>=} $\mathtt{t_2}$}
    \DisplayProof
}{E-GEQ1}

\tyrule{%
    \AxiomC{$\mathtt{t_2 \longrightarrow t_2'}$}
    \UnaryInfC{$\mathtt{v_1}$ \texttt{>=} $\mathtt{t_2} \longrightarrow \mathtt{v_1}$ \texttt{>=} $\mathtt{t_2'}$}
    \DisplayProof
}{E-GEQ2}

\tyrule{%
    \AxiomC{$\mathtt{v_1}$ \texttt{>=} $\mathtt{v_2} \longrightarrow$ subject to semantics of type}
    \DisplayProof
}{E-GEQ}

\tyrule{%
    \AxiomC{$\mathtt{t_1 \longrightarrow t_1'}$}
    \UnaryInfC{$\mathtt{t_1}$ \texttt{::} $\mathtt{t_2} \longrightarrow \mathtt{t_1'}$ \texttt{::} $\mathtt{t_2}$}
    \DisplayProof
}{E-CONS1}

\tyrule{%
    \AxiomC{$\mathtt{t_2 \longrightarrow t_2'}$}
    \UnaryInfC{$\mathtt{v_1}$ \texttt{::} $\mathtt{t_2} \longrightarrow \mathtt{v_1}$ \texttt{::} $\mathtt{t_2'}$}
    \DisplayProof
}{E-CONS2}

\tyrule{%
    \AxiomC{\texttt{\#$\mathtt{l_j}$ $\mathtt{\{l_i=v_i^{i \in 1..n}\}} \longrightarrow \mathtt{v_j}$}}
    \DisplayProof
}{E-PROJ-RCD}

\tyrule{%
    \AxiomC{$\mathtt{t_1 \longrightarrow t_1'}$}
    \UnaryInfC{\texttt{\#$\mathtt{l}$ $\mathtt{t_1} \longrightarrow$ \#$\mathtt{l}$ $\mathtt{t_1'}$}}
    \DisplayProof
}{E-PROJ}

\tyrule{%
    \AxiomC{$\mathtt{t_j \longrightarrow t_j'}$}
    \UnaryInfC{$\mathtt{\{l_i=v_i^{i \in 1..j-1}, l_j=t_j, l_k=t_k^{k \in j+1..n}\}}$}
    \noLine
    \UnaryInfC{$\longrightarrow \mathtt{\{l_i=v_i^{i \in 1..j-1}, l_j=t_j', l_k=t_k^{k \in j+1..n}\}}$}
    \DisplayProof
}{E-RCD}

\tyrule{%
    \AxiomC{$\mathtt{\{l_i=v_i^{i \in 1..j-1}, l_j=v_j, l_k=t_k^{k \in j+1..n}\}}$}
    \noLine
    \UnaryInfC{$\longrightarrow \mathtt{\{l_i=v_i^{i \in 1..j}, l_k=t_k^{k \in j+1..n}\}}$}
    \DisplayProof
}{E-RCDV}

\tyrule{%
    \AxiomC{$\mathtt{t_1}$ $|$ $\mu \longrightarrow \mathtt{t_1'}$ $|$ $\mu'$}
    \UnaryInfC{\texttt{\$}$\mathtt{t_1}$ $|$ $\mu \longrightarrow$ \texttt{\$}$\mathtt{t_1'}$ $|$ $\mu'$}
    \DisplayProof
}{E-DEREF}

\tyrule{%
    \AxiomC{$\mu(l)=\mathtt{v}$}
    \UnaryInfC{\texttt{\$}$l$ $|$ $\mu \longrightarrow$ $\mathtt{v}$ $|$ $\mu$}
    \DisplayProof
}{E-DEREFLOC}

\tyrule{%
    \AxiomC{$\mathtt{t_1}$ $|$ $\mu \longrightarrow \mathtt{t_1'}$ $|$ $\mu'$}
    \UnaryInfC{$\mathtt{t_1}$ \texttt{:=} $\mathtt{t_2}$ $|$ $\mu \longrightarrow \mathtt{t_1'}$ \texttt{:=} $\mathtt{t_2}$ $|$ $\mu'$}
    \DisplayProof
}{E-ASSIGN1}

\tyrule{%
    \AxiomC{$\mathtt{t_2}$ $|$ $\mu \longrightarrow \mathtt{t_2'}$ $|$ $\mu'$}
    \UnaryInfC{$l$ \texttt{:=} $\mathtt{t_2}$ $|$ $\mu \longrightarrow l$ \texttt{:=} $\mathtt{t_2'}$ $|$ $\mu'$}
    \DisplayProof
}{E-ASSIGN2}

\tyrule{%
    \AxiomC{$l$ \texttt{:=} $\mathtt{v_1}$ $|$ $\mu \longrightarrow \mathtt{unit}$ $|$ \texttt{[$l \mapsto \mathtt{v_1}$] $\mu$}}
    \DisplayProof
}{E-ASSIGN}

\tyrule{%
    \AxiomC{$\mathtt{t_1}$ $|$ $\mu \longrightarrow \mathtt{t_1'}$ $|$ $\mu'$}
    \UnaryInfC{\texttt{ref} $\mathtt{t_1}$ $|$ $\mu \longrightarrow$ \texttt{ref} $\mathtt{t_1'}$ $|$ $\mu'$}
    \DisplayProof
}{E-REF}

\tyrule{%
    \AxiomC{$l \notin dom(\mu)$}
    \UnaryInfC{\texttt{ref} $\mathtt{v_1}$ $|$ $\mu \longrightarrow l$ $|$ $(\mu, l \mapsto \mathtt{v_1})$}
    \DisplayProof
}{E-REFV}

\tyrule{
    \AxiomC{\texttt{let} $\mathtt{x_1}$ \texttt{=} $\mathtt{v_1}$ $\mathtt{(x_i}$ \texttt{=} $\mathtt{t_i)^{i \in 2..n}}$}
    \noLine
    \UnaryInfC{\texttt{in} $\mathtt{t_0}$ \texttt{end} $|$ $\mathtt{(\Gamma, \tau)}$}
    \noLine
    \UnaryInfC{$\longrightarrow$ \texttt{let} $\mathtt{x_2}$ \texttt{=} $\mathtt{t_2}$ $\mathtt{(x_i}$ \texttt{=} $\mathtt{t_i)^{i \in 3..n}}$}
    \noLine
    \UnaryInfC{\texttt{in} $\mathtt{t_0}$ \texttt{end} $|$ $\mathtt{(\Gamma, \tau, x_1 \mapsto v_1)}$}
    \DisplayProof
}{E-LETV1}

\tyrule{
    \AxiomC{\texttt{let x =} $\mathtt{v}$ \texttt{in} $\mathtt{t} \longrightarrow$ \texttt{[}$\mathtt{x \mapsto v}$\texttt{]}$\mathtt{t}$}
    \DisplayProof
}{E-LETV2}

\tyrule{
    \AxiomC{$\mathtt{t_1} \longrightarrow \mathtt{t_1'}$}
    \UnaryInfC{\texttt{let} $\mathtt{x_1}$ \texttt{=} $\mathtt{t_1}$ $\mathtt{(x_i}$ \texttt{=} $\mathtt{t_i)^{i \in 2..n}}$ \texttt{in} $\mathtt{t_0}$ \texttt{end}}
    \noLine
    \UnaryInfC{$\longrightarrow$ \texttt{let} $\mathtt{x_1}$ \texttt{=} $\mathtt{t_1'}$ $\mathtt{(x_i}$ \texttt{=} $\mathtt{t_i)^{i \in 2..n}}$ \texttt{in} $\mathtt{t_0}$ \texttt{end}}
    \DisplayProof
}{E-LET}

\tyrule{%
    \AxiomC{$\mathtt{t_i} \longrightarrow \mathtt{t_i'}$}
    \UnaryInfC{$\mathtt{C_i}$ $\mathtt{t_i} \longrightarrow \mathtt{C_i}$ $\mathtt{t_i'}$}
    \DisplayProof
}{E-DATATY}

\tyrule{
    \AxiomC{$\mathtt{t_0} \longrightarrow \mathtt{t_0'}$}
    \UnaryInfC{\texttt{case} $\mathtt{t_0}$ \texttt{of} $\mathtt{C_i}$ $\mathtt{x_i}$ \texttt{=>} $\mathtt{t_i^{i \in 1..n}}$}
    \noLine
    \UnaryInfC{$\longrightarrow$ \texttt{case} $\mathtt{t_0'}$ \texttt{of} $\mathtt{C_i}$ $\mathtt{x_i}$ \texttt{=>} $\mathtt{t_i^{i \in 1..n}}$}
    \DisplayProof
}{E-CASE}

\tyrule{
    \AxiomC{\texttt{case} $\mathtt{C_j}$ $\mathtt{v_j}$ \texttt{of} $\mathtt{C_i}$ $\mathtt{x_i}$ \texttt{=>} $\mathtt{t_i^{i \in 1..n}}$}
    \noLine
    \UnaryInfC{$\longrightarrow$ \texttt{[}$\mathtt{x_j \mapsto v_j}$\texttt{]} $\mathtt{t_j}$}
    \DisplayProof
}{E-CASE-TY}

\tyrule{
    \AxiomC{$\mathtt{v_1} \equiv \mathtt{v_{1,i}^{i \in 0..n-1}}$}
    \UnaryInfC{$\mathtt{v_1}$\texttt{[}$\mathtt{v_2}$\texttt{]} $\longrightarrow \mathtt{v_{1,v_2}}$}
    \DisplayProof
}{E-ARR-ACC}

\tyrule{
    \AxiomC{$\mathtt{t_2} \longrightarrow \mathtt{t_2'}$}
    \UnaryInfC{$\mathtt{v_1}$\texttt{[}$\mathtt{t_2}$\texttt{]} $\longrightarrow \mathtt{v_1}$\texttt{[}$\mathtt{t_2'}$\texttt{]}}
    \DisplayProof
}{E-ARR-ACC1}

\tyrule{
    \AxiomC{$\mathtt{t_1}$ $|$ $\omega \longrightarrow \mathtt{t_1'}$ $|$ $\omega'$}
    \UnaryInfC{\texttt{sw} $\mathtt{t_1}$ $|$ $\omega \longrightarrow$ \texttt{sw} $\mathtt{t_1'}$ $|$ $\omega'$}
    \DisplayProof
}{E-SW}

\tyrule{
    \AxiomC{$w \notin dom(\omega)$}
    \UnaryInfC{\texttt{sw} $\mathtt{v_1}$ $|$ $\omega \longrightarrow w$ $|$ $(\omega, w \mapsto \mathtt{v_1})$}
    \DisplayProof
}{E-SWV}

\tyrule{%
    \AxiomC{$\mathtt{t_1}$ $|$ $\sigma \longrightarrow \mathtt{t_1'}$ $|$ $\sigma'$}
    \UnaryInfC{\texttt{unsw }$\mathtt{t_1}$ $|$ $\sigma \longrightarrow$ \texttt{unsw} $\mathtt{t_1'}$ $|$ $\sigma'$}
    \DisplayProof
}{E-UNSW}

\tyrule{%
    \AxiomC{$\omega(w)=\mathtt{v}$}
    \UnaryInfC{\texttt{unsw }$w$ $|$ $\omega \longrightarrow$ $\mathtt{v}$ $|$ $\omega$}
    \DisplayProof
}{E-UNSWWRAP}

\section{Proofs}
\subsection{Inversion of Typing Relation}
\label{sec:inversion-of-typing}
\begin{enumerate}
    %
    \item If $\mathtt{\Gamma \vdash x : R}$, then $\mathtt{x : R \in \Gamma}$.
    \item If $\mathtt{\Gamma \vdash \lambda x : T_1 . t_2 : R}$, then $\mathtt{R = T_1 \rightarrow R_2}$ for some $\mathtt{R_2}$ with $\mathtt{\Gamma, x : T_1 \vdash t_2 : R_2}$.
    \item If $\mathtt{\Gamma \vdash t_1}$ $\mathtt{t_2 : R}$ then there is some type $\mathtt{T_{11}}$ such that $\mathtt{\Gamma \vdash t_1 : T_{11} \rightarrow R}$ and that $\mathtt{\Gamma \vdash t_2 : T_{11}}$.
    %
    \item If $\langle integer \rangle \mathtt{: R}$, then \texttt{R = int}.
    \item If $\langle real \rangle \mathtt{: R}$, then \texttt{R = real}.
    \item If $\langle string \rangle \mathtt{: R}$, then \texttt{R = string}.
    \item If \texttt{'b:0} $\mathtt{: R}$, then \texttt{R = bit}.
    \item If \texttt{'b:1} $\mathtt{: R}$, then \texttt{R = bit}.
    \item If \texttt{nil} $\mathtt{: R}$, then \texttt{R = $\mathtt{T_S}$ list}.
    \item If \texttt{()} $\mathtt{: R}$, then \texttt{R = unit}.
    %
    \item If \textasciitilde $\mathtt{t_1 : int}$, then \texttt{$\mathtt{t_1 : int}$}.
    \item If $\mathtt{t_1}$ \texttt{+} $\mathtt{t_2 : R}$, then \texttt{R = int}, $\mathtt{t_1 : int}$ and $\mathtt{t_2 : int}$.
    \item If $\mathtt{t_1}$ \texttt{-} $\mathtt{t_2 : R}$, then \texttt{R = int}, $\mathtt{t_1 : int}$ and $\mathtt{t_2 : int}$.
    \item If $\mathtt{t_1}$ \texttt{*} $\mathtt{t_2 : R}$, then \texttt{R = int}, $\mathtt{t_1 : int}$ and $\mathtt{t_2 : int}$.
    \item If $\mathtt{t_1}$ \texttt{/} $\mathtt{t_2 : R}$, then \texttt{R = int}, $\mathtt{t_1 : int}$ and $\mathtt{t_2 : int}$.
    \item If $\mathtt{t_1}$ \texttt{\%} $\mathtt{t_2 : R}$, then \texttt{R = int}, $\mathtt{t_1 : int}$ and $\mathtt{t_2 : int}$.
    %
    \item If \textasciitilde $\mathtt{t_1 : real}$, then \texttt{$\mathtt{t_1 : real}$}.
    \item If $\mathtt{t_1}$ \texttt{+.} $\mathtt{t_2 : R}$, then \texttt{R = real}, $\mathtt{t_1 : real}$ and $\mathtt{t_2 : real}$.
    \item If $\mathtt{t_1}$ \texttt{-.} $\mathtt{t_2 : R}$, then \texttt{R = real}, $\mathtt{t_1 : real}$ and $\mathtt{t_2 : real}$.
    \item If $\mathtt{t_1}$ \texttt{*.} $\mathtt{t_2 : R}$, then \texttt{R = real}, $\mathtt{t_1 : real}$ and $\mathtt{t_2 : real}$.
    \item If $\mathtt{t_1}$ \texttt{/.} $\mathtt{t_2 : R}$, then \texttt{R = real}, $\mathtt{t_1 : real}$ and $\mathtt{t_2 : real}$.
    %
    \item If \texttt{!}$\mathtt{t_1 : R}$, then \texttt{R = $\mathtt{T_H}$} and \texttt{$\mathtt{t_1 : T_H}$}.
    \item If $\mathtt{t_1}$ \texttt{\&} $\mathtt{t_2 : R}$, then \texttt{R = $\mathtt{T_H}$}, \texttt{$\mathtt{t_1 : T_H}$} and \texttt{$\mathtt{t_2 : T_H}$}.
    \item If $\mathtt{t_1}$ \texttt{|} $\mathtt{t_2 : R}$, then \texttt{R = $\mathtt{T_H}$}, \texttt{$\mathtt{t_1 : T_H}$} and \texttt{$\mathtt{t_2 : T_H}$}.
    \item If $\mathtt{t_1}$ \texttt{$^\wedge$} $\mathtt{t_2 : R}$, then \texttt{R = $\mathtt{T_H}$}, \texttt{$\mathtt{t_1 : T_H}$} and \texttt{$\mathtt{t_2 : T_H}$}.
    %
    \item If \texttt{\&->$\mathtt{t_1 : R}$}, then \texttt{R = bit} and $\mathtt{t_1 : bit}$\texttt{[n]}.
    \item If \texttt{|->$\mathtt{t_1 : R}$}, then \texttt{R = bit} and $\mathtt{t_1 : bit}$\texttt{[n]}.
    \item If \texttt{$^\wedge$->$\mathtt{t_1 : R}$}, then \texttt{R = bit} and $\mathtt{t_1 : bit}$\texttt{[n]}.
    \item If $\mathtt{t_1}$ \texttt{\&\&} $\mathtt{t_2 : R}$, then \texttt{R = bit}, \texttt{$\mathtt{t_1 : bit}$[n]} and \texttt{$\mathtt{t_2 : bit}$[n]}.
    \item If $\mathtt{t_1}$ \texttt{||} $\mathtt{t_2 : R}$, then \texttt{R = bit}, \texttt{$\mathtt{t_1 : bit}$[n]} and \texttt{$\mathtt{t_2 : bit}$[n]}.
    \item If $\mathtt{t_1}$ \texttt{$^{\wedge\wedge}$} $\mathtt{t_2 : R}$, then \texttt{R = bit}, \texttt{$\mathtt{t_1 : bit}$[n]} and \texttt{$\mathtt{t_2 : bit}$[n]}.
    \item If $\mathtt{t_1}$ \texttt{<}\texttt{<} $\mathtt{t_2 : R}$, then \texttt{R = bit[n]}, \texttt{$\mathtt{t_1 : bit}$[n]} and \texttt{$\mathtt{t_2 : bit}$[m]}.
    \item If $\mathtt{t_1}$ \texttt{>}\texttt{>} $\mathtt{t_2 : R}$, then \texttt{R = bit[n]}, \texttt{$\mathtt{t_1 : bit}$[n]} and \texttt{$\mathtt{t_2 : bit}$[m]}.
    \item If $\mathtt{t_1}$ \texttt{>}\texttt{>}\texttt{>} $\mathtt{t_2 : R}$, then \texttt{R = bit[n]}, \texttt{$\mathtt{t_1 : bit}$[n]} and \texttt{$\mathtt{t_2 : bit}$[m]}.
    %
    \item If \texttt{$\mathtt{t_1}$ = $\mathtt{t_2 : R}$}, then \texttt{R = int}, $\mathtt{t_1 : T_S}$ and $\mathtt{t_2 : T_S}$.
    \item If \texttt{$\mathtt{t_1}$ <> $\mathtt{t_2 : R}$}, then \texttt{R = int}, $\mathtt{t_1 : T_S}$ and $\mathtt{t_2 : T_S}$.
    \item If \texttt{$\mathtt{t_1}$ < $\mathtt{t_2 : R}$}, then \texttt{R = int}, $\mathtt{t_1 : T_S}$ and $\mathtt{t_2 : T_S}$.
    \item If \texttt{$\mathtt{t_1}$ <= $\mathtt{t_2 : R}$}, then \texttt{R = int}, $\mathtt{t_1 : T_S}$ and $\mathtt{t_2 : T_S}$.
    \item If \texttt{$\mathtt{t_1}$ > $\mathtt{t_2 : R}$}, then \texttt{R = int}, $\mathtt{t_1 : T_S}$ and $\mathtt{t_2 : T_S}$.
    \item If \texttt{$\mathtt{t_1}$ >= $\mathtt{t_2 : R}$}, then \texttt{R = int}, $\mathtt{t_1 : T_S}$ and $\mathtt{t_2 : T_S}$.
    %
    \item If \texttt{$\mathtt{t_1}$ andalso $\mathtt{t_2 : R}$}, then \texttt{R = int}, $\mathtt{t_1 : int}$ and $\mathtt{t_2 : int}$.
    \item If \texttt{$\mathtt{t_1}$ orelse $\mathtt{t_2 : R}$}, then \texttt{R = int}, $\mathtt{t_1 : int}$ and $\mathtt{t_2 : int}$.
    \item If \texttt{not $\mathtt{t_1 : R}$}, then \texttt{R = int} and $\mathtt{t_1 : int}$.
    %
    \item If \texttt{$\mathtt{t_1}$::$\mathtt{t_2 : R}$}, then \texttt{R = $\mathtt{T_S}$ list}, $\mathtt{t_1 : T_S}$ and \texttt{$\mathtt{t_2 : T_S}$ list}.
    %
    \item If \texttt{if $\mathtt{t_1}$ then $\mathtt{t_2}$ else $\mathtt{t_3 : R}$}, then $\mathtt{t_1 : int}$, $\mathtt{t_2 : R}$, and $\mathtt{t_3 : R}$.
    \item If \texttt{if $\mathtt{t_1}$ then $\mathtt{t_2 : R}$}, then \texttt{R = unit}, $\mathtt{t_1 : int}$ and $\mathtt{t_2 : unit}$.
    %
    \item If \texttt{ref $\mathtt{t_1 : R}$}, then \texttt{R = $\mathtt{T_S}$ ref} and $\mathtt{t_1 : T_S}$.
    \item If \texttt{$\mathtt{t_1}$ := $\mathtt{t_2 : R}$}, then \texttt{R = unit}, \texttt{$\mathtt{t_1 : T_S}$ ref} and $\mathtt{t_2 : T_S}$.
    \item If \texttt{\$$\mathtt{t_1 : R}$}, then \texttt{R = $\mathtt{T_S}$} and \texttt{$\mathtt{t_1 : T_S}$ ref}.
    %
    \item If \texttt{sw $\mathtt{t_1 : R}$}, then \texttt{R = $\mathtt{T_H}$ sw} and $\mathtt{t_1 : T_H}$.
    %
    \item If \texttt{$\mathtt{t_1}$[$\mathtt{t_2}$]$\mathtt{: R}$}, then \texttt{R = $\mathtt{T_H}$}, \texttt{$\mathtt{t_1 : T_H}$[n]} and $\mathtt{t_2 : int}$.
    %
    \item If $\mathtt{\{l_i = t_i^{i \in 1..n}\} : R}$, then \texttt{R = $\mathtt{\{l_i : T_i^{i \in 1..n}\}}$} and for each $i$, $\mathtt{t_i : T_i}$.
    \item If \texttt{\#$\mathtt{l_j}$ $\mathtt{t_1 : R}$}, then \texttt{R = $\mathtt{T_j}$} and $\mathtt{t_1 : \{l_i : T_i^{i \in 1..n}\}}$.
    %
    \item If \texttt{let x = $\mathtt{t_1}$ in $\mathtt{t_2}$ end $\mathtt{: R}$}, then \texttt{$\mathtt{\Gamma, \tau \vdash}$ R = $\mathtt{T_2}$}, $\mathtt{\Gamma, \tau \vdash t_1 : T_1}$ and $\mathtt{\Gamma, \tau, x : T_1 \vdash t_2 : T_2}$.
    %
    \item If \texttt{$\mathtt{C_j}$ $\mathtt{t : R}$}, then \texttt{R = $\mathtt{\langle C_i : T_i \rangle^{i \in 1..n}}$} and $\mathtt{t : T_j}$.
    %
    \item If \texttt{case $\mathtt{t_0}$ of $\mathtt{C_i}$ $\mathtt{x_i}$ => $\mathtt{t_i^{i \in 1..n} : R}$}, then \texttt{R = T}, $\mathtt{t_0 : \langle C_i : T_i \rangle^{i \in 1..n}}$, and for each $i$, $\mathtt{\Gamma, x_i : T_i \vdash t_i : T}$.
\end{enumerate}

\subsection{Proof of Progress}
\label{sec:progress}
\begin{multicols}{2}
\begin{proof}
    By induction on a derivation of $\mathtt{t : T}$.
    \begin{description}
        \item[\textit{\textmd{Case}} \normalfont T-INT, T-REAL, T-STRING, T-BIT, T-NIL:]\mbox{}\\
        \mbox{}\\
        Immediate since \texttt{t} is a value.
        
        \item[\textit{\textmd{Case}} \normalfont T-VAR:]\mbox{}\\
        \mbox{}\\
        Cannot occur since $\mathtt{t}$ must be closed as per the hypothesis.
        
        \item[\textit{\textmd{Case}} \normalfont T-ABS:]\mbox{}\\
        \mbox{}\\
        Immediate since \texttt{t} is a value.
        
        \item[\textit{\textmd{Case}} \normalfont T-APP:]\mbox{}\\
        \begin{quote}
            \texttt{t = $\mathtt{t_1}$ $\mathtt{t_2}$}\\
            $\mathtt{\vdash t_1 : T_{11} \rightarrow T_{12}}$\\
            $\mathtt{\vdash t_2 : T_{11}}$
        \end{quote}
        By the induction hypothesis, either $\mathtt{t_1}$ is a value or else there is some $\mathtt{t_1'}$ for which $\mathtt{t_1} \longrightarrow \mathtt{t_1'}$, and likewise for $\mathtt{t_2}$. If $\mathtt{t_1} \longrightarrow \mathtt{t_1'}$, then by E-APP1, $\mathtt{t} \longrightarrow \mathtt{t_1'}$ $\mathtt{t_2}$. On the other hand, if $\mathtt{t_1}$ is a value and $\mathtt{t_2} \longrightarrow \mathtt{t_2'}$, then by E-APP2, $\mathtt{t} \longrightarrow \mathtt{t_1}$ $\mathtt{t_2'}$. Finally, if both $\mathtt{t_1}$ and $\mathtt{t_2}$ are values, then case 5 of the canonical forms lemma tells us that $\mathtt{t_1}$ has the form $\mathtt{\lambda x : T_{11} . t_{12}}$, and so by E-APPABS, \texttt{$\mathtt{t} \longrightarrow$ [$\mathtt{x \mapsto t_2}$]$\mathtt{t_12}$}.
        
        \item[\textit{\textmd{Case}} \normalfont T-INT-NEG:]\mbox{}\\
        \vspace{-3ex}
        \begin{quote}
            \texttt{t = } \textasciitilde$\mathtt{t_1}$\\
            $\mathtt{t_1 : int}$
        \end{quote}
        By the induction hypothesis, either $\mathtt{t_1}$ is a value or else there is some $\mathtt{t_1'}$ such that $\mathtt{t_1} \longrightarrow \mathtt{t_1'}$. If $\mathtt{t_1}$ is a value, then case 1 of the canonical forms lemma assures us that it is an integer value as described in Section \ref{sec:swvgrammar} with existing and valid semantic meaning for negation, yielding a value. On the other hand, if $\mathtt{t_1} \longrightarrow \mathtt{t_1'}$, then by E-NEG, $\mathtt{t} \longrightarrow$ \textasciitilde $\mathtt{t_1'}$.
        
        \item[\textit{\textmd{Case}} \normalfont T-INT-ADD:]\mbox{}\\
        \vspace{-3ex}
        \begin{quote}
            \texttt{t = $\mathtt{t_1}$ + $\mathtt{t_2}$}\\
            $\mathtt{t_1 : int}$\\
            $\mathtt{t_2 : int}$
        \end{quote}
        By the induction hypothesis, either $\mathtt{t_1}$ is a value or else there is some $\mathtt{t_1'}$ such that $\mathtt{t_1} \longrightarrow \mathtt{t_1'}$. If $\mathtt{t_1} \longrightarrow \mathtt{t_1'}$, then by E-INT-ADD1, \texttt{$\mathtt{t} \longrightarrow \mathtt{t_1'}$ + $\mathtt{t_2}$}. On the other hand, if $\mathtt{t_1}$ is a value, then case 1 of the canonical forms lemma assures us that it is an integer value as described in Section \ref{sec:swvgrammar}. Further in this case, by the induction hypothesis, either $\mathtt{t_2}$ is a value or else there is some $\mathtt{t_2'}$ such that $\mathtt{t_2} \longrightarrow \mathtt{t_2'}$. If $\mathtt{t_2} \longrightarrow \mathtt{t_2'}$, then by E-INT-ADD2, \texttt{$\mathtt{t} \longrightarrow \mathtt{v_1}$ + $\mathtt{t_2'}$}. On the other hand, if $\mathtt{t_2}$ is a value, then case 1 of the canonical forms lemma assures us that it is an integer value as described in Section \ref{sec:swvgrammar}. Thus, both $\mathtt{t_1}$ and $\mathtt{t_2}$ are integer values in this case and addition has a well-defined semantic meaning thereby yielding a value.
        
        \item[\textit{\textmd{Case}} \normalfont T-INT-(SUB/MUL/DIV/MOD):]\mbox{}\\
        \mbox{}\\
        Similar to T-INT-ADD.
        
        \item[\textit{\textmd{Case}} \normalfont T-REAL-NEG:]\mbox{}\\
        \vspace{-3ex}
        \begin{quote}
            \texttt{t = } \textasciitilde$\mathtt{t_1}$\\
            $\mathtt{t_1 : real}$
        \end{quote}
        By the induction hypothesis, either $\mathtt{t_1}$ is a value or else there is some $\mathtt{t_1'}$ such that $\mathtt{t_1} \longrightarrow \mathtt{t_1'}$. If $\mathtt{t_1}$ is a value, then case 2 of the canonical forms lemma assures us that it is a value in the domain of real numbers as described in Section \ref{sec:swvgrammar} with existing and valid semantic meaning for negation, yielding a value. On the other hand, if $\mathtt{t_1} \longrightarrow \mathtt{t_1'}$, then by E-NEG, $\mathtt{t} \longrightarrow$ \textasciitilde $\mathtt{t_1'}$.
        
        \item[\textit{\textmd{Case}} \normalfont T-REAL-ADD:]\mbox{}\\
        \vspace{-3ex}
        \begin{quote}
            \texttt{t = $\mathtt{t_1}$ +. $\mathtt{t_2}$}\\
            $\mathtt{t_1 : real}$\\
            $\mathtt{t_2 : real}$
        \end{quote}
        By the induction hypothesis, either $\mathtt{t_1}$ is a value or else there is some $\mathtt{t_1'}$ such that $\mathtt{t_1} \longrightarrow \mathtt{t_1'}$. If $\mathtt{t_1} \longrightarrow \mathtt{t_1'}$, then by E-REAL-ADD1, \texttt{$\mathtt{t} \longrightarrow \mathtt{t_1'}$ +. $\mathtt{t_2}$}. On the other hand, if $\mathtt{t_1}$ is a value, then case 2 of the canonical forms lemma assures us that it is a value in the domain of real numbers as described in Section \ref{sec:swvgrammar}. Further in this case, by the induction hypothesis, either $\mathtt{t_2}$ is a value or else there is some $\mathtt{t_2'}$ such that $\mathtt{t_2} \longrightarrow \mathtt{t_2'}$. If $\mathtt{t_2} \longrightarrow \mathtt{t_2'}$, then by E-REAL-ADD2, \texttt{$\mathtt{t} \longrightarrow \mathtt{v_1}$ +. $\mathtt{t_2'}$}. On the other hand, if $\mathtt{t_2}$ is a value, then case 2 of the canonical forms lemma assures us that it is a value in the domain of real numbers as described in Section \ref{sec:swvgrammar}. Thus, both $\mathtt{t_1}$ and $\mathtt{t_2}$ are real values in this case and addition has a well-defined semantic meaning thereby yielding a value.
        
        \item[\textit{\textmd{Case}} \normalfont T-REAL-(SUB/MUL/DIV):]\mbox{}\\
        \mbox{}\\
        Similar to T-REAL-ADD.
        
        \item[\textit{\textmd{Case}} \normalfont T-BIT-NEG:]\mbox{}\\
        \vspace{-3ex}
        \begin{quote}
            \texttt{t = !}$\mathtt{t_1}$\\
            $\mathtt{t_1 : T_H}$
        \end{quote}
        By the induction hypothesis, either $\mathtt{t_1}$ is a value or else there is some $\mathtt{t_1'}$ such that $\mathtt{t_1} \longrightarrow \mathtt{t_1'}$. If $\mathtt{t_1} \longrightarrow \mathtt{t_1'}$ then by E-BIT-NEG1, \texttt{$\mathtt{t} \longrightarrow$ !$\mathtt{t_1'}$}. On the other hand, if $\mathtt{t_1}$ is a value then it is a hardware value and we can perform structural induction on the possible types. If it is a bit then by E-BIT-NEG we produce a logical not-gate. If it is a hardware array then by E-BIT-NEG2 we move the negation inside and apply to each element. Similarly, if it is a hardware record then by E-BIT-NEG3 we move the negation inside and apply to each field element.
        
        \item[\textit{\textmd{Case}} \normalfont T-AND:]\mbox{}\\
        \vspace{-3ex}
        \begin{quote}
            \texttt{$\mathtt{t_1}$ \& $\mathtt{t_2}$}\\
            $\mathtt{t_1 : T_H}$\\
            $\mathtt{t_2 : T_H}$
        \end{quote}
        By the induction hypothesis, either $\mathtt{t_1}$ is a value or else there is some $\mathtt{t_1'}$ such that $\mathtt{t_1} \longrightarrow \mathtt{t_1'}$. If $\mathtt{t_1} \longrightarrow \mathtt{t_1'}$, then by E-AND1, \texttt{$\mathtt{t} \longrightarrow \mathtt{t_1'}$ \& $\mathtt{t_2}$}. On the other hand, if $\mathtt{t_1}$ is a value, then it is a hardware value. Further in this case, by the induction hypothesis, either $\mathtt{t_2}$ is a value or else there is some $\mathtt{t_2'}$ such that $\mathtt{t_2} \longrightarrow \mathtt{t_2'}$. If $\mathtt{t_2} \longrightarrow \mathtt{t_2'}$, then by E-AND2, \texttt{$\mathtt{t} \longrightarrow \mathtt{v_1}$ \& $\mathtt{t_2'}$}. On the other hand, if $\mathtt{t_2}$ is a value, then it is a hardware value. If, both $\mathtt{t_1}$ and $\mathtt{t_2}$ are hardware values then we can perform structural induction on the possible types. If $\mathtt{T_H = bit}$ then by the derived-term definition, \texttt{$\mathtt{t} \longrightarrow$ \&->\#[$\mathtt{t_1}$, $\mathtt{t_2}$]}. If it is a hardware array then by E-AND3 we move the operation inside and apply to each pair of elements. Similarly, if it is a hardware record then by E-AND4 we move the operation inside and apply to each pair of field elements.
        
        \item[\textit{\textmd{Case}} \normalfont T-(OR/XOR):]\mbox{}\\
        \mbox{}\\
        Similar to T-AND.
        
        \item[\textit{\textmd{Case}} \normalfont T-AND-RED:]\mbox{}\\
        \vspace{-3ex}
        \begin{quote}
            \texttt{t = \&->}$\mathtt{t_1}$\\
            $\mathtt{t_1 : bit}$\texttt{[n]}
        \end{quote}
        By the induction hypothesis, either $\mathtt{t_1}$ is a value or else there is some $\mathtt{t_1'}$ such that $\mathtt{t_1} \longrightarrow \mathtt{t_1'}$. If $\mathtt{t_1} \longrightarrow \mathtt{t_1'}$ then by E-AND-RED1, \texttt{$\mathtt{t} \longrightarrow$ \&->$\mathtt{t_1'}$}. On the other hand if $\mathtt{t_1}$ is a value, then it is a bit array and is evaluated by E-AND-RED.
        
        \item[\textit{\textmd{Case}} \normalfont T-(OR/XOR)-RED:]\mbox{}\\
        \mbox{}\\
        Similar to T-AND-RED.
        
        \item[\textit{\textmd{Case}} \normalfont T-SLL:]\mbox{}\\
        \vspace{-3ex}
        \begin{quote}
            \texttt{$\mathtt{t_1}$ <}\texttt{< $\mathtt{t_2}$}\\
            $\mathtt{t_1 : bit}$\texttt{[n]}\\
            $\mathtt{t_2 : bit}$\texttt{[m]}
        \end{quote}
        By the induction hypothesis, either $\mathtt{t_1}$ is a value or else there is some $\mathtt{t_1'}$ such that $\mathtt{t_1} \longrightarrow \mathtt{t_1'}$. If $\mathtt{t_1} \longrightarrow \mathtt{t_1'}$ then by E-SLL1, \texttt{$\mathtt{t} \longrightarrow \mathtt{t_1'}$ <}\texttt{< $\mathtt{t_2}$}. On the other hand, if $\mathtt{t_1}$ is a value, then either $\mathtt{t_2}$ is a value or else there is some $\mathtt{t_2'}$ such that $\mathtt{t_2} \longrightarrow \mathtt{t_2'}$. If $\mathtt{t_2} \longrightarrow \mathtt{t_2'}$ then by E-SLL2, \texttt{$\mathtt{t} \longrightarrow \mathtt{v_1}$ <}\texttt{< $\mathtt{t_2'}$}. On the other hand, if $\mathtt{t_2}$ is a value, then both $\mathtt{t_1}$ and $\mathtt{t_2}$ are values and by the definition in Section \ref{sec:hwvgrammar}, \texttt{t} is a value.
        
        \item[\textit{\textmd{Case}} \normalfont T-(SRL/SRA):]\mbox{}\\
        \mbox{}\\
        Similar to T-SLL.
        
        \item[\textit{\textmd{Case}} \normalfont T-EQ:]\mbox{}\\
        \vspace{-3ex}
        \begin{quote}
            \texttt{t = $\mathtt{t_1}$ = $\mathtt{t_2}$}\\
            $\mathtt{t_1 : T_S}$\\
            $\mathtt{t_2 : T_S}$
        \end{quote}
        By the induction hypothesis, either $\mathtt{t_1}$ is a value or else there is some $\mathtt{t_1'}$ such that $\mathtt{t_1} \longrightarrow \mathtt{t_1'}$. If $\mathtt{t_1} \longrightarrow \mathtt{t_1'}$, then by E-EQ1, \texttt{$\mathtt{t} \longrightarrow \mathtt{t_1'}$ = $\mathtt{t_2}$}. On the other hand, if $\mathtt{t_1}$ is a value, then it may be of any software type. Further in this case, by the induction hypothesis, either $\mathtt{t_2}$ is a value or else there is some $\mathtt{t_2'}$ such that $\mathtt{t_2} \longrightarrow \mathtt{t_2'}$. If $\mathtt{t_2} \longrightarrow \mathtt{t_2'}$, then by E-EQ2, \texttt{$\mathtt{t} \longrightarrow \mathtt{v_1}$ = $\mathtt{t_2'}$}. On the other hand, if $\mathtt{t_2}$ is a value, then it may be of any software type. Further, both $\mathtt{t_1}$ and $\mathtt{t_2}$ are values in this case and equality has semantic meaning defined as per the type definition, thereby yielding a value.
        
        \item[\textit{\textmd{Case}} \normalfont T-(NEQ/LT/LEQ/GT/GEQ):]\mbox{}\\
        \mbox{}\\
        Similar to T-EQ.
        
        \item[\textit{\textmd{Case}} \normalfont T-NOT:]\mbox{}\\
        \vspace{-3ex}
        \begin{quote}
            \texttt{t = not $\mathtt{t_1}$}\\
            $\mathtt{t_1 : int}$\\
        \end{quote}
        By the induction hypothesis, either $\mathtt{t_1}$ is a value or else there is some $\mathtt{t_1'}$ such that $\mathtt{t_1} \longrightarrow \mathtt{t_1'}$. In either case, \texttt{t} can be evaluated by E-NOT, namely $\mathtt{t} \longrightarrow$ \texttt{if $\mathtt{t_1}$ then 0 else 1}.
        
        \item[\textit{\textmd{Case}} \normalfont T-CONS:]\mbox{}\\
        \vspace{-3ex}
        \begin{quote}
            \texttt{t = $\mathtt{t_1}$::$\mathtt{t_2}$}\\
            $\mathtt{t_1 : T_S}$\\
            $\mathtt{t_2 : T_S}$ \texttt{list}
        \end{quote}
        By the induction hypothesis, either $\mathtt{t_1}$ is a value or else there is some $\mathtt{t_1'}$ such that $\mathtt{t_1} \longrightarrow \mathtt{t_1'}$. If $\mathtt{t_1} \longrightarrow \mathtt{t_1'}$, then by E-CONS1, \texttt{$\mathtt{t} \longrightarrow \mathtt{t_1'}$ :: $\mathtt{t_2}$}. On the other hand, if $\mathtt{t_1}$ is a value, then either $\mathtt{t_2}$ is a value or else there is some $\mathtt{t_2'}$ such that $\mathtt{t_2} \longrightarrow \mathtt{t_2'}$. If $\mathtt{t_2} \longrightarrow \mathtt{t_2'}$, then by E-CONS2, \texttt{$\mathtt{t} \longrightarrow \mathtt{v_1}$ :: $\mathtt{t_2'}$}. On the other hand, if $\mathtt{t_2}$ is a value, then both $\mathtt{t_1}$ and $\mathtt{t_2}$ are values and list concatenation evaluates under well-defined semantics yielding a value.
        
        \item[\textit{\textmd{Case}} \normalfont T-IFELSE:]\mbox{}\\
        \vspace{-3ex}
        \begin{quote}
            \texttt{t = if $\mathtt{t_1}$ then $\mathtt{t_2}$ else $\mathtt{t_3}$}\\
            $\mathtt{t_1 : int}$\\
            $\mathtt{t_2 : T}$\\
            $\mathtt{t_3 : T}$
        \end{quote}
        By the induction hypothesis, either $\mathtt{t_1}$ is a value or else there is some $\mathtt{t_1'}$ such that $\mathtt{t_1} \longrightarrow \mathtt{t_1'}$. If $\mathtt{t_1} \longrightarrow \mathtt{t_1'}$, then by E-IFELSE, \texttt{$\mathtt{t} \longrightarrow$ if $\mathtt{t_1'}$ then $\mathtt{t_2}$ else $\mathtt{t_3}$}. On the other hand, if $\mathtt{t_1}$ is a value, then case 1 of the canonical forms lemma assures us that it is an integer value as described in Section \ref{sec:swvgrammar}. Further, it is either zero or non-zero, and can be evaluated by either E-IFELSE-T or E-IFELSE-F.
    
        \item[\textit{\textmd{Case}} \normalfont T-REF:]\mbox{}\\
        \vspace{-3ex}
        \begin{quote}
            \texttt{t = ref $\mathtt{t_1}$}\\
            $\mathtt{t_1 : T_S}$
        \end{quote}
        By the induction hypothesis, either $\mathtt{t_1}$ is a value or else there is some $\mathtt{t_1'}$ such that $\mathtt{t_1} \longrightarrow \mathtt{t_1'}$. If $\mathtt{t_1} \longrightarrow \mathtt{t_1'}$, then by E-REF, \texttt{$\mathtt{t} \longrightarrow$ ref $\mathtt{t_1'}$}. On the other hand, if $\mathtt{t_1}$ is a value, then by E-REFV, \texttt{$\mathtt{t} \longrightarrow l | (\mu, l \mapsto \mathtt{v_1})$}.
        
        \item[\textit{\textmd{Case}} \normalfont T-SW:]\mbox{}\\
        \vspace{-3ex}
        \begin{quote}
            \texttt{t = sw $\mathtt{t_1}$}\\
            $\mathtt{t_1 : T_H}$
        \end{quote}
        By the induction hypothesis, either $\mathtt{t_1}$ is a value or else there is some $\mathtt{t_1'}$ such that $\mathtt{t_1} \longrightarrow \mathtt{t_1'}$. If $\mathtt{t_1} \longrightarrow \mathtt{t_1'}$ then by E-SW, \texttt{$\mathtt{t} \longrightarrow$ sw $\mathtt{t_1'}$}. On the other hand, if $\mathtt{t_1}$ is a value then by E-SWV, \texttt{$\mathtt{t} \longrightarrow w | (\sigma, w \mapsto \mathtt{v_1})$}.
        
        \item[\textit{\textmd{Case}} \normalfont T-UNSW:]\mbox{}\\
        \vspace{-3ex}
        \begin{quote}
            \texttt{t = unsw $\mathtt{t_1}$}\\
            \texttt{$\mathtt{t_1 : T_H}$ sw}
        \end{quote}
        By the induction hypothesis, either $\mathtt{t_1}$ is a value or else there is some $\mathtt{t_1'}$ such that $\mathtt{t_1} \longrightarrow \mathtt{t_1'}$. If $\mathtt{t_1} \longrightarrow \mathtt{t_1'}$ then by E-UNSW, \texttt{$\mathtt{t} \longrightarrow$ hw $\mathtt{t_1'}$}. On the other hand, if $\mathtt{t_1}$ is a value then by E-UNSWWRAP, \texttt{t $\longrightarrow$ $w$ $|$ $\sigma$}.
        
        \item[\textit{\textmd{Case}} \normalfont T-ASSIGN:]\mbox{}\\
        \vspace{-3ex}
        \begin{quote}
            \texttt{t = $\mathtt{t_1}$ := $\mathtt{t_2}$ $|$ $\mu$}\\
            \texttt{$\mathtt{t_1 : T_S}$ ref}\\
            $\mathtt{t_2 : T_S}$
        \end{quote}
        By the induction hypothesis, either $\mathtt{t_1}$ is a value or else there is some $\mathtt{t_1'}$ such that $\mathtt{t_1}$ $|$ $\mu$ $\longrightarrow \mathtt{t_1'}$ $|$ $\mu'$. If $\mathtt{t_1}$ $|$ $\mu$ $\longrightarrow \mathtt{t_1'}$ $|$ $\mu'$, then by E-ASSIGN1, \texttt{$\mathtt{t} \longrightarrow \mathtt{t_1'}$ := $\mathtt{t_2}$ $|$ $\mu'$}. On the other hand, if $\mathtt{t_1}$ is a value, then by the induction hypothesis, either $\mathtt{t_2}$ is a value or else there is some $\mathtt{t_2'}$ such that $\mathtt{t_2}$ $|$ $\mu$ $\longrightarrow \mathtt{t_2'}$ $|$ $\mu'$. If $\mathtt{t_2}$ $|$ $\mu$ $\longrightarrow \mathtt{t_2'}$ $|$ $\mu'$, then by E-ASSIGN2, \texttt{$\mathtt{t} \longrightarrow \mathtt{v_1}$ := $\mathtt{t_2'}$ $|$ $\mu'$}. On the other hand, if $\mathtt{t_2}$ is a value, then both $\mathtt{t_1}$ and $\mathtt{t_2}$ are values and so by E-ASSIGN, \texttt{$\mathtt{t} \longrightarrow$ unit $|$ [$l \mapsto \mathtt{v_1}$]$\mu$}.
        
        \item[\textit{\textmd{Case}} \normalfont T-DEREF:]\mbox{}\\
        \vspace{-3ex}
        \begin{quote}
            \texttt{t = \$$\mathtt{t_1}$ $|$ $\mu$}\\
            \texttt{$\mathtt{t_1 : T_S}$ ref}
        \end{quote}
        By the induction hypothesis, either $\mathtt{t_1}$ is a value or else there is some $\mathtt{t_1'}$ and $\mu'$ such that $\mathtt{t_1}$ $|$ $\mu \longrightarrow \mathtt{t_1'}$ $|$ $\mu'$. If $\mathtt{t_1}$ $|$ $\mu \longrightarrow \mathtt{t_1'}$ $|$ $\mu'$, then by E-DEREF, \texttt{t $\longrightarrow$ \$$\mathtt{t_1'}$ $|$ $\mu'$}. On the other hand, if $\mathtt{t_1}$ is a value then case 6 of the canonical forms lemma assures us that it is a location $l$ in store $\mu$, and so assuming that \texttt{$\mu(l)$ = v} then by E-DEREFLOC, \texttt{t $\longrightarrow$ v $|$ $\mu$}.
        
        \item[\textit{\textmd{Case}} \normalfont T-ARR-ACC:]\mbox{}\\
        \vspace{-3ex}
        \begin{quote}
            \texttt{t = $\mathtt{t_1}$[$\mathtt{t_2}$]}\\
            \texttt{$\mathtt{t_1 : T_H}$[n]}\\
            \texttt{$\mathtt{t_2 : int}$}
        \end{quote}
        By the induction hypothesis, either $\mathtt{t_1}$ is a value or else there is some $\mathtt{t_1'}$ such that $\mathtt{t_1} \longrightarrow \mathtt{t_1'}$. Since $\mathtt{t_1}$ is of hardware type it is a value by definition. By the induction hypothesis, either $\mathtt{t_2}$ is a value or else there is some $\mathtt{t_2'}$ such that $\mathtt{t_2} \longrightarrow \mathtt{t_2'}$. If $\mathtt{t_2} \longrightarrow \mathtt{t_2'}$, then by E-ARR-ACC1, \texttt{t $\longrightarrow \mathtt{t_1}$[$\mathtt{t_2'}$]}. On the other hand, if $\mathtt{t_2}$ is a value, then both $\mathtt{t_1}$ and $\mathtt{t_2}$ are values and by E-ARR-ACC, $\mathtt{t} \longrightarrow \mathtt{v_{1,v_2}}$.
        
        \item[\textit{\textmd{Case}} \normalfont T-RCD:]\mbox{}\\
        \vspace{-3ex}
        \begin{quote}
            \texttt{t = \{$\mathtt{l_i}$ = $\mathtt{t_i^{i \in 1..n}}$\}}\\
            for each $i$, \texttt{$\mathtt{t_i : T_i}$}
        \end{quote}
        By the induction hypothesis, for all $j$ either $\mathtt{t_j}$ is a value or else there is some $\mathtt{t_j'}$ such that $\mathtt{t_j} \longrightarrow \mathtt{t_j'}$. If $\mathtt{t_j} \longrightarrow \mathtt{t_j'}$, then by E-RCD, $\mathtt{t} \longrightarrow \mathtt{\{l_i=v_i^{i \in 1..j-1}, l_j=t_j', l_k=t_k^{k \in j+1..n}\}}$. On the other hand, if $\mathtt{t_j}$ is a value, then by E-RCDV, $\mathtt{t} \longrightarrow \mathtt{\{l_i=v_i^{i \in 1..j}, l_k=t_k^{k \in j+1..n}\}}$.
        
        \item[\textit{\textmd{Case}} \normalfont T-PROJ:]\mbox{}\\
        \vspace{-3ex}
        \begin{quote}
            \texttt{t = \#$\mathtt{l_j}$ $\mathtt{t_1}$}\\
            $\mathtt{t_1 : \{l_i : T_i^{i \in 1..n}\}}$
        \end{quote}
        By the induction hypothesis, either $\mathtt{t_1}$ is a value or else there is some $\mathtt{t_1'}$ such that $\mathtt{t_1} \longrightarrow \mathtt{t_1'}$. If $\mathtt{t_1} \longrightarrow \mathtt{t_1'}$ then by E-PROJ, \texttt{t $\longrightarrow$ \#$\mathtt{l_j}$ $\mathtt{t_1'}$}. On the other hand, if $\mathtt{t_1}$ is a value then case 7 of the canonical forms lemma assures us that it is a record value of form $\mathtt{\{l_i = v_i^{i \in 1..n}\}}$ and so by E-PROJ-RCD, $\mathtt{t} \longrightarrow \mathtt{v_j}$.
        
        \item[\textit{\textmd{Case}} \normalfont T-LET:]\mbox{}\\
        \vspace{-3ex}
        \begin{quote}
            \texttt{t = let $\mathtt{x_1 = t_1 (x_i}$ = $\mathtt{t_i)^{i \in 2..n}}$ in $\mathtt{t_0}$ end}
            \texttt{$\mathtt{t_1 : T_1}$}
        \end{quote}
        By the induction hypothesis, $\mathtt{t_1}$ is either a value or else there is some $\mathtt{t_1'}$ such that $\mathtt{t_1} \longrightarrow \mathtt{t_1'}$. If $\mathtt{t_1} \longrightarrow \mathtt{t_1'}$, then by E-LET, \texttt{t $\longrightarrow$ let $\mathtt{x_1 = t_1' (x_i}$ = $\mathtt{t_i)^{i \in 2..n}}$ in $\mathtt{t_0}$ end}. On the other hand, if $\mathtt{t_1}$ is a value then by the definition of let-bindings $n \geq 1$. If $n = 1$, then by E-LETV2, \texttt{t $\longrightarrow$ [$\mathtt{x_1 \mapsto v_1}$]$\mathtt{t_0}$}. On the other hand, if $n > 1$, then by E-LETV1, \texttt{t $\longrightarrow$ let $\mathtt{x_2 = t_2 (x_i}$ = $\mathtt{t_i)^{i \in 3..n}}$ in $\mathtt{t_0}$ end}.
        
        \item[\textit{\textmd{Case}} \normalfont T-DATATY:]\mbox{}\\
        \vspace{-3ex}
        \begin{quote}
            \texttt{t = $\mathtt{C_j}$ $\mathtt{t_1}$}\\
            \texttt{$\mathtt{t_1 : T_j}$}
        \end{quote}
        By the induction hypothesis, $\mathtt{t_1}$ is either a value or else there is some $\mathtt{t_1'}$ such that $\mathtt{t_1} \longrightarrow \mathtt{t_1'}$. If $\mathtt{t_1} \longrightarrow \mathtt{t_1'}$, then by E-DATATY, \texttt{t $\longrightarrow \mathtt{C_j}$ $\mathtt{t_1'}$}. On the other hand, if $\mathtt{t_1}$ is a value then case 8 of the canonical forms lemma assures us that it is a datatype value of form $\langle \mathtt{C_j = t_1} \rangle$.
        
        \item[\textit{\textmd{Case}} \normalfont T-CASE:]\mbox{}\\
        \vspace{-3ex}
        \begin{quote}
            \texttt{t = case $\mathtt{t_0}$ of $\mathtt{C_i}$ $\mathtt{x_i}$ => $\mathtt{t_i^{i \in 1..n}}$}\\
            $\mathtt{t_0 : \langle C_i : T_i \rangle^{i \in 1..n}}$
        \end{quote}
        By the induction hypothesis, $\mathtt{t_0}$ is either a value or else there is some $\mathtt{t_0'}$ such that $\mathtt{t_0} \longrightarrow \mathtt{t_0'}$. If $\mathtt{t_0} \longrightarrow \mathtt{t_0'}$, then by E-CASE, \texttt{t $\longrightarrow$ case $\mathtt{t_0'}$ of $\mathtt{C_i}$ $\mathtt{x_i}$ => $\mathtt{t_i^{i \in 1..n}}$}. On the other hand, if $\mathtt{t_0}$ is a value then case 8 of the canonical forms lemma assures us that it is a datatype value of form $\langle \mathtt{C_j = v_j} \rangle$ and so by E-CASE-TY, \texttt{t $\longrightarrow$ [$\mathtt{x_j \mapsto v_j}$]$\mathtt{t_j}$}.
    \end{description}
\end{proof}
\end{multicols}

\subsection{Proof of Preservation}
\label{sec:preservation}
\begin{multicols}{2}
\begin{proof}
    By induction on a derivation of $\mathtt{t : T}$. At each step of the induction, we assume that the desired property holds for all subderivations (i.e. that if $\mathtt{s : S}$ and $\mathtt{s \longrightarrow s'}$, then $\mathtt{s' : S}$, whenever $\mathtt{s : S}$ is proved by a subderivation of the present one) and proceed by case analysis on the final rule in the derivation.
    \begin{description}
        \item[\textit{\textmd{Case}} \normalfont T-VAR:]\mbox{}\\
            \vspace{-3ex}
            \begin{quote}
                \texttt{t = x}\\
                \texttt{x : T $\in \mathtt{\Gamma}$}
            \end{quote}
            If the last rule in the derivation is T-VAR, then we know from the form of this rule that $\mathtt{t}$ must be a variable of type $\mathtt{T}$. Thus, $\mathtt{t}$ is a value, so it cannot be the case that $\mathtt{t \longrightarrow t'}$ for any $\mathtt{t'}$, and the requirements of the theorem are vacuously satisfied.
            
        \item[\textit{\textmd{Case}} \normalfont T-ABS:]\mbox{}\\
            \vspace{-3ex}
            \begin{quote}
                \texttt{t = $\mathtt{\lambda x: T_1.t_2}$}\\
            \end{quote}
            If the last rule in the derivation is T-ABS, then we know from the form of this rule that $\mathtt{t}$ must be an abstraction. Thus, $\mathtt{t}$ is a value, so it cannot be the case that $\mathtt{t \longrightarrow t'}$ for any $\mathtt{t'}$, and the requirements of the theorem are vacuously satisfied.
            
        \item[\textit{\textmd{Case}} \normalfont T-APP:]\mbox{}\\
            \vspace{-3ex}
            \begin{quote}
                \texttt{t = $\mathtt{t_1}$ $\mathtt{t_2}$}\\
                \texttt{$\mathtt{\Gamma \vdash t_1 : T_{11} \rightarrow T_{12}}$}\\
                \texttt{$\mathtt{\Gamma \vdash t_2 : T_{11}}$}\\
                \texttt{T = $\mathtt{T_{12}}$}
            \end{quote}
            Looking at the evaluation rules with applications on the left-hand side, we find that there are three rules by which $\mathtt{t \longrightarrow t'}$ can be derived: E-APP1, E-APP2, and E-APPABS. We consider each case separately.
            \vspace{-3ex}
            \begin{quote}
                \item[\textit{\textmd{Subcase}} \normalfont E-APP1:]\mbox{}\\
                \vspace{-3ex}
                \begin{quote}
                    $\mathtt{t_1 \longrightarrow t_1'}$\\
                    \texttt{$\mathtt{t'}$ = $\mathtt{t_1'}$ $\mathtt{t_2}$}
                \end{quote}
                From the assumptions of the T-APP case, we have a subderivation of the original typing derivation whose conclusion is $\mathtt{\Gamma \vdash t_1 : T_{11} \rightarrow T_{12}}$. We can apply the induction hypothesis to this subderivation obtaining $\mathtt{\Gamma \vdash t_1' : T_{11} \rightarrow T_{12}}$. Combining this with the fact that $\mathtt{\Gamma \vdash t_2 : T_{11}}$, we can apply rule T-APP to conclude that $\mathtt{\Gamma \vdash t' : T}$.
                
                \item[\textit{\textmd{Subcase}} \normalfont E-APP2:]\mbox{}\\
                \mbox{}\\
                Similar to E-APP1.
                
                \item[\textit{\textmd{Subcase}} \normalfont E-APPABS:]\mbox{}\\
                \vspace{-3ex}
                \begin{quote}
                    \texttt{$\mathtt{t_1}$ = $\mathtt{\lambda x:T_{11}.t_{12}}$}
                    \texttt{$\mathtt{t_2}$ = $\mathtt{v_2}$}
                    \texttt{$\mathtt{t'}$ = [$\mathtt{x \mapsto v_2}$]$\mathtt{t_{12}}$}
                \end{quote}
                Using the inversion lemma, we can deconstruct the typing derivation for $\mathtt{\lambda x:T_{11}.t_{12}}$ yielding $\mathtt{\Gamma, x : T_{11} \vdash t_{12} : T_{12}}$. From this we obtain $\mathtt{\Gamma \vdash t' : T_{12}}$.
            \end{quote}
            
        \item[\textit{\textmd{Case}} \normalfont T-INT:]\mbox{}\\
        \vspace{-3ex}
        \begin{quote}
            \texttt{t = $\langle integer \rangle$}\\
            \texttt{T = int}
        \end{quote}
        If the last rule in the derivation is T-INT, then we know from the form of this rule that $\mathtt{t}$ must be an integer value as described in Section \ref{sec:swvgrammar} and that $\mathtt{T}$ is \texttt{int}. Thus, $\mathtt{t}$ is a value, so it cannot be the case that $\mathtt{t \longrightarrow t'}$ for any $\mathtt{t'}$, and the requirements of the theorem are vacuously satisfied.
        
        \item[\textit{\textmd{Case}} \normalfont T-REAL, T-STRING, T-BIT, T-NIL:]\mbox{}\\
        \mbox{}\\
        Similar to T-INT.
        
        \item[\textit{\textmd{Case}} \normalfont T-INT-NEG:]\mbox{}\\
        \vspace{-3ex}
        \begin{quote}
            \texttt{t = \textasciitilde$\mathtt{t_1}$}\\
            \texttt{T = int}
        \end{quote}
        If the last rule in the derivation is T-INT-NEG, then we know from the form of this rule that $\mathtt{t}$ must have the form \texttt{\textasciitilde$\mathtt{t_1}$} and that $\mathtt{T}$ is \texttt{int}. We must further have a subderivation with conclusion $\mathtt{t_1 : int}$. Looking at the evaluation rules with integer negation on the left-hand side, there is one rule by which $\mathtt{t \longrightarrow t'}$ can be derived: E-NEG.
        \vspace{-3ex}
        \begin{quote}
            \item[\textit{\textmd{Subcase}} \normalfont E-NEG:]\mbox{}\\
            \vspace{-3ex}
            \begin{quote}
                $\mathtt{t_1 \longrightarrow t_1'}$\\
                \texttt{$\mathtt{t'}$ = \textasciitilde$\mathtt{t_1'}$}
            \end{quote}
            From the assumptions of the T-INT-NEG case, we have a subderivation of the original typing derivation with conclusion $\mathtt{t_1 : int}$. We may apply the induction hypothesis to this subderivation, obtaining $\mathtt{t_1' : int}$. Then, by T-INT-NEG, \texttt{\textasciitilde$\mathtt{t_1' : int}$} and so $\mathtt{t' : int}$.
        \end{quote}
        
        \item[\textit{\textmd{Case}} \normalfont T-INT-ADD:]\mbox{}\\
        \vspace{-3ex}
        \begin{quote}
            \texttt{t = $\mathtt{t_1}$ + $\mathtt{t_2}$}\\
            \texttt{T = int}
        \end{quote}
        If the last rule in the derivation is T-INT-ADD, then we know from the form of this rule that $\mathtt{t}$ must have the form \texttt{$\mathtt{t_1}$ + $\mathtt{t_2}$} and that $\mathtt{T}$ is \texttt{int}. We must further have subderivations with conclusions $\mathtt{t_1 : int}$ and $\mathtt{t_2 : int}$. Looking at the evaluation rules with integer addition on the left-hand side, there are two rules by which $\mathtt{t \longrightarrow t'}$ can be derived: E-INT-ADD1 and E-INT-ADD2.
        \vspace{-3ex}
        \begin{quote}
            \item[\textit{\textmd{Subcase}} \normalfont E-INT-ADD1:]\mbox{}\\
            \vspace{-3ex}
            \begin{quote}
                $\mathtt{t_1 \longrightarrow t_1'}$\\
                \texttt{$\mathtt{t'}$ = $\mathtt{t_1'}$ + $\mathtt{t_2}$}
            \end{quote}
            From the assumptions of the T-INT-ADD case, we have a subderivation of the original typing derivation with conclusion $\mathtt{t_1 : int}$. We may apply the induction hypothesis to this subderivation, obtaining $\mathtt{t_1' : int}$. We also have a subderivation with conclusion $\mathtt{t_2 : int}$. Thus, by T-INT-ADD, \texttt{$\mathtt{t_1'}$ + $\mathtt{t_2 : int}$} and so $\mathtt{t' : int}$.
            
            \item[\textit{\textmd{Subcase}} \normalfont E-INT-ADD2:]\mbox{}\\
            \vspace{-3ex}
            \begin{quote}
                $\mathtt{t_2 \longrightarrow t_2'}$\\
                \texttt{$\mathtt{t'}$ = $\mathtt{v_1}$ + $\mathtt{t_2'}$}
            \end{quote}
            From the assumptions of the T-INT-ADD case, we have a subderivation of the original typing derivation with conclusion $\mathtt{t_2 : int}$. We may apply the induction hypothesis to this subderivation, obtaining $\mathtt{t_2' : int}$. We also have a subderivation with conclusion $\mathtt{t_1 : int}$ and since $\mathtt{v_1}$ is the value form denotation of $\mathtt{t_1}$, it also holds that $\mathtt{v_1 : int}$. Thus, by T-INT-ADD, \texttt{$\mathtt{v_1}$ + $\mathtt{t_2' : int}$} and so $\mathtt{t' : int}$.
        \end{quote}
        
        \item[\textit{\textmd{Case}} \normalfont T-INT-(SUB/MUL/DIV/MOD):]\mbox{}\\
        \mbox{}\\
        Similar to T-INT-ADD.
        
        \item[\textit{\textmd{Case}} \normalfont T-REAL-(ADD/SUB/MUL/DIV):]\mbox{}\\
        \mbox{}\\
        Similar to T-INT-ADD.
        
        \item[\textit{\textmd{Case}} \normalfont T-BIT-NEG:]\mbox{}\\
        \vspace{-3ex}
        \begin{quote}
            \texttt{t = !$\mathtt{t_1}$}\\
            \texttt{T = $\mathtt{T_H}$}
        \end{quote}
        If the last rule in the derivation is T-BIT-NEG, then we know from the form of this rule that $\mathtt{t}$ must have the form \texttt{!$\mathtt{t_1}$} and that $\mathtt{T}$ is $\mathtt{T_H}$. Looking at the evaluation rules there is only one rule by which $\mathtt{t} \longrightarrow \mathtt{t'}$ can be derived: E-BIT-NEG1. We may apply the induction hypothesis to our subderivation, obtaining $\mathtt{t_1' : T_H}$. Then, by T-BIT-NEG, \texttt{!$\mathtt{t_1' : T_H}$} and so $\mathtt{t' : T_H}$.
        
        \item[\textit{\textmd{Case}} \normalfont T-AND:]\mbox{}\\
        \vspace{-3ex}
        \begin{quote}
            \texttt{t = $\mathtt{t_1}$ \& $\mathtt{t_2}$}\\
            \texttt{T = $\mathtt{T_H}$}
        \end{quote}
        If the last rule in the derivation is T-AND, then we know from the form of this rule that $\mathtt{t}$ must have the form \texttt{$\mathtt{t_1}$ \& $\mathtt{t_2}$} and that $\mathtt{T}$ is $\mathtt{T_H}$. We must further have subderivations with conclusions $\mathtt{t_1 : T_H}$ and $\mathtt{t_2 : T_H}$. Looking at the evaluation rules with \texttt{\&} on the left-hand side, there are two rules by which $\mathtt{t \longrightarrow t'}$ can be derived: E-AND1 and E-AND2.
        \vspace{-3ex}
        \begin{quote}
            \item[\textit{\textmd{Subcase}} \normalfont E-AND1:]\mbox{}\\
            \vspace{-3ex}
            \begin{quote}
                $\mathtt{t_1 \longrightarrow t_1'}$\\
                \texttt{$\mathtt{t'}$ = $\mathtt{t_1'}$ \& $\mathtt{t_2}$}
            \end{quote}
            From the assumptions of the T-AND case, we have a subderivation of the original typing derivation with conclusion $\mathtt{t_1 : T_H}$. We may apply the induction hypothesis to this subderivation, obtaining $\mathtt{t_1' : T_H}$. We also have a subderivation with conclusion $\mathtt{t_2 : T_H}$. Thus, by T-AND, \texttt{$\mathtt{t_1'}$ \& $\mathtt{t_2 : T_H}$} and so $\mathtt{t' : T_H}$.
            
            \item[\textit{\textmd{Subcase}} \normalfont E-AND2:]\mbox{}\\
            \vspace{-3ex}
            \begin{quote}
                $\mathtt{t_2 \longrightarrow t_2'}$\\
                \texttt{$\mathtt{t'}$ = $\mathtt{v_1}$ \& $\mathtt{t_2'}$}
            \end{quote}
            From the assumptions of the T-AND case, we have a subderivation of the original typing derivation with conclusion $\mathtt{t_2 : T_H}$. We may apply the induction hypothesis to this subderivation, obtaining $\mathtt{t_2' : T_H}$. We also have a subderivation with conclusion $\mathtt{t_1 : T_H}$ and since $\mathtt{v_1}$ is the value form denotation of $\mathtt{t_1}$, it also holds that $\mathtt{v_1 : T_H}$. Thus, by T-INT-ADD, \texttt{$\mathtt{v_1}$ \& $\mathtt{t_2' : T_H}$} and so $\mathtt{t' : T_H}$.
        \end{quote}
        
        \item[\textit{\textmd{Case}} \normalfont T-(OR/XOR):]\mbox{}\\
        \mbox{}\\
        Similar to T-AND.
        
        \item[\textit{\textmd{Case}} \normalfont T-AND-RED:]\mbox{}\\
        \vspace{-3ex}
        \begin{quote}
            \texttt{t = \&->$\mathtt{t_1}$}\\
            \texttt{T = $\mathtt{bit}$[n]}
        \end{quote}
        If the last rule in the derivation is T-AND-RED, then we know from the form of this rule that $\mathtt{t}$ must have the form \texttt{\&->$\mathtt{t_1}$} and that $\mathtt{T}$ is \texttt{bit[n]}. Looking at the evaluation rules there is only one rule by which $\mathtt{t} \longrightarrow \mathtt{t'}$ can be derived: E-AND-RED1. We may apply the induction hypothesis to our subderivation, obtaining $\mathtt{t_1' : T_H}$. Then, by T-AND-RED, \texttt{\&->$\mathtt{t_1' : bit}$[n]} and so \texttt{$\mathtt{t' : bit}$[n]}.
        
        \item[\textit{\textmd{Case}} \normalfont T-(OR/XOR)-RED:]\mbox{}\\
        \mbox{}\\
        Similar to T-AND-RED.
        
         \item[\textit{\textmd{Case}} \normalfont T-SLL:]\mbox{}\\
        \vspace{-3ex}
        \begin{quote}
            \texttt{t = $\mathtt{t_1}$ <}\texttt{< $\mathtt{t_2}$}\\
            \texttt{T = bit[n]}
        \end{quote}
        If the last rule in the derivation is T-SLL, then we know from the form of this rule that $\mathtt{t}$ must have the form \texttt{$\mathtt{t_1}$ <}\texttt{< $\mathtt{t_2}$} and that $\mathtt{T}$ is \texttt{bit[n]}. We must further have subderivations with conclusions \texttt{$\mathtt{t_1 : bit}$[n]} and \texttt{$\mathtt{t_2 : bit}$[m]}. Looking at the evaluation rules with left-shifting on the left-hand side, there are two rules by which $\mathtt{t \longrightarrow t'}$ can be derived: E-SLL1 and E-SLL2.
        \vspace{-3ex}
        \begin{quote}
            \item[\textit{\textmd{Subcase}} \normalfont E-SLL1:]\mbox{}\\
            \vspace{-3ex}
            \begin{quote}
                $\mathtt{t_1 \longrightarrow t_1'}$\\
                \texttt{$\mathtt{t'}$ = $\mathtt{t_1'}$ <}\texttt{< $\mathtt{t_2}$}
            \end{quote}
            From the assumptions of the T-SLL case, we have a subderivation of the original typing derivation with conclusion \texttt{$\mathtt{t_1 : bit}$[n]}. We may apply the induction hypothesis to this subderivation, obtaining \texttt{$\mathtt{t_1' : bit}$[n]}. We also have a subderivation with conclusion \texttt{$\mathtt{t_2 : bit}$[m]}. Thus, by T-SLL, \texttt{$\mathtt{t_1'}$ <}\texttt{< $\mathtt{t_2 : bit}$[n]} and so \texttt{$\mathtt{t' : bit}$[n]}.
            
            \item[\textit{\textmd{Subcase}} \normalfont E-SLL2:]\mbox{}\\
            \vspace{-3ex}
            \begin{quote}
                $\mathtt{t_2 \longrightarrow t_2'}$\\
                \texttt{$\mathtt{t'}$ = $\mathtt{v_1}$ <}\texttt{< $\mathtt{t_2'}$}
            \end{quote}
            From the assumptions of the T-SLL case, we have a subderivation of the original typing derivation with conclusion \texttt{$\mathtt{t_2 : bit}$[m]}. We may apply the induction hypothesis to this subderivation, obtaining \texttt{$\mathtt{t_2' : bit}$[m]}. We also have a subderivation with conclusion \texttt{$\mathtt{t_1 : bit}$[n]} and since $\mathtt{v_1}$ is the value form denotation of $\mathtt{t_1}$, it also holds that \texttt{$\mathtt{v_1 : bit}$[n]}. Thus, by T-SLL, \texttt{$\mathtt{v_1}$ <}\texttt{< $\mathtt{t_2' : bit}$[n]} and so \texttt{$\mathtt{t' : bit}$[n]}.
        \end{quote}
        
        \item[\textit{\textmd{Case}} \normalfont T-(SRL/SRA):]\mbox{}\\
        \mbox{}\\
        Similar to T-SLL.
        
        \item[\textit{\textmd{Case}} \normalfont T-EQ:]\mbox{}\\
        \vspace{-3ex}
        \begin{quote}
            \texttt{t = $\mathtt{t_1}$ = $\mathtt{t_2}$}\\
            \texttt{T = int}
        \end{quote}
        If the last rule in the derivation is T-EQ, then we know from the form of this rule that $\mathtt{t}$ must have the form \texttt{$\mathtt{t_1}$ = $\mathtt{t_2}$} and that $\mathtt{T}$ is \texttt{int}. We must further have subderivations with conclusions $\mathtt{t_1 : T_S}$ and $\mathtt{t_2 : T_S}$ for some $\mathtt{T_S}$. Looking at the evaluation rules with equality on the left-hand side, there are two rules by which $\mathtt{t \longrightarrow t'}$ can be derived: E-EQ1 and E-EQ2.
        \vspace{-3ex}
        \begin{quote}
            \item[\textit{\textmd{Subcase}} \normalfont E-EQ1:]\mbox{}\\
            \vspace{-3ex}
            \begin{quote}
                $\mathtt{t_1 \longrightarrow t_1'}$\\
                \texttt{$\mathtt{t'}$ = $\mathtt{t_1'}$ = $\mathtt{t_2}$}
            \end{quote}
            From the assumptions of the T-EQ case, we have a subderivation of the original typing derivation with conclusion $\mathtt{t_1 : T_S}$. We may apply the induction hypothesis to this subderivation, obtaining $\mathtt{t_1' : T_S}$. We also have a subderivation with conclusion $\mathtt{t_2 : T_S}$. Thus, by T-EQ, \texttt{$\mathtt{t_1'}$ = $\mathtt{t_2 : int}$} and so $\mathtt{t' : int}$.
            
            \item[\textit{\textmd{Subcase}} \normalfont E-INT-ADD2:]\mbox{}\\
            \vspace{-3ex}
            \begin{quote}
                $\mathtt{t_2 \longrightarrow t_2'}$\\
                \texttt{$\mathtt{t'}$ = $\mathtt{v_1}$ = $\mathtt{t_2'}$}
            \end{quote}
            From the assumptions of the T-EQ case, we have a subderivation of the original typing derivation with conclusion $\mathtt{t_2 : T_S}$. We may apply the induction hypothesis to this subderivation, obtaining $\mathtt{t_2' : T_S}$. We also have a subderivation with conclusion $\mathtt{t_1 : T_S}$ and since $\mathtt{v_1}$ is the value form denotation of $\mathtt{t_1}$, it also holds that $\mathtt{v_1 : T_S}$. Thus, by T-EQ, \texttt{$\mathtt{v_1}$ = $\mathtt{t_2' : int}$} and so $\mathtt{t' : int}$.
        \end{quote}
        
        \item[\textit{\textmd{Case}} \normalfont T-(NEQ/LT/LEQ/GT/GEQ):]\mbox{}\\
        \mbox{}\\
        Similar to T-EQ.
        
        \item[\textit{\textmd{Case}} \normalfont T-CONS:]\mbox{}\\
        \vspace{-3ex}
        \begin{quote}
            \texttt{t = $\mathtt{t_1}$::$\mathtt{t_2}$}\\
            \texttt{T = $\mathtt{T_S}$ list}
        \end{quote}
        If the last rule in the derivation is T-CONS, then we know from the form of this rule that $\mathtt{t}$ must have the form \texttt{$\mathtt{t_1}$::$\mathtt{t_2}$} and that $\mathtt{T}$ is \texttt{$\mathtt{T_S}$ list}. We must further have subderivations with conclusions $\mathtt{t_1 : T_S}$ and \texttt{$\mathtt{t_2 : T_S}$ list} for some $\mathtt{T_S}$. Looking at the evaluation rules with the $\mathtt{cons}$ operator on the left-hand side, there are two rules by which $\mathtt{t \longrightarrow t'}$ can be derived: E-CONS1 and E-CONS2.
        \vspace{-3ex}
        \begin{quote}
            \item[\textit{\textmd{Subcase}} \normalfont E-CONS1:]\mbox{}\\
            \vspace{-3ex}
            \begin{quote}
                $\mathtt{t_1 \longrightarrow t_1'}$\\
                \texttt{$\mathtt{t'}$ = $\mathtt{t_1'}$::$\mathtt{t_2}$}
            \end{quote}
            From the assumptions of the T-CONS case, we have a subderivation of the original typing derivation with conclusion $\mathtt{t_1 : T_S}$. We may apply the induction hypothesis to this subderivation, obtaining $\mathtt{t_1' : T_S}$. We also have a subderivation with conclusion \texttt{$\mathtt{t_2 : T_S}$ list}. Thus, by T-CONS, \texttt{$\mathtt{t_1'}$::$\mathtt{t_2 : T_S}$ list} and so \texttt{$\mathtt{t' : T_S}$ list}.
            
            \item[\textit{\textmd{Subcase}} \normalfont E-CONS2:]\mbox{}\\
            \vspace{-3ex}
            \begin{quote}
                $\mathtt{t_2 \longrightarrow t_2'}$\\
                \texttt{$\mathtt{t'}$ = $\mathtt{v_1}$:: $\mathtt{t_2'}$}
            \end{quote}
            From the assumptions of the T-CONS case, we have a subderivation of the original typing derivation with conclusion \texttt{$\mathtt{t_2 : T_S}$ list}. We may apply the induction hypothesis to this subderivation, obtaining \texttt{$\mathtt{t_2' : T_S}$ list}. We also have a subderivation with conclusion $\mathtt{t_1 : T_S}$ and since $\mathtt{v_1}$ is the value form denotation of $\mathtt{t_1}$, it also holds that $\mathtt{v_1 : T_S}$. Thus, by T-CONS, \texttt{$\mathtt{v_1}$::$\mathtt{t_2' : T_S}$ list} and so \texttt{$\mathtt{t' : T_S}$ list}.
        \end{quote}
        
        \item[\textit{\textmd{Case}} \normalfont T-IFELSE:]\mbox{}\\
        \vspace{-3ex}
        \begin{quote}
            \texttt{t = if $\mathtt{t_1}$ then $\mathtt{t_2}$ else $\mathtt{t_3}$}\\
            \texttt{T = $\mathtt{T_0}$}
        \end{quote}
        If the last rule in the derivation is T-IFELSE, then we know from the form of this rule that $\mathtt{t}$ must have the form \texttt{if $\mathtt{t_1}$ then $\mathtt{t_2}$ else $\mathtt{t_3}$} and that $\mathtt{T}$ is $\mathtt{T_0}$ for some $\mathtt{T_0}$. We must further have subderivations with conclusions $\mathtt{t_1 : int}$, $\mathtt{t_2 : T_0}$ and $\mathtt{t_3 : T_0}$. Looking at the evaluation rules with a conditional on the left-hand side, there are three rules by which $\mathtt{t \longrightarrow t'}$ can be derived: E-IFELSE, E-IFELSE-T, and E-IFELSE-F.
        \vspace{-3ex}
        \begin{quote}
            \item[\textit{\textmd{Subcase}} \normalfont E-IFELSE:]\mbox{}\\
            \vspace{-3ex}
            \begin{quote}
                $\mathtt{t_1 \longrightarrow t_1'}$\\
                \texttt{$\mathtt{t'}$ = if $\mathtt{t_1'}$ then $\mathtt{t_2}$ else $\mathtt{t_3}$}
            \end{quote}
            From the assumptions of the T-IFELSE case, we have a subderivation of the original typing derivation with conclusion $\mathtt{t_1 : int}$. We may apply the induction hypothesis to this subderivation, obtaining $\mathtt{t_1' : int}$. We also have subderivations with conclusions $\mathtt{t_2 : T_0}$ and $\mathtt{t_3 : T_0}$. Thus, by T-IFELSE, \texttt{if $\mathtt{t_1'}$ then $\mathtt{t_2}$ else $\mathtt{t_3: T_0}$} and so $\mathtt{t' : T_0}$.
            
            \item[\textit{\textmd{Subcase}} \normalfont E-IFELSE-T:]\mbox{}\\
            \vspace{-3ex}
            \begin{quote}
                $\mathtt{t_1 \neq 0}$\\
                \texttt{$\mathtt{t'}$ = $\mathtt{t_2}$}
            \end{quote}
            If $\mathtt{t} \longrightarrow \mathtt{t'}$ is derived using E-IFELSE-T, then from the form of this rule we see that $\mathtt{t_1 \neq 0}$ and the resulting term $\mathtt{t'}$ is the second subexpression $\mathtt{t_2}$. This means we are finished, since we know (by the assumptions of the T-IFELSE case) that $\mathtt{t_2 : T_0}$, which is what we need.
            
            \item[\textit{\textmd{Subcase}} \normalfont E-IFELSE-F:]\mbox{}\\
            \mbox{}\\
            Similar to E-IFELSE-T.
        \end{quote}
        
        \item[\textit{\textmd{Case}} \normalfont T-REF:]\mbox{}\\
        \vspace{-3ex}
        \begin{quote}
            \texttt{t = ref $\mathtt{t_1}$ $|$ $\mu$}\\
            \texttt{T = $\mathtt{T_S}$ ref}
        \end{quote}
        If the last rule in the derivation is T-REF, then we know from the form of this rule that $\mathtt{t}$ must have the form \texttt{ref $\mathtt{t_1}$} and that $\mathtt{T}$ is \texttt{$\mathtt{T_S}$ ref} for some $\mathtt{T_S}$. We must further have a subderivation with conclusion $\mathtt{t_1 : T_S}$. Looking at the evaluation rules with $\mathtt{ref}$ on the left-hand side, there are two rules by which $\mathtt{t \longrightarrow t'}$ can be derived: E-REF, and E-REFV.
        \vspace{-3ex}
        \begin{quote}
            \item[\textit{\textmd{Subcase}} \normalfont E-REF:]\mbox{}\\
            \vspace{-3ex}
            \begin{quote}
                $\mathtt{t_1}$ $|$ $\mu \longrightarrow \mathtt{t_1'}$ $|$ $\mu'$\\
                \texttt{$\mathtt{t'}$ = ref $\mathtt{t_1'}$ $|$ $\mu'$}
            \end{quote}
            From the assumptions of the T-REF case, we have a subderivation of the original typing derivation with conclusion $\mathtt{t_1 : T_S}$. We may apply the induction hypothesis to this subderivation, obtaining $\mathtt{t_1' : T_S}$. Thus, by T-REF, \texttt{ref $\mathtt{t_1'}$ $|$ $\mu' : \mathtt{T_S}$ ref} and so \texttt{$\mathtt{t' : T_S}$ ref}.
            
            \item[\textit{\textmd{Subcase}} \normalfont E-REFV:]\mbox{}\\
            \vspace{-3ex}
            \begin{quote}
                \texttt{$\mathtt{t_1}$ = $\mathtt{v_1}$}\\
                \texttt{$\mathtt{t'}$ = $l$ $|$ $(\mu, l \mapsto \mathtt{v_1}$}
            \end{quote}
            If $\mathtt{t} \longrightarrow \mathtt{t'}$ is derived using E-REFV, then from the form of this rule we see that $\mathtt{t_1}$ = $\mathtt{v_1}$ and the resulting term $\mathtt{t'}$ is a location in the store $\mu$ augmented with a mapping between the location and $\mathtt{v_1}$. We know that locations in store $\mu$ are of type \texttt{$\mathtt{T}$ ref} where $\mathtt{T}$ is the type of the value to which it is mapped. In this case, that means that the location $l$ is of type \texttt{$\mathtt{T_S}$ ref} which is what we need.
        \end{quote}
        
        \item[\textit{\textmd{Case}} \normalfont T-ASSIGN:]\mbox{}\\
        \vspace{-3ex}
        \begin{quote}
            \texttt{t = $\mathtt{t_1}$ := $\mathtt{t_2}$}\\
            \texttt{T = unit}
        \end{quote}
        If the last rule in the derivation is T-ASSIGN, then we know from the form of this rule that $\mathtt{t}$ must have the form \texttt{$\mathtt{t_1}$ := $\mathtt{t_2}$} and that $\mathtt{T}$ is \texttt{unit}. We must further have a subderivation with conclusion \texttt{$\mathtt{t_1 : T_S}$ ref} and $\mathtt{t_2 : T_S}$ for some $\mathtt{T_S}$. Looking at the evaluation rules with the assignment operator on the left-hand side, there are three rules by which $\mathtt{t \longrightarrow t'}$ can be derived: E-ASSIGN1, E-ASSIGN2, and E-ASSIGN.
        \vspace{-3ex}
        \begin{quote}
            \item[\textit{\textmd{Subcase}} \normalfont E-ASSIGN1:]\mbox{}\\
            \vspace{-3ex}
            \begin{quote}
                $\mathtt{t_1}$ $|$ $\mu \longrightarrow \mathtt{t_1'}$ $|$ $\mu'$\\
                \texttt{$\mathtt{t'}$ = $\mathtt{t_1'}$ := $\mathtt{t_2}$ $|$ $\mu'$}
            \end{quote}
            From the assumptions of the T-ASSIGN case, we have subderivations of the original typing derivation with conclusions \texttt{$\mathtt{t_1 : T_S}$ ref} and $\mathtt{t_2 : T_S}$. We may apply the induction hypothesis to the first subderivation, obtaining \texttt{$\mathtt{t_1' : T_S}$ ref}. Thus, by T-ASSIGN, \texttt{$\mathtt{t_1'}$ := $\mathtt{t_2 : unit}$} and so \texttt{$\mathtt{t' : unit}$}.
            
            \item[\textit{\textmd{Subcase}} \normalfont E-ASSIGN2:]\mbox{}\\
            \vspace{-3ex}
            \begin{quote}
                $\mathtt{t_2}$ $|$ $\mu \longrightarrow \mathtt{t_2'}$ $|$ $\mu'$\\
                \texttt{$\mathtt{t'}$ = $\mathtt{v_1}$ := $\mathtt{t_2'}$ $|$ $\mu'$}
            \end{quote}
            From the assumptions of the T-ASSIGN case, we have subderivations of the original typing derivation with conclusions \texttt{$\mathtt{v_1 : T_S}$ ref} and $\mathtt{t_2 : T_S}$. We may apply the induction hypothesis to the second subderivation, obtaining \texttt{$\mathtt{t_2' : T_S}$}. Thus, by T-ASSIGN, \texttt{$\mathtt{v_1'}$ := $\mathtt{t_2' : unit}$} and so \texttt{$\mathtt{t' : unit}$}.
            
            \item[\textit{\textmd{Subcase}} \normalfont E-ASSIGN:]\mbox{}\\
            \vspace{-3ex}
            \begin{quote}
                \texttt{$\mathtt{t_1}$ = $l$}\\
                \texttt{$\mathtt{t_2}$ = $\mathtt{v_2}$}\\
                \texttt{$\mathtt{t'} = unit$ $|$ [$l \mapsto \mathtt{v_1}$]$\mu$}
            \end{quote}
            Immediate since $\mathtt{t' = unit}$.
        \end{quote}
        
        \item[\textit{\textmd{Case}} \normalfont T-DEREF:]\mbox{}\\
        \vspace{-3ex}
        \begin{quote}
            \texttt{t = \$$\mathtt{t_1}$}\\
            \texttt{T = $\mathtt{T_S}$}
        \end{quote}
        If the last rule in the derivation is T-DEREF, then we know from the form of this rule that $\mathtt{t}$ must have the form \texttt{\$$\mathtt{t_1}$} and that $\mathtt{T}$ is \texttt{$\mathtt{T_S}$}. We must further have a subderivation with conclusion \texttt{$\mathtt{t_1 : T_S}$ ref} for some $\mathtt{T_S}$. Looking at the evaluation rules with the dereferencing operator on the left-hand side, there are two rules by which $\mathtt{t \longrightarrow t'}$ can be derived: E-DEREF and E-DEREFLOC.
        \vspace{-3ex}
        \begin{quote}
            \item[\textit{\textmd{Subcase}} \normalfont E-DEREF:]\mbox{}\\
            \vspace{-3ex}
            \begin{quote}
                $\mathtt{t_1}$ $|$ $\mu \longrightarrow \mathtt{t_1'}$ $|$ $\mu'$\\
                \texttt{$\mathtt{t'}$ = \$$\mathtt{t_1'}$ $|$ $\mu'$}
            \end{quote}
            From the assumptions of the T-DEREF case, we have a subderivation of the original typing derivation with conclusion \texttt{$\mathtt{t_1 : T_S}$ ref}. We may apply the induction hypothesis, obtaining \texttt{$\mathtt{t_1' : T_S}$ ref}. Thus, by T-DEREF, \texttt{\$$\mathtt{t_1' : T_S}$} and so \texttt{$\mathtt{t' : T_S}$}.
            
            \item[\textit{\textmd{Subcase}} \normalfont E-DEREFLOC:]\mbox{}\\
            \vspace{-3ex}
            \begin{quote}
                \texttt{$\mathtt{t_1 = }l$ $|$ $\mu$}\\
                \texttt{$\mathtt{t'}$ = \$$l$ $|$ $\mu$}
            \end{quote}
            From the assumptions of the T-DEREF case, we have a subderivation of the original typing derivation with conclusion \texttt{$\mathtt{t_1 : T_S}$ ref}. In this subcase we have that $\mathtt{t_1}$ is the location value $l$. We know that $l$ has type \texttt{$\mathtt{T_S}$ ref}. Thus, by T-ASSIGN, \texttt{\$$l \mathtt{ : T_S}$} and so \texttt{$\mathtt{t' : T_S}$}.
        \end{quote}
        
        \item[\textit{\textmd{Case}} \normalfont T-SW:]\mbox{}\\
        \vspace{-3ex}
        \begin{quote}
            \texttt{t = sw $\mathtt{t_1}$}\\
            \texttt{T = $\mathtt{T_H}$ sw}
        \end{quote}
        If the last rule in the derivation is T-SW, then we know from the form of this rule that $\mathtt{t}$ must have the form \texttt{sw $\mathtt{t_1}$} and that $\mathtt{T}$ is \texttt{$\mathtt{T_H}$ sw} for some $\mathtt{T_H}$. We must further have a subderivation with conclusion $\mathtt{t_1 : T_H}$. Looking at the evaluation rules with $\mathtt{sw}$ on the left-hand side, there is one rule by which $\mathtt{t \longrightarrow t'}$ can be derived: E-SW. By the induction hypothesis, $\mathtt{t_1' : T_H}$ and so by T-SW, \texttt{sw $\mathtt{t_1' : T_H}$ sw}. Therefore, \texttt{$\mathtt{t' : T_H}$ sw} as needed.
        
        \item[\textit{\textmd{Case}} \normalfont T-UNSW:]\mbox{}\\
        \vspace{-3ex}
        \begin{quote}
            \texttt{t = unsw $\mathtt{t_1}$}\\
            \texttt{T = $\mathtt{T_H}$}
        \end{quote}
        If the last rule in the derivation is T-UNSW, then we know from the form of this rule that $\mathtt{t}$ must have the form \texttt{unsw $\mathtt{t_1}$} and that $\mathtt{T}$ is \texttt{$\mathtt{T_H}$} for some $\mathtt{T_H}$. We must further have a subderivation with conclusion \texttt{$\mathtt{t_1 : T_H}$ sw}. Looking at the evaluation rules with $\mathtt{hw}$ on the left-hand side, there is one rule by which $\mathtt{t \longrightarrow t'}$ can be derived: E-UNSW. By the induction hypothesis, \texttt{$\mathtt{t_1' : T_H}$ sw} and so by T-SW, \texttt{unsw $\mathtt{t_1' : T_H}$}. Therefore, \texttt{$\mathtt{t' : T_H}$} as needed.
        
        \item[\textit{\textmd{Case}} \normalfont T-ARR-ACC:]\mbox{}\\
        \vspace{-3ex}
        \begin{quote}
            \texttt{t = $\mathtt{t_1}$[$\mathtt{t_2}$]}\\
            \texttt{T = $\mathtt{T_H}$}
        \end{quote}
        If the last rule in the derivation is T-ARR-ACC, then we know from the form of this rule that $\mathtt{t}$ must have the form \texttt{$\mathtt{t_1}$[$\mathtt{t_2}$]} and that $\mathtt{T}$ is \texttt{$\mathtt{T_H}$}. We must further have a subderivation with conclusion \texttt{$\mathtt{t_1 : T_H}$[n]} and $\mathtt{t_2 : int}$. Looking at the evaluation rules with array access on the left-hand side, there are two rules by which $\mathtt{t \longrightarrow t'}$ can be derived: E-ARR-ACC and E-ARR-ACC1.
        \vspace{-3ex}
        \begin{quote}
            \item[\textit{\textmd{Subcase}} \normalfont E-ARR-ACC:]\mbox{}\\
            \vspace{-3ex}
            \begin{quote}
                \texttt{$\mathtt{t_1}$ = $\mathtt{v_1}$}\\
                \texttt{$\mathtt{t_2}$ = $\mathtt{v_2}$}\\
                \texttt{$\mathtt{t'}$ = $\mathtt{v_{1, v_2}}$}
            \end{quote}
            From the assumptions of the T-ARR-ACC case, we have subderivations of the original typing derivation with conclusions \texttt{$\mathtt{t_1 : T_H}$[n]} and $\mathtt{t_2 : int}$. Thus, the $\mathtt{v_2}^{th}$ element of $\mathtt{v_1}$ is of type $\mathtt{T_H}$, and so $\mathtt{t' : T_H}$.
            
            \item[\textit{\textmd{Subcase}} \normalfont E-ARR-ACC1:]\mbox{}\\
            \vspace{-3ex}
            \begin{quote}
                \texttt{$\mathtt{t_1} \longrightarrow \mathtt{t_1'}$}\\
                \texttt{$\mathtt{t'}$ = $\mathtt{v_1}$[$\mathtt{t_2'}$]}
            \end{quote}
            From the assumptions of the T-ARR-ACC case, we have subderivations of the original typing derivation with conclusions \texttt{$\mathtt{t_1 : T_H}$[n]} and $\mathtt{t_2 : int}$. By the induction hypothesis, $\mathtt{t_2' : int}$. Therefore, by T-ARR-ACC, \texttt{$\mathtt{v_1}$[$\mathtt{t_2'}$] $\mathtt{: T_H}$} and so $\mathtt{t' : T_H}$.
        \end{quote}

        \item[\textit{\textmd{Case}} \normalfont T-RCD:]\mbox{}\\
        \vspace{-3ex}
        \begin{quote}
            \texttt{t = \{$\mathtt{l_i = t_i^{i \in 1..n}}$\}}\\
            \texttt{T = \{$\mathtt{l_i : T_i^{i \in 1..n}}$\}}
        \end{quote}
        If the last rule in the derivation is T-RCD, then we know from the form of this rule that $\mathtt{t}$ must have the form \texttt{\{$\mathtt{l_i = t_i^{i \in 1..n}}$\}} and that $\mathtt{T}$ is \texttt{\{$\mathtt{l_i : T_i^{i \in 1..n}}$\}}. We must further have a subderivation for each $i$ with conclusion \texttt{$\mathtt{t_i : T_i}$}. Looking at the evaluation rules with record creation on the left-hand side, there are two rules by which $\mathtt{t \longrightarrow t'}$ can be derived: E-RCD and E-RCDV.
        \vspace{-3ex}
        \begin{quote}
            \item[\textit{\textmd{Subcase}} \normalfont E-RCD:]\mbox{}\\
            \vspace{-3ex}
            \begin{quote}
                \texttt{$\mathtt{t_j} \longrightarrow \mathtt{t_j'}$}\\
                \texttt{$\mathtt{t'}$ = \{$\mathtt{l_i = v_i^{i \in 1..j-1}, l_j = t_j', l_k = t_k^{k \in j+1..n}}$\}}
            \end{quote}
            From the assumptions of the T-RCD case, we have for each $i$ a subderivation of the original typing derivation with conclusion \texttt{$\mathtt{t_i : T_i}$}. Thus, $\mathtt{t_j : T_j}$. We may apply the induction hypothesis, obtaining $\mathtt{t_j' : T_j}$. Thus, by T-RCD, we obtain \texttt{\{$\mathtt{l_i = v_i^{i \in 1..j-1}, l_j = t_j', l_k = t_k^{k \in j+1..n} : \{l_i : T_i^{i \in 1..n}\}}$} and so \texttt{$\mathtt{t' : \{l_i : T_i^{i \in 1..n}\}}$}.
            
            \item[\textit{\textmd{Subcase}} \normalfont E-RCDV:]\mbox{}\\
            \vspace{-3ex}
            \begin{quote}
                \texttt{$\mathtt{t_j}$ = $\mathtt{v_j}$}\\
                \texttt{$\mathtt{t'}$ = \{$\mathtt{l_i = v_i^{i \in 1..j}, l_k = t_k^{k \in j+1..n}}$\}}
            \end{quote}
            If $\mathtt{t} \longrightarrow \mathtt{t'}$ is derived using E-RCDV, then from the form of this rule we see that $\mathtt{t_j = v_j}$ and the resulting term $\mathtt{t'}$ is an evaluation on the remaining fields of the record. This means we are finished, since we know (by the assumptions of the T-RCD case) that \texttt{$\mathtt{t' :}$\{$\mathtt{l_i = v_i^{i \in 1..j-1}, l_j = t_j', l_k = t_k^{k \in j+1..n}}$\}}, which is what we need.
        \end{quote}

        \item[\textit{\textmd{Case}} \normalfont T-PROJ:]\mbox{}\\
        \vspace{-3ex}
        \begin{quote}
            \texttt{t = \#$\mathtt{l_j}$ $\mathtt{t_1}$}\\
            \texttt{T = $\mathtt{T_j}$}
        \end{quote}
        If the last rule in the derivation is T-PROJ, then we know from the form of this rule that $\mathtt{t}$ must have the form \texttt{\#$\mathtt{l_j}$ $\mathtt{t_1}$} and that $\mathtt{T}$ is \texttt{$\mathtt{T_j}$}. We must further have a subderivation with conclusion \texttt{$\mathtt{t_1 : }$\{$\mathtt{l_i : T_i^{i \in 1..n}}$\}}. Looking at the evaluation rules with projection on the left-hand side, there are two rules by which $\mathtt{t \longrightarrow t'}$ can be derived: E-PROJ-RCD and E-PROJ.
        \vspace{-3ex}
        \begin{quote}
            \item[\textit{\textmd{Subcase}} \normalfont E-PROJ-RCD:]\mbox{}\\
            \vspace{-3ex}
            \begin{quote}
                \texttt{$\mathtt{t_1}$ = \#$\mathtt{l_j}$ \{$\mathtt{l_i = v_i^{i \in 1..n}}$\}}\\
                \texttt{$\mathtt{t'}$ = $\mathtt{v_j}$}
            \end{quote}
            From the assumptions of the T-PROJ case, we have a subderivation of the original typing derivation with conclusion \texttt{$\mathtt{t_1 : }$\{$\mathtt{l_i : T_i^{i \in 1..n}}$\}}. Thus extracting the value corresponding to field $\mathtt{l_j}$ must yield $\mathtt{v_j}$ which is of type $\mathtt{T_j}$, and so $\mathtt{t' : T_j}$.
            
            \item[\textit{\textmd{Subcase}} \normalfont E-PROJ:]\mbox{}\\
            \vspace{-3ex}
            \begin{quote}
                \texttt{$\mathtt{t_1} \longrightarrow \mathtt{t_1'}$}\\
                \texttt{$\mathtt{t'}$ = \#$\mathtt{l_j}$ $\mathtt{t_1'}$}
            \end{quote}
            From the assumptions of the T-PROJ case, we have a subderivation of the original typing derivation with conclusion \texttt{$\mathtt{t_1 : }$\{$\mathtt{l_i : T_i^{i \in 1..n}}$\}}. We may apply the induction hypothesis, obtaining \texttt{$\mathtt{t_1' : }$\{$\mathtt{l_i : T_i^{i \in 1..n}}$\}}. Thus, by T-PROJ, \texttt{\#$\mathtt{l_j}$ $\mathtt{t_1' : T_j}$} and so \texttt{$\mathtt{t' : T_j}$}.
        \end{quote}

        \item[\textit{\textmd{Case}} \normalfont T-LET:]\mbox{}\\
        \vspace{-3ex}
        \begin{quote}
            \texttt{t = let} $\mathtt{(x_i}$ \texttt{=} $\mathtt{t_i)^{i \in 1..n}}$ \texttt{in} $\mathtt{t_0}$ \texttt{end}\\
            \texttt{T = $\mathtt{T_0}$}
        \end{quote}
        If the last rule in the derivation is T-LET, then we know from the form of this rule that $\mathtt{t}$ must have the form \texttt{let} $\mathtt{(x_i}$ \texttt{=} $\mathtt{t_i)^{i \in 1..n}}$ \texttt{in} $\mathtt{t_0}$ \texttt{end} and that $\mathtt{T}$ is \texttt{$\mathtt{T_0}$}. We must further have subderivation with conclusions \texttt{$\mathtt{t_1 : T_1}$} and \texttt{let} $\mathtt{(x_i}$ \texttt{=} $\mathtt{t_i)^{i \in 2..n}}$ \texttt{in} $\mathtt{t_0}$ \texttt{end} $\mathtt{: T_0}$. Looking at the evaluation rules with let-bindings on the left-hand side, there are three rules by which $\mathtt{t \longrightarrow t'}$ can be derived: E-LET, E-LETV1, and E-LETV2.
        \vspace{-3ex}
        \begin{quote}
            \item[\textit{\textmd{Subcase}} \normalfont E-LET:]\mbox{}\\
            \vspace{-3ex}
            \begin{quote}
                \texttt{$\mathtt{t_1 \longrightarrow t_1'}$}\\
                \texttt{$\mathtt{t'}$ = let} $\mathtt{x_1 = t_1'}$ $\mathtt{(x_i}$ \texttt{=} $\mathtt{t_i)^{i \in 1..n}}$ \texttt{in} $\mathtt{t_0}$ \texttt{end}
            \end{quote}
            From the assumptions of the T-LET case, we have a subderivation of the original typing derivation with conclusion \texttt{$\mathtt{t_1 : T_1}$}. We may apply the induction hypothesis, obtaining $\mathtt{t_1' : T_1}$. Thus, by T-LET, \texttt{let} $\mathtt{x_1 = t_1'}$ $\mathtt{(x_i}$ \texttt{=} $\mathtt{t_i)^{i \in 1..n}}$ \texttt{in} $\mathtt{t_0}$ \texttt{end} $\mathtt{: T_0}$ and so $\mathtt{t' : T_0}$.
            
            \item[\textit{\textmd{Subcase}} \normalfont E-LETV1:]\mbox{}\\
            \vspace{-3ex}
            \begin{quote}
                \texttt{$\mathtt{t_1}$ = $\mathtt{v_1}$}\\
                \texttt{$\mathtt{t'}$ = let} $\mathtt{x_2 = t_2}$ $\mathtt{(x_i}$ \texttt{=} $\mathtt{t_i)^{i \in 3..n}}$ \texttt{in} $\mathtt{t_0}$ \texttt{end}
            \end{quote}
            If $\mathtt{t_1}$ is $\mathtt{v_1}$ we simply augment the variable and type binding stores with the mapping $\mathtt{x_1 \mapsto v_1}$. The term being evaluated in the body of the let-binding does not alter and as such $\mathtt{t' : T_0}$.
            
            \item[\textit{\textmd{Subcase}} \normalfont E-LETV2:]\mbox{}\\
            \mbox{}\\
            Similar to E-LETV1.
        \end{quote}

        \item[\textit{\textmd{Case}} \normalfont T-DATATY:]\mbox{}\\
        \vspace{-3ex}
        \begin{quote}
            \texttt{t = $\mathtt{C_j}$ $\mathtt{t_0}$}\\
            \texttt{T = $\langle \mathtt{C_i : T_i} \rangle ^{i \in 1..n}$}
        \end{quote}
        If the last rule in the derivation is T-DATATY, then we know from the form of this rule that $\mathtt{t}$ must have the form \texttt{$\mathtt{C_j}$ $\mathtt{t_0}$} and that $\mathtt{T}$ is $\langle \mathtt{C_i : T_i} \rangle ^{i \in 1..n}$. We must further have a subderivation with conclusion \texttt{$\mathtt{t_0 : T_j}$}. Looking at the evaluation rules with datatype construction on the left-hand side, there is one rule by which $\mathtt{t_0 \longrightarrow t_0'}$ which is E-DATATY. By the induction hypothesis, $\mathtt{t_0' : T_j}$. Then, by T-DATATY, it follows that $\mathtt{C_j}$ $\mathtt{t_0' :} \langle \mathtt{C_i : T_i} \rangle ^{i \in 1..n}$. Therefore, $\mathtt{t' : } \langle \mathtt{C_i : T_i} \rangle ^{i \in 1..n}$.

        \item[\textit{\textmd{Case}} \normalfont T-CASE:]\mbox{}\\
        \vspace{-3ex}
        \begin{quote}
            \texttt{t = case $\mathtt{t_0}$ of $\mathtt{C_i}$ $\mathtt{x_i}$ => $\mathtt{t_i}^{i \in 1..n}$}\\
            \texttt{T = $T'$}
        \end{quote}
        If the last rule in the derivation is T-CASE, then we know from the form of this rule that $\mathtt{t}$ must have the form \texttt{case $\mathtt{t_0}$ of $\mathtt{C_i}$ $\mathtt{x_i}$ => $\mathtt{t_i}^{i \in 1..n}$} and that $\mathtt{T}$ is $\mathtt{T'}$ for some $\mathtt{T'}$. We must further have subderivations with conclusions $\mathtt{t_0 : } \langle \mathtt{C_i : T_i} \rangle ^{i \in 1..n}$ and, for each $i$, $\mathtt{t_i : T'}$. Looking at the evaluation rules with $\mathtt{case}$ expressions on the left-hand side, there are two rules by which $\mathtt{t_0 \longrightarrow t_0'}$ can be derived: E-CASE and E-CASE-TY.
        \vspace{-3ex}
        \begin{quote}
            \item[\textit{\textmd{Subcase}} \normalfont E-CASE:]\mbox{}\\
            \vspace{-3ex}
            \begin{quote}
                \texttt{$\mathtt{t_0 \longrightarrow t_0'}$}\\
                \texttt{$\mathtt{t'}$ = case $\mathtt{t_0'}$ of $\mathtt{C_i}$ $\mathtt{x_i}$ => $\mathtt{t_i}^{i \in 1..n}$}
            \end{quote}
            From the assumptions of the T-CASE case, we have a subderivation of the original typing derivation with conclusion \texttt{$\mathtt{t_0 :} \langle \mathtt{C_i : T_i} \rangle ^{i \in 1..n}$}. We may apply the induction hypothesis, obtaining $\mathtt{t_0' :} \langle \mathtt{C_i : T_i} \rangle ^{i \in 1..n}$. Thus, by T-CASE, \texttt{case $\mathtt{t_0'}$ of $\mathtt{C_i}$ $\mathtt{x_i}$ => $\mathtt{t_i}^{i \in 1..n}$ $\mathtt{: T_0}$}, and so $\mathtt{t' : T'}$.
            
            \item[\textit{\textmd{Subcase}} \normalfont E-CASE-TY:]\mbox{}\\
            \vspace{-3ex}
            \begin{quote}
                \texttt{$\mathtt{t_0}$ = $\mathtt{C_j}$ $\mathtt{v_j}$}\\
                \texttt{$\mathtt{t'}$ = [$\mathtt{x_j \mapsto v_j}$] $\mathtt{t_j}$}
            \end{quote}
            As we assumed earlier, $\mathtt{t_i : T'}$ for all $\mathtt{t_i}$ and so $\mathtt{t_j : T'}$. As a result, \texttt{[$\mathtt{x_j \mapsto v_j}$] $\mathtt{t_j : T'}$}.
            
        \end{quote}
    \end{description}
\end{proof}
\end{multicols}

\section{Implementation}
\subsection{\texttt{Types.ty} datatype}
\label{sec:types-ty}
\begin{tcolorbox}
\begin{Verbatim}
datatype ty = H_TY of h_ty
            | S_TY of s_ty
            | M_TY of m_ty
            | META of tyvar
            | TOP
            | BOTTOM

and h_ty    = BIT
            | ARRAY of {ty: h_ty, size: int ref}
            | TEMPORAL of {ty: h_ty, time: int ref}
            | H_RECORD of (tyvar * h_ty) list
            | H_DATATYPE of (tyvar * h_ty option) list * unit ref
            | H_POLY of tyvar list * h_ty
            | H_META of tyvar
            | H_TOP
            | H_BOTTOM

and s_ty    = INT | REAL | STRING
            | ARROW of (s_ty * s_ty)
            | LIST of s_ty
            | SW of h_ty
            | S_RECORD of (tyvar * s_ty) list
            | REF of s_ty
            | S_DATATYPE of (tyvar * s_ty option) list * unit ref
            | S_MU of tyvar list * s_ty
            | S_POLY of tyvar list * s_ty
            | S_META of tyvar
            | S_TOP
            | S_BOTTOM

and m_ty    = MODULE of h_ty * h_ty
            | PARAMETERIZED_MODULE of s_ty * h_ty * h_ty
            | M_POLY of tyvar list * m_ty
            | M_BOTTOM
\end{Verbatim}
\end{tcolorbox}

\subsection{\texttt{Absyn.ty} datatype}
\label{sec:absyn-ty}
\begin{tcolorbox}
\begin{Verbatim}
datatype ty = NameTy of symbol * pos
            | ParameterizedTy of symbol * (ty list) * pos
            | TyVar of symbol * pos
            | SWRecordTy of field list * pos
            | HWRecordTy of field list * pos
            | ArrayTy of ty * int * pos
            | ListTy of ty * pos
            | TemporalTy of ty * int * pos
            | RefTy of ty * pos
            | SWTy of ty * pos
            | FunTy of ty * ty * pos
            | PlaceholderTy of unit ref
            | ExplicitTy of Types.ty
\end{Verbatim}
\end{tcolorbox}

\subsection{\texttt{Value.value} datatype}
\label{sec:value-value}
\begin{tcolorbox}
\begin{Verbatim}
datatype value
    = IntVal of int
    | StringVal of string
    | RealVal of real
    | ListVal of value list
    | RefVal of value ref
    | SWVal of value
    | RecordVal of (symbol * value) list
    | FunVal of (value -> value) ref
    | DatatypeVal of (symbol * unit ref * value)
    | NamedVal of symbol * Types.ty
    | BitVal of GeminiBit.bit
    | ArrayVal of value vector
    | HWRecordVal of (symbol * value) list
    | BinOpVal of {left: value, oper: binop, right: value}
    | UnOpVal of {value: value, oper: unop}
    | ArrayAccVal of {arr: value, index: int}
    | DFFVal of int
    | PreParamModuleVal of (value -> value -> value) * value
    | ModuleVal of (value -> value) * value
\end{Verbatim}
\end{tcolorbox}

\end{document}